\newcommand{\ifndef}[2]{\ifthenelse{\isundefined{#1}}{#2}{}}
\newtheoremstyle{mydef}{10pt}{1pt}{\slshape}{0pt}{\bfseries}{.}{3pt}{}
\theoremstyle{mydef}
\newtheorem{theorem}{Theorem}[section]
\newtheorem{lemma}[theorem]{Lemma}
\newtheorem{proposition}[theorem]{Proposition}
\newtheorem{cor}[theorem]{Corollary}
\theoremstyle{definition}
\newtheorem{definition}[theorem]{Definition}
\newtheorem{question}[theorem]{Question}
\newtheorem{example}[theorem]{Example}
\newtheorem{notation}[theorem]{Notation}
\newtheorem{remark}[theorem]{Remark}
\newtheorem{fact}[theorem]{Fact}
\ifndef{\claim}{\newtheorem{claim}[theorem]{Claim}}
\DeclarePairedDelimiter{\bracket}{[}{]}
\DeclarePairedDelimiter{\set}{\{}{\}}
\DeclarePairedDelimiter{\abs}{\lvert}{\rvert}
\DeclarePairedDelimiter{\norm}{\lVert}{\rVert}
\DeclarePairedDelimiter{\floor}{\lfloor}{\rfloor}
\DeclarePairedDelimiter{\ceil}{\lceil}{\rceil}
\newcommand{\NN}{\mathbb{N}}
\newcommand{\RR}{\mathbb{R}}
\newcommand{\CC}{\mathbb{C}}
\newcommand{\e}{\emph}
\DeclareMathOperator*{\Ex}{ \bf E}
\date{}
\newcommand{\Eop}{\mathop{\mathbf{E}}\nolimits}
\newcommand{\E}{\@ifstar\E@star\E@nostar}
\newcommand{\E@nostar}{\@ifnextchar[\E@size\E@plain}
\newcommand{\E@star}{\def\@Esize{*}\@ifnextchar_\E@sub\E@nosub}
\def\E@size[#1]{\def\@Esize{[#1]}\@ifnextchar_\E@sub\E@nosub}
\newcommand{\E@plain}{\def\@Esize{}\@ifnextchar_\E@sub\E@nosub}
\def\E@sub_#1{\Eop_{#1}\expandafter\bracket\@Esize}
\newcommand{\E@nosub}{\Eop\expandafter\bracket\@Esize}
\let\newmat\newcommand
\let\dt\cdot
\newcommand{\dc}{,\dots,}
\def\01{\set{0,1}}
\newcommand{\U}{\mathcal{U}}
\newcommand{\RII}{\mathit{SMP}}
\DeclareMathOperator{\PR}{Pr}
\newcommand{\deq}{\stackrel{\mathrm{def}}{=}}
\let\tm\cdot
\newcommand{\asO}[1]{O(#1)}
\newcommand{\asOm}[1]{\Omega(#1)}
\newcommand{\Cl}[1]{\mathcal{#1}}
\newcommand{\F}{\mathbb{F}}
\newcommand{\IP}{\mathit{IP}}
\newmat{\poly}{\mathrm{poly}}
\newcommand{\GIP}{\mathit{GAPIP}}
\newcommand{\RAIP}{\mathit{RIP}}
\newmat{\RA}{\mathit{RA}}
\newmat{\RAd}{\mathit{RA}_{n,\delta}}
\newmat{\Eq}{\mathit{EQ}}
\newmat{\Ra}{R_A}
\newmat{\Rb}{R_B}
\newmat{\bRa}{\bar R_A}
\newmat{\bRb}{\bar R_B}
\newmat{\Ma}{M_A}
\newmat{\Mb}{M_B}
\newmat{\ra}{r_A}
\newmat{\rb}{r_B}
\newmat{\ma}{m_A}
\newmat{\mb}{m_B}
\newmat{\ro}{\rho_\vee}
\DeclareMathOperator{\col}{col}
\DeclareMathOperator{\agr}{agr}
\DeclareMathOperator{\Cor}{Cor}
\DeclareMathOperator{\Var}{Var}
\titlespacing*{\section}{0pt}{7pt}{8pt}
\titlespacing*{\subsection}{0pt}{3pt}{3pt}
\titlespacing*{\paragraph}{0pt}{5pt}{3pt}
\titleformat{\section}[block]
{\Large \bfseries}
{\thesection}{12pt}{}{}
\titleformat{\subsection}[block]
{\large \bfseries}
{\thesubsection.}{3pt}{}[]
\titleformat{\paragraph}[runin]
{\large \bfseries}
{}{}{}[]
\begin{document}

\title{  \bfseries \LARGE On the Role of Shared Randomness in Simultaneous Communication\footnote{A preliminary version of this article appeared in the proceedings of ICALP 2014, pages  150-162.}}

\newcommand{\instMB}{Massachusetts Institute of Technology, Cambridge, MA, U.S.A. Partially supported by NSF STC Award 0939370.}

\newcommand{\instDG}{Institute of Mathematics, Czech Academy of Sciences, Praha, Czech Republic.}

\newcommand{\instNEC}{NEC Laboratories America, Princeton, NJ, U.S.A.}

\newcommand{\thanksDG}{Partially funded by the grant P202/12/G061 of GA \v CR and by RVO:\ 67985840.}

\newcommand{\thaNEC}{Part of this work was done while at the \instNEC}

\author{
 Mohammad Bavarian\thanks{\instMB\ \thaNEC}
 \and  Dmitry Gavinsky \thanks{\instDG\ \thanksDG\ \thaNEC}  
 \and  Tsuyoshi Ito \thanks{\thaNEC}
}

\maketitle

\begin{abstract}
Two parties wish to carry out certain distributed computational tasks, and they are given access to a \e{source of correlated random bits}.
It allows the parties to act in a correlated manner, which can be quite useful.
But what happens if the shared randomness is not perfect?

In this work, we initiate the study of the power of different sources of shared randomness in communication complexity.
This is done in the setting of \e{simultaneous message passing (SMP) model of communication complexity}, which is one of the most suitable models for studying the resource of shared randomness. 
Toward characterising the power of various sources of shared randomness, we introduce a measure for the ``quality'' of a source -- we call it \emph{collision complexity}.
Our results show that the collision complexity tightly characterises the power of a (shared) randomness resource in the SMP model.

Of independent interest is our demonstration that even the ``weakest'' sources of shared randomness can in some cases increase the power of SMP substantially:\ the \emph{equality} function can be solved very efficiently with virtually any nontrivial shared randomness.
%  --- whereas without shared randomness the complexity is known to be $\Omega(\sqrt n)$.

\end{abstract}

\section{Introduction}

One of the central themes of complexity theory is the study various resources of computation and their respective power.
In this direction many different models of computation have been defined, and various resources have been considered -- e.g., time, space, randomness, access to powerful provers, etc.
Among the strongest models where the researchers have been able to prove strong lower bounds
% (which usually accompanies, and in fact in some sense requires, an accurate understanding of the nature of computation in the model)
is the model of \e{communication complexity}.

An important resource in communication complexity and, more generally, in virtually all distributed computational settings, is the access to a source of \emph{correlated random bits} or \e{shared randomness (SR)}:
It allows the parties to act in a correlated way in order to perform better and more efficiently in their tasks.

The primary goal of this work is a quantitative characterisation of the utility of different sources of shared randomness to parties interested in solving communication problems. To this end, let us define sources of shared randomness more formally.

\begin{definition}[Source of SR for two parties]\label{def:SR_source}
A source of shared randomness (also referred to as a bipartite distribution) is specified by a distribution $\rho$ on a domain $U\times V$, where  $U$ and $V$ correspond to the parts of shared randomness visible to Alice and to Bob, respectively.
To solve their communication task, the parties are allowed to take as many independent samples as they need from the source $\rho$. 
\end{definition}

%\begin{figure}
%\centerline{
%\includegraphics[width=0.7\textwidth]{Buttons}
%}
%\caption{\footnotesize A source of shared randomness acts as a noisy ``broadcasting channel'':\ every time it is invoked, an independent sample of $\rho$ is distributed between the parties.}
%\end{figure}

To understand the motivation behind the above definition better, it would be useful to consider a few examples of sources of shared randomness. Let  $U=V=\{0,1\}$.
The following two sources on $U\times V$ are the well-known settings of \emph{private} and \emph{perfect} forms of shared randomness, respectively:
\[ \rho_{\it priv}:=\mathcal U_{\{00,01,10,11\}},\qquad  \rho_{\it perf}:=\mathcal U_{\{00,11\}}, \]
where $\mathcal U_{\{\ldots\}}$ is defined by:
\begin{notation}
For a finite set $Q$, we denote the uniform distribution over it by $\mathcal U_Q$ .
\end{notation}
The above two examples ($\rho_{\it perf}$ and $\rho_{\it priv}$) are valuable as they indicate that the more familiar settings of communication complexity are given as special cases of our model by setting $\rho=\rho_{\it priv}$ or $\rho=\rho_{\it perf}$.
However, more interesting questions in this work are related to the intermediate cases between the ``extremes'' of $\rho_{\it perf}$ and $\rho_{\it priv}$.

% In the next part of the introduction, we shall give at least two more examples of sources of shared randomness---both ``in between'' the extremes of public and private randomness---to further clarify the core of the problems considered in this work. 

\subsection{Shared randomness in SMP communication:\ definitions and more examples}

Communication complexity studies how much communication is needed to compute a given function
on distributed inputs.
The model we work with is the two-party \emph{simultaneous message passing (SMP)} model, in which the \e{players}, \e{Alice} and \e{Bob}, each send a message to the \e{referee}~\cite{NS96_Pu,BabKim97CCC}.
The SMP model augmented with a shared randomness source  $\rho$ is defined as follows: Let $\rho$ a source of shared randomness on~$U\times V$ and $(u_1,v_1), (u_2,v_2), \ldots, (u_m,v_m)$ be independent samples from $\rho$. In the SMP model with shared distribution~$\rho$, Alice receives input~$x$ and her part of shared randomness~$R_a=(u_1,\dots,u_\ell)\in U^\ell$, and Bob receives $y$ and his part of shared randomness $R_b=(v_1,\dots,v_\ell)\in V^\ell$. Alice and Bob use their parts of input and shared randomness to compute their messages, $M_a(x, R_a)$ and $M_b(y, R_b)$, to the referee who upon receiving these messages outputs an answer.
A \emph{communication protocol} determines the value of $m$ and the actions of all the participants in the protocol.

A communication protocol is said to solve a communication problem with error probability~$\delta$ if it guarantees correct answer with probability at least $1-\delta$ for every allowed input.
The communication cost of a protocol is the maximum possible total number of bits sent by the players to the referee. Given a communication problem $f$, we denote the minimum cost of a protocol that solves~$f$ (with error probability $<\frac{1}{3}$) in the SMP model with SR source $\rho$ by $\textit{SMP}_\rho(f)$.

%\begin{figure}
%\centerline{
%\includegraphics[width=0.7\textwidth]{SMP2}
%}
%\caption{\footnotesize SMP model augmented with shared randomness resource $\rho$. }
%\end{figure}

\paragraph{Our goal.} In this work we consider the case when $\rho$ is not a perfect source of shared randomness and investigate to what extent Alice and Bob can still use $\rho$ to reduce the communication cost (as compared to the case of private randomness). Moreover, we are interested in understanding \e{what properties of the source $\rho$ determine its utility as a resource in the SMP model}.
Consider the following slightly more complicated distributions.
\begin{example}\label{example:disj} Let $U=V=\{0,1\}$.
\begin{itemize} 
\item \textbf{``Shared disjointness''}: The shared distribution is $\rho_{\it disj}:=\mathcal U_{\{00,01,10\}}$.
\item \textbf{Symmetric Noisy Bits}: The shared distribution is $\rho_{\it BSC_p}= (1-2p) \rho_{\it perf}+ 2p \rho_{\it priv}$, i.e. $\rho_{\it BSC_p}(0,0)= \rho_{\it BSC_p} (1,1)=\frac{1-p}{2}$  and  $\rho_{\it BSC_p}(0,1)= \rho_{\it BSC_p}(1,0)=\frac{p}{2}$.
\end{itemize}
\end{example}

The subtlety of the problem of determining the power of a general source of shared randomness $\rho$ becomes more clear when one notices that even in the simple case of $\rho_{\it disj}$ vs.~$\rho_{\it BSC_p}$ it is a priori unclear which source is stronger than the other (say, for $p=0.8$). Even more, it is not obvious that it should possible to make a useful comparison between two different sources of shared randomness $\rho_1$ and $\rho_2$ which, like the above case of $\rho_{\it disj}$ vs.~$\rho_{\it BSC_p}$, may have quite different forms.\footnote
{Compare this to the case $\rho_{\it BSC_p}$ vs.~$\rho_{\it BSC_q}$ where evidently when $|p|<|q|$ the first source is  more powerful than the second as the players can always use private randomness to locally simulate $\rho_{BSC_p}$ using $\rho_{BSC_{q}}$.}
Maybe surprisingly, it turns out that comparison between sources of shared randomness of very different form is possible (see Theorem \ref{thm:main}).
To achieve this, we introduce the notion of \emph{collision complexity} of a source (see Section \ref{section:collision-complexity}), which associates to a source $\rho$ a real valued function $\col_\rho:\NN \rightarrow \RR^+$, whose rate of growth captures the utility of $\rho$ as a source of shared randomness. 

Before we continue, let us exclude the following case:
\begin{definition}\label{def:degenerate_source}
	A source of shared randomness is \emph{degenerate} if it is deterministic from the point of view of at least one of the players.  
\end{definition}
In other words, a source of shared randomness is non-degenerate if the players can use it to \e{locally emulate private randomness ($\rho_{priv}$)}.
Degenerate sources are not quite that interesting for the purposes of this work as they are effectively no more powerful than the deterministic ones, and therefore throughout the paper all sources of shared randomness are assumed to be \emph{non-degenerate}.

%%%%%%%%%%%%%%
%%%%%%%%%%%%%%
%%%%%%%%%%%%%%
\subsection{Our results}\label{subsection:results}
The first result is an example of an SMP protocol showing that even the weakest sources of shared randomness can be sometimes quite powerful. More specifically, we have the following theorem for the equality function~$\Eq_n\colon\{0,1\}^n\times\{0,1\}^n\to\{0,1\}$, defined to be~$1$ if~$x=y$, and~$0$ otherwise.
\begin{theorem}
  \label{theorem:eq}
  Let~$\rho$ be a probability distribution on~$U\times V$.
  If~$\rho$ is not a product distribution,
  then it holds that~$\RII_\rho(\Eq_n)= O(1)$,
  where the implied constant in $O(\cdot)$ depends only on~$\rho$ (and not on~$n$).
\end{theorem}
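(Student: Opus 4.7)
The plan is to construct an SMP protocol for $\Eq_n$ with $O(1)$ communication by exploiting the correlation in $\rho$ to produce a single ``correlated bit'' whose agreement probability differs measurably between $x=y$ and $x\neq y$.

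First, I would reduce to a binary non-product source. Since $\rho$ is non-product, there exist $u_0\in U$ and $v_0\in V$ with $\Pr[u=u_0,v=v_0]\neq \Pr[u=u_0]\Pr[v=v_0]$. Applying the indicator pair $(u,v)\mapsto(\mathbf{1}[u=u_0],\mathbf{1}[v=v_0])$ to each sample of $\rho$ produces samples from a non-product distribution on $\{0,1\}^2$, so WLOG $U=V=\{0,1\}$. Let
\[ a=\E{(-1)^u},\qquad b=\E{(-1)^v},\qquad c=\E{(-1)^{u+v}}; \]
the non-product condition becomes $c\neq ab$.

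Next, I design a one-bit-each sub-protocol with a constant distinguishing gap. Fix any odd positive integer $W$, set $N=W\cdot 2^n$, identify inputs with $\{0,\ldots,2^n-1\}$, and let $B_x=\{xW+1,\ldots,(x+1)W\}\subseteq [N]$ be the $x$-th ``block.'' Given $N$ samples $(u_j,v_j)_{j=1}^N$ from $\rho$, Alice sends $b_A=\bigoplus_{j\in B_x}u_j$ and Bob sends $b_B=\bigoplus_{j\in B_y}v_j$, and the referee outputs $1$ iff $b_A=b_B$. A direct Fourier computation using independence of samples gives
\[ \Pr[b_A=b_B]=\tfrac12\bigl(1+\textstyle\prod_{j=1}^N \beta_j\bigr), \]
where $\beta_j\in\{1,a,b,c\}$ is chosen by which of the four cases $(\mathbf{1}[j\in B_x],\mathbf{1}[j\in B_y])\in\{(0,0),(1,0),(0,1),(1,1)\}$ holds at position $j$. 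For $x=y$ the blocks coincide, so $\prod_j \beta_j=c^W$; for $x\neq y$ the blocks are disjoint by construction, so $\prod_j \beta_j=a^Wb^W=(ab)^W$.

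Finally, amplify by running $K$ independent copies in parallel with fresh samples; Alice and Bob each send their $K$ per-trial output bits, and the referee thresholds the empirical match rate halfway between $(1+c^W)/2$ and $(1+(ab)^W)/2$. A Chernoff bound gives error below $1/3$ once $K=O(1/\gamma_\rho^2)$, where $\gamma_\rho:=\tfrac12|c^W-(ab)^W|$. The main obstacle is to ensure that $\gamma_\rho>0$ for every non-product $\rho$: this is where the oddness of $W$ is essential, since for even $W$ one could have $c=-ab$ with $c^W=(ab)^W$ and vanishing gap; for odd $W$, however, $t\mapsto t^W$ is strictly monotone on $\mathbb{R}$, so $c\neq ab$ forces $c^W\neq (ab)^W$ and hence a positive $\rho$-dependent constant gap. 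The total communication is $2K=O(1)$, with the constant depending only on $\rho$.
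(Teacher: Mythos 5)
Your proof is correct and is essentially the paper's argument: in both cases the players index $2^n$ i.i.d.\ samples of $\rho$ by input strings, each sends one bit derived from their own indexed sample, and the referee distinguishes the identical-index case from the disjoint-index case by the difference in agreement probability, amplified by constant-many repetitions. Two small remarks: the odd exponent $W$ is superfluous---taking $W=1$ already yields gap $\tfrac12\lvert c-ab\rvert>0$, so the blocks collapse to singletons and your protocol becomes exactly the paper's protocol restricted to the binary source; and the paper avoids the reduction to a binary source altogether by choosing indicator subsets $\Lambda_a\subseteq U$, $\Lambda_b\subseteq V$ witnessing non-productness directly, which is slightly more economical but produces the same two-probability comparison $\gamma\ne\gamma'$.
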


The above result may look somewhat surprising, as we have $\textit{SMP}_{priv}(\Eq_n)=\Theta(\sqrt{n})$ \cite{BabKim97CCC,NS96_Pu}, which means that even the presence of slightest correlation in the shared distribution $\rho$ increases the power of SMP significantly in the asymptotic sense, decreasing the communication requirements from $\Omega(\sqrt{n})$ to $O(1)$.

The above theorem gives rise the following question:
\begin{question}\label{question:odd}
Is it possible that what happened for $\Eq_n$ is generic? More precisely, is the effect of having an imperfect source of shared randomness $\rho$ instead of $\rho_{\it perf}$ limited to at most a multiplicative constant blow-up in the communication, i.e. 
\[ \textit{SMP}_\rho(f)=O_{\rho}(\hspace{1pt}\textit{SMP}_{\it perf} (f)\hspace{1pt})\]
 for any fixed non-product $\rho$?
\end{question}
Note that since  the constant in $O(\cdot)$ in the above may depend on $\rho$ (but not on the instance size $n$), the above question is  not as naive as it may at first appear.
For instance, if one trusts the (incorrect) intuition that ``usefulness'' of a source of shared randomness $(A,B)\sim\rho$ is determined by the mutual information between the two sides, $I(A;B)$, then it may appear reasonable to believe in the affirmative answer to Question~\ref{question:odd}.
% \footnote{More formally, if the quality of a source shared randomness $\rho=(X,Y)$ was controlled by the mutual information,  we would have expected for $r=\lceil \frac{1}{I(X;Y)}\rceil$, $\rho^{\otimes r}$ would be as useful as a source of shared randomness as $\rho_{\it perf}$; this however is very far from the truth as we demonstrate.} 

Nevertheless, the answer to Question \ref{question:odd} is \emph{negative}, and the actual trade-off between $\mathit{SMP}_\rho(f)$ and $\mathit{SMP}_{\it perf}(f)$ is more subtle, and, as shown in Theorem \ref{thm:main}, is captured by the collision complexity of $\rho$. Hence, besides the technical significance of Theorem \ref{thm:main}, we believe that the fact that collision complexity (as opposed to entropic quantities like mutual information, etc.) is \e{the appropriate measure for quantifying the power of a shared random source} is perhaps a noteworthy conceptual message of the result. 

Before stating our main theorem, let us give an (informal) definition of the collision complexity and observe some of its basic properties.

\newtheorem*{undefi}{Definition}\label{def:col_intro}

\begin{undefi}[Collision complexity, informal]
For $n\in\NN$, we say that a two-player protocol \e{produces a collision} if each player output a subsets of $[n]$, respectively denoted by $A$ and $B$, such that $\Pr[i\in A\cap B]\ge1/n$ for each $i\in[n]$.\footnote{An alternate point of view on this criteria is that the parties want to ensure $\Ex[ |A\cap B|]\geq 1$, and furthermore, they want the intersecting sets $A\cap B$ to be distributed more or less evenly across $[n]$. Of course, without the latter condition achieving $\Ex[ |A\cap B|]\geq 1$ would have been easy, since the players could otherwise have always outputted a fixed $i\in [n]$; however,  the evenness condition along with the desire to minimise $|A|+|B|$ makes this a non-trivial task.} The complexity of a protocol is the maximum possible $\frac{|A|+|B|}{2}$.
The collision complexity of a source $\rho$, denoted by $\col_\rho(n)$, is the minimum complexity of a protocol that uses (arbitrary number of) samples from $\rho$ and produces a collision.
\end{undefi}
Conceptually, collision complexity quantifies the cost of simulating perfect shared randomness by two parties who only have $\rho$, and therefore it can be viewed as \e{a measure of quality} of $\rho$. For example, it is clear from the above definition that the players sharing $\rho_{\it perf}$ can always use their perfect source of shared randomness to agree on an element $i\in [n]$ and both output $A=B=\{i\}$ which shows that $\rho_{\it perf}$ has collision complexity $1$, which is optimal. The following basic properties of collision complexity also follow easily from the definition.

\begin{fact}[Properties of collision complexity] \label{fact:col} The collision complexity $\col_\rho:\NN\rightarrow \RR^+$
% is a measure of any bipartite distribution $\rho$ on $U\times V$ that
satisfies the following:

\begin{itemize}
\item[i.] \textbf{Basic General Bounds}: $1 \leq \col_\rho(n)\leq \sqrt{n}$. 
\item[ii.] \textbf{Private and Public Randomness}: $\col_{\rho_{\it perf}}=1$ and   $\col_{\rho_{priv}}(n)=\sqrt{n}$.
\item[iii.] \textbf{Tensor Product Stability}: $\col_{\rho^{\otimes t}}(n)=\col_\rho(n)$.
\item[iv.] \textbf{Decrease of Collision Complexity after Tensoring}: $\col_{\rho\otimes \sigma} (n) \leq \min \{\col_\rho(n), \col_\sigma(n)\}$. 
\end{itemize}
\end{fact}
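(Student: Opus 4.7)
The plan is to verify the four properties directly from the definition, grouped by the amount of work they require.

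For (i), the lower bound is a linearity-of-expectation argument: summing the collision constraint gives $\Ex[|A\cap B|]=\sum_{i\in[n]}\Pr[i\in A\cap B]\geq 1$, and since $|A\cap B|\leq \min(|A|,|B|)\leq (|A|+|B|)/2$, the worst-case value of $(|A|+|B|)/2$ must be at least $1$. For the upper bound, non-degeneracy of $\rho$ lets the players locally emulate private randomness, and then Alice and Bob independently pick uniformly random subsets $A,B\subseteq[n]$ of size $\lceil\sqrt{n}\,\rceil$; by independence $\Pr[i\in A\cap B]=(\lceil\sqrt{n}\,\rceil/n)^2\geq 1/n$ for each $i$, while $(|A|+|B|)/2=\lceil\sqrt{n}\,\rceil$.

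For (ii), $\col_{\rho_{\it perf}}(n)\leq 1$ is immediate: using $\lceil\log n\rceil$ samples of $\rho_{\it perf}$ the players agree on a common uniformly random index $i\in[n]$ and output $A=B=\{i\}$. The substantive step is the matching lower bound $\col_{\rho_{\it priv}}(n)\geq\sqrt{n}$. Under private randomness $A$ and $B$ are independent, so the collision condition becomes $\Pr[i\in A]\cdot\Pr[i\in B]\geq 1/n$ for every $i$; AM-GM gives
\[ \tfrac{1}{2}\bigl(\Pr[i\in A]+\Pr[i\in B]\bigr)\geq\sqrt{\Pr[i\in A]\Pr[i\in B]}\geq 1/\sqrt{n}, \]
and summing over $i$ yields $\tfrac{1}{2}(\Ex[|A|]+\Ex[|B|])\geq\sqrt{n}$, so the worst-case $(|A|+|B|)/2$ is at least $\sqrt{n}$ as well.

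Properties (iii) and (iv) both follow from the clause ``arbitrary number of samples'' in the definition. For (iii), a protocol using $k$ samples of $\rho^{\otimes t}$ is literally the same object as one using $kt$ samples of $\rho$, so the two families of protocols coincide and hence so do the infima. For (iv), any protocol using only $\rho$-samples (respectively only $\sigma$-samples) can be simulated by a $\rho\otimes\sigma$-protocol that projects each draw onto its $\rho$-component (resp.\ $\sigma$-component) and discards the rest, giving $\col_{\rho\otimes\sigma}(n)\leq\col_\rho(n)$ and $\col_{\rho\otimes\sigma}(n)\leq\col_\sigma(n)$. The only step that is not essentially a one-line reduction from the definition is the AM-GM estimate in (ii); everything else is mechanical.
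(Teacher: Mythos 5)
Your proof is correct and fills in what the paper leaves as an exercise: the paper gives no proof of this fact beyond remarking that the $O(\sqrt n)$ upper bound follows ``via the birthday paradox'' and that the remaining properties ``easily follow from the definition.'' Your linearity-of-expectation and AM--GM arguments for (i)--(ii), the sample-repacking observation for (iii), and the projection argument for (iv) are exactly the natural way to make this precise.

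One small integrality caveat, the fault for which lies with the fact's informal statement rather than with your argument. With a fixed finite number $\ell$ of perfectly shared bits, the players cannot induce an \emph{exactly} uniform index on $[n]$ unless $n$ is a power of two: every achievable $\Pr[A=\{i\}]$ is a dyadic rational, and the constraints $\Pr[A=\{i\}]\ge 1/n$ for all $i$ together with $\sum_i\Pr[A=\{i\}]\le 1$ force $\Pr[A=\{i\}]=1/n$ exactly. So your singleton protocol for $\rho_{\mathit{perf}}$ only works when $n$ is a power of two; for other $n$ the exact value is $\col_{\rho_{\mathit{perf}}}(n)=2$. Likewise your upper bound of $\lceil\sqrt n\rceil$ does not literally match the stated $\sqrt n$. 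The paper itself switches to $O(\sqrt n)$ in Section~5, and these rounding discrepancies affect nothing downstream, but it is worth noting that items (i) and (ii) are only $\Theta(\cdot)$-accurate as written.
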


Combining the property (i) and (ii) from the above, we see that the minimal value of collision complexity is achieved by  $\rho_{\it perf}$ and its maximum value is achieved by $\rho_{\it priv}$. The \textbf{\itshape main intuition} here is that the higher the ``quality'' of the correlation provided by $\rho$ is, the slower we expect the growth of $\col_\rho(n)$ to be.

We can now state our main theorem which uses collision complexity to compare $\mathit{SMP}_\rho$ with  $\mathit{SMP}_{\it perf}$ (where $\mathit{SMP}_{\it perf}$ is a shorthand for $\mathit{SMP}_{\rho_{\it perf}}$).

\begin{theorem}\label{thm:main} For any communication problem $f$ and non-degenerate source  of shared randomness $\rho$, we have 
\begin{equation}\label{eqn:main_1}
 \mathit{SMP}_\rho(f)= \widetilde{O}(\mathit{SMP}_{\it perf}(f)\, \cdot\, \col_\rho(n)\,).
\end{equation}
Moreover, there exists a family of partial function $\{g_n\}$ such that 
\begin{equation}\label{eqn:main_2}
 \mathit{SMP}_\rho(g_n)= \widetilde{\Theta}(\col_\rho(n)\,).
 \end{equation}
 \end{theorem}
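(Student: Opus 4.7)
Proof plan for Theorem \ref{thm:main}. The upper bound (\ref{eqn:main_1}) is proved by simulating a perfect-shared-randomness protocol using the collision protocol for $\rho$. First I would apply a Newman-style reduction to convert a cost-$c$ protocol for $f$ in the perfect-SR model (with $c = \mathit{SMP}_{\it perf}(f)$) into one that draws its randomness from a set $[N]$ of size $N = \poly(n)$, amplified so that for every input, all but a $1/\poly(n)$ fraction of strings $r \in [N]$ yield the correct answer. Next I would run a collision protocol for $\rho$ with universe $[N]$, possibly amplified by taking the union of independently generated sets, to obtain $R_A,R_B \subseteq [N]$ of size $\widetilde O(\col_\rho(n))$ with $R_A \cap R_B \ne \emptyset$ with high probability. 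In the new SMP protocol, Alice and Bob each simulate the perfect-SR protocol on every $r$ in their respective sets and send all the resulting $(r,\textit{message})$-pairs to the referee, who picks some $r^* \in R_A \cap R_B$ and runs the original referee on $(M_A(x,r^*),M_B(y,r^*))$. The total communication is $O(|R_A|+|R_B|)\cdot(c+\log N) = \widetilde O(c\cdot \col_\rho(n))$, as required.

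The delicate step in this reduction is arguing that $r^*$ is a \emph{good} string for the specific input $(x,y)$. This is precisely where the ``evenness'' condition in the definition of collision complexity is essential: because the distribution of $A\cap B$ is spread out (no index has intersection-probability much above $1/N$), a union bound over the small set of ``bad'' $r$'s---which we made small by the prior amplification---shows that with high probability no bad string appears in $R_A \cap R_B$, so the referee's choice is automatically correct. Some additional bookkeeping is needed to relate $\col_\rho(\poly(n))$ back to $\col_\rho(n)$, which can be carried out using the tensor-product and monotonicity-type properties collected in Fact~\ref{fact:col}, and to verify that the amplification of the collision protocol preserves the stated bound up to polylog factors.

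For the lower bound (\ref{eqn:main_2}), the plan is to construct a family of partial functions $\{g_n\}$ whose structure \emph{intrinsically} requires collision finding on $[n]$. Concretely, I would design $g_n$ together with a hard input distribution so that any correct $\rho$-SMP protocol for $g_n$ of cost $c$ can be post-processed---by decoding the output messages into subsets $R_A(x),R_B(y) \subseteq [n]$---into a collision protocol for $\rho$ with parameter $n$ and complexity $\widetilde O(c)$; by the definition of collision complexity this forces $c \ge \widetilde\Omega(\col_\rho(n))$. A matching upper bound of $\widetilde O(\col_\rho(n))$ would then follow from~(\ref{eqn:main_1}), provided $g_n$ is designed so that $\mathit{SMP}_{\it perf}(g_n) = \log^{O(1)}n$, which is reasonable since perfect shared randomness allows the parties to agree on a uniform element of $[n]$ essentially for free. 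The main obstacle is choosing $g_n$ so that the extracted $(R_A,R_B)$ provably satisfies the evenness requirement $\Pr[i \in R_A\cap R_B] \ge 1/n$ for \emph{every} $i$; weaker hardness (e.g.\ only that $\E[|R_A\cap R_B|] \ge 1$) would leave room for the protocol to exploit features of $\rho$ that collision complexity does not see, so the reduction must be engineered to force the evenness condition as well.
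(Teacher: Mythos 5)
The upper-bound half of your plan runs close to the paper's argument (Newman-style randomness reduction to $\log n + O(1)$ bits, then simulating every seed in a collision set), but it has a real gap at the point you flag as ``the delicate step.'' You assert that the evenness condition in the definition of collision complexity implies that no index has intersection probability much above $1/N$, and you use this to union-bound the probability of hitting a bad seed. That is a misreading: Definition~\ref{d_coll} only imposes a \emph{lower} bound $\Pr[i\in A\cap B]\ge p$ for every $i$; there is no upper bound, and a legitimate collision protocol could place constant probability on a single index (e.g.\ always include a fixed $i^*$). So a union bound over bad seeds does not follow from the definition. The paper closes exactly this hole with Claim~\ref{claim:uniform-collision-protocol}, which constructs a symmetrized collision protocol of complexity $O(\col_\rho(n))$ satisfying (a) $\Pr[A\cap B=\varnothing]\le s$ and (b) the distribution of a uniformly random element of $A\cap B$, conditioned on nonemptiness, is exactly uniform over $[n]$. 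Property (b) is what lets the referee pick a uniformly distributed seed, so that the perfect-SR protocol's average-case error guarantee applies directly; no polynomial error amplification or union bound over bad seeds is needed. If you instead want your union-bound route, you would still have to prove something like the symmetrization claim to get an upper bound on $\Pr[r\in A\cap B]$, and at that point the paper's argument is simpler.

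The lower-bound half of your proposal is a wish list rather than a proof. You say one should design $g_n$ so that any cheap protocol can be ``post-processed'' into a collision protocol satisfying the per-index lower bound, and you correctly identify that forcing evenness is the obstacle, but you give no function, no extraction map, and no argument. The paper's actual proof is by far the heavier part of the theorem: it fixes $g_n = \GIP_{n,8\log n}$, introduces the auxiliary $\RAd$ and $\RAIP_n$ problems, proves a collision-extraction lemma (Lemma~\ref{l_RAd}) by an entropy argument in a strengthened \emph{pseudo-SMP} model, reduces $\RAd$ to $\RAIP_n$ via a Fourier-analytic decoding step, and then runs an $\Omega(n)$-step hybrid argument (Lemma~\ref{lemma:gip}). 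Crucially, the hybrid conditioning injects extra shared randomness of the form $\sigma_{m,b}$, and one must show $\col_{\rho\otimes\sigma_{m,b}}(n)=\Omega(\col_\rho(n))$; this is handled by bounding $\Cor(\sigma_{m,b})$ (Claim~\ref{claim:ip-cor}) and relating agreement complexities under tensoring (Claim~\ref{claim:cor-to-col}). None of these steps is anticipated by your sketch, and in particular the need to control the additional correlation introduced by the hybrid conditioning is a central difficulty your plan does not foresee.
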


The first part of Theorem \ref{thm:main} states that any source of correlation can ``replace'' the perfect source $\rho_{\it perf}=\mathcal{U}\{00,11\}$ at the multiplicative cost of $\col_\rho(n)$. The second part of the theorem guarantees the existence of a family of partial functions with $\RII_\rho(g_n)$  exhibiting the behaviour expected from the first part of the theorem -- which proves tightness of the result, up to poly-logarithmic factors.  

The  particular family of partial functions in the second part of Theorem~\ref{thm:main} is defined as follows. 

\begin{definition}[Gap-inner-product]
  \label{d_gip}
  Let $n,m\in\NN$ and $\forall i\in[n]:x_i,y_i\in \mathbb{F}_2^m$.
  Define the \emph{gap-inner-product function~$\GIP_{n,m}$} as
  \[
    \GIP_{n,m}((x_1,\dots,x_n),(y_1,\dots,y_n)) = \begin{cases}
      0, & \abs{\set{i\in[n]\colon x_i\dt y_i=0}}\ge2n/3, \\
      1, & \abs{\set{i\in[n]\colon x_i\dt y_i=1}}\ge2n/3, \\
      \bot, & \text{otherwise.}
    \end{cases}
  \]
We set~$g_n=\GIP_n$ where $\GIP_n$ is shorthand for $\GIP_{n,8\log n}$,
assuming that~$n$ is a power of two in this context.
\end{definition}

Note that the second part of Theorem \ref{thm:main} itself consists of both an upper bound and a lower bounds on $\RII_\rho(\GIP_n)$ (the main technical challenge being proving the lower bound).

Our last set of result (presented in the appendix) describes the relation between the rate of growth of $\col_\rho(n)$ and the hypercontractivity properties of $\rho$, using that to give bounds on the collision complexity of $\rho_{\it disj}$.

%The former result in particular is significant because it is known that for constant values of $p<1$ the distribution $\rho_{BSC_{p}}$
 %\ref{appendix:example} and \ref{appendix:hyperc}  we  study the notion of collision complexity more closely 

%Despite the fact the above theorem may look tight, one of the logarithmic inefficiencies in our argument is in fact stems from this theorem. The point is that the size of the input for $\GIP_n$ is in fact $8n\log n$, hence for a tight lower bound we should have had a lower bound of the form $\Omega(\col_\rho(n\log n))$ which as explained in Section \ref{section:collision-complexity} could $\log n$ times larger than $\Omega(\col_\rho(n))$. 

%We will see that different choices of the shared distribution~$\rho$ give rise to communication models of different power. It is easy to see that given perfect public randomness,
%Alice and Bob can locally simulate any other bipartite distribution
%to any precision, which implies that~$\rho_{\it perf}$ is the most powerful type of shared randomness (thus we call it \e{perfect}).
%We show that many other choices of~$\rho$ give rise to communication models that are strictly weaker than perfect randomness but stronger than private randomness.
%%%%%%%%%%%%%%%
%%%%%%%%%%%%%%
%%%%%%%%%%%%%%%%%%%%%
%%%%%%%%%%%%%%%%%%%%%
%%%%%%%%%%%%%%%%%%%%%

\section{Background and related work}

\paragraph{Simultaneous message passing model.} 
As we alluded to before, the SMP model is particularly suitable for studying shared randomness.
This is due to the fact that the power of this model depends crucially on availability of a common random source.
It is easy to see that shared randomness allows the players to use \e{mixed strategies}.
%\footnote{In some sense, the availability of shared randomness is \e{equivalent} to the ability to use mixed strategies.}

The classical example of \e{the equality function} on $n$-bit strings demonstrates a problem that requires $\Theta(\sqrt{n})$ bits of communication in the SMP model with private randomness only, but can be solved by a $O(1)$-bit protocol in the SMP model with (perfect) shared randomness.
Moreover, we will see in Section~\ref{sec:equality} that \e{any non-trivial form of shared randomness} is sufficient for solving the equality function by a $O(1)$-bit protocol.
On the other hand, in virtually all other existing two-party models, at most additive logarithmic ``saving'' in terms of communication complexity can result from the availability of shared randomness (as opposed to private randomness only).
% More precisely, in any model without a mediating referee, the sender of the first message can locally toss random coins and append the outcomes to the message, and those values can be used in place of shared randomness.
% As a result of the above and Newman's theorem~\cite{Newman91IPL}, the only difference between public and private randomness in any setting beside the SMP model is at most logarithmic.

In this work we generally view poly-logarithmic multiplicative factors as insignificant in the context of communication complexity, and therefore for our needs SMP model is the most suitable model to consider.
\footnote{
On the other hand, in a parallel work, Canonne et al.~\cite{CGMS14} study the effects of shared randomness in the one- and two-way settings in the ``sub-logarithmic regime'' and obtain rather interesting and unexpected results.
}

\paragraph{Collision complexity and measures of quality of correlation.}
Intuitively and informally, our notion of collision complexity is \e{a measure of quality of shared randomness $\rho$}as it  satisfies the tensor product stability property (Fact \ref{fact:col}). A number of other ``measures of quality'' of correlations have been extensively studied. Perhaps the most well-known among those is the \e{maximum correlation} and \e{hypercontractivity}. The latter which is of particular interest due to both its utility in a multitude of applications and its striking mathematical  elegance. 

 The literature on measures of quality of shared randomness (and in particular hypercontractivity) is vast and we will not try to give a comprehensive survey here; instead, we refer the reader to recent works  \cite{anar2013, delgosha2014, kamath2012} and references therein. We just briefly note one particular interesting line of work on a related problem called \e{non-interactive correlation distillation (NICD)} \cite{MosODo05RSA,MosODoRegSteSud06IJM,Yang07TCS}. In the two-party NICD, Alice and Bob have access to an unlimited number
of independent copies of~$\rho$ (just as in our case),
and their task is to produce a marginally uniform bit each
so that the outputs by Alice and Bob agree with the maximum probability.
Although both NICD and the collision complexity are closely related to the maximum correlation and hypercontractivity, their exact relationship is unclear. On the other hand, the relationship between hypercontractivity and maximum correlation is evident in many of the works cited above. We discuss the relationship between these two notions and collision complexity in Subsection \ref{sec:removal} and the appendix.

The direct precursor to this work is a work of Gavinsky, Ito and Wang~\cite{GavItoWan13CCC}, which, to our knowledge, was the first that studied different ``forms of shared randomness'' in communication complexity.
% We draw many inspirations in the present work, both at the high level and in the techniques, from the aforementioned paper. 

%%%%%%%%%%%%%%%%%%%%%%%%
%%%%%%%%%%%%%%%%%%%%%%%%
%%%%%%%%%%%%%%%%%%%%%%%%
%%%%%%%%%%%%%%%%%%%%%%%%

\section{Preliminaries}

Throughout the paper, the base of logarithm is two unless stated otherwise. For $x,y\in\01^n$, by $x\cdot y$ we mean the inner product of $x,y$ as elements of $\mathbb{F}_2^n$, i.e.~$\sum_{i=1}^n x_iy_i\bmod 2$.

Recall from the introduction that by a source of shared randomness $\rho$ (also sometimes referred to  as a \underline{bipartite distributions}) we mean a distribution over a set $\Sigma=U\times V$ and for $X\subseteq \Sigma$ the uniform distribution over a set~$X$ is denoted by ~$\U_X$. Also for any distribution $\rho$ on $U\times V$ we denote the marginals of $\rho$ on $U$ and $V$  by $\rho_{U}$ and $\rho_V$, respectively.

The main operation on sources of shared randomness  is the tensor product. 

\begin{definition}
Let $\rho_1$ and $\rho_2$ be two distributions over $U_1\times V_1$ and $U_2\times V_2$ respectively. We define a new source of shared randomness $\rho_1\otimes \rho_2$ over $(U_1\times U_2) \times (V_1\times V_2)$ by setting
\[ (\rho_1\otimes \rho_2)\, ( (x,x'),(y,y'))= \rho_1(x,y)\cdot \rho_2(x',y')\]
for any $(x,y)\in U_1\times V_1$ and $(x',y')\in  U_2  \times V_2$.
\end{definition}

\paragraph{Maximum Correlation.} Let $\rho$ be a distribution on $U\times V$. The spaces $U$ and $V$ equipped with the measures $\rho_U$ and $\rho_V$ form two probability spaces which turns the vector space of real-valued functions over $U$ and $V$ into a Hilbert space via 
$\|f\|_2^2=\Ex_{u\sim \rho_A} f(u)^2$ and   $\|g\|_2^2=\Ex_{v\sim \rho_V} g(v)^2$.

Given the above, we can define the maximum correlation of $\rho$ as follows. 

\begin{definition}[Maximum correlation]\label{def:max_cor}
Given a distribution $\rho$ on $U\times V$ the maximum correlation of $\rho$ is defined 
  \[ \Cor(\rho) = \sup_{f,g} \Ex_{(u,v)\sim\rho}[f(u)g(v)] \]
where the supremum is taken over functions satisfying $\Ex_{u\sim \rho_U} f(u)= \Ex_{v\sim \rho_V} f(v)=0$ and $\|f\|_2=\|g\|_2=1$. 
\end{definition}
From the above definition it is clear that $\Cor(\rho_{\it perf})=1$, which is clearly the largest the maximum correlation can be, because one can take $f=g$ for any $\|f\|_2=1$. On the other hand for any product distribution, such as $\rho_{\it priv}$, the maximum correlation is zero because in that case by the independence we have $\Ex_{(u,v)} f(u) f(v)= \Ex_u f(u) \Ex_{v} f(v)=0$.

The following follows easily from the definition.

\begin{lemma} \label{lemma:correlation-general}
  Let~$\rho$ be a distribution on~$U\times V$ and ~$f\colon U\to[0,1]$ and~$g\colon V\to[0,1]$. 
  Then,
  \[
    \Ex_{(u,v)\sim\rho}f(u)g(v)
    \leq 
    \Ex_u f(u) \Ex_v f(v)+\Cor(\rho)\cdot \sqrt{\Var(f)\Var(g)}. \]
\end{lemma}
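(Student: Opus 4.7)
The plan is to reduce to the definition of maximum correlation via a standard mean-zero decomposition. Write $f = \mu_f + \tilde f$ and $g = \mu_g + \tilde g$, where $\mu_f = \Ex_{u\sim\rho_U}f(u)$, $\mu_g = \Ex_{v\sim\rho_V}g(v)$, and the centred parts satisfy $\Ex_{u\sim\rho_U}\tilde f(u) = \Ex_{v\sim\rho_V}\tilde g(v) = 0$. Expanding the product $f(u)g(v)$ and taking expectation under $\rho$, the two cross terms vanish because the marginals of $\rho$ are exactly $\rho_U$ and $\rho_V$, leaving
\[ \Ex_{(u,v)\sim\rho}[f(u)g(v)] = \mu_f\mu_g + \Ex_{(u,v)\sim\rho}[\tilde f(u)\tilde g(v)]. \]
So it suffices to bound the centred-centred inner product by $\Cor(\rho)\sqrt{\Var(f)\Var(g)}$.

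For this, the idea is to normalise $\tilde f$ and $\tilde g$ and plug them into Definition \ref{def:max_cor}. Assuming for the moment that both variances are strictly positive, let $F = \tilde f/\|\tilde f\|_2$ and $G = \tilde g/\|\tilde g\|_2$, where the norms are taken with respect to $\rho_U$ and $\rho_V$. Then $F$ and $G$ have mean zero and unit $L^2$ norm under the appropriate marginals, so they are admissible in the supremum defining $\Cor(\rho)$. Hence $\Ex_{(u,v)\sim\rho}[F(u)G(v)] \le \Cor(\rho)$, and multiplying both sides by $\|\tilde f\|_2\|\tilde g\|_2 = \sqrt{\Var(f)\Var(g)}$ yields the desired bound on the centred-centred term.

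The remaining case is when one of the variances is zero: then the corresponding function is constant $\rho$-almost-surely, the centred-centred term vanishes, and the inequality holds with equality of the main terms (the $\sqrt{\Var(f)\Var(g)}$ correction being zero). Putting the two cases together establishes the lemma.

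There is no real obstacle here; the only subtlety is being careful that ``$\Ex_u$'' and ``$\Ex_v$'' on the right-hand side refer to the marginals $\rho_U$ and $\rho_V$ (which is exactly what makes the cross terms in the expansion vanish and what makes $F,G$ admissible test functions for $\Cor(\rho)$).
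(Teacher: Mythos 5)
Your proof is correct and takes essentially the same approach as the paper: centre $f$ and $g$, observe the cross terms vanish, and normalise the centred parts to make them admissible test functions for $\Cor(\rho)$. The only addition beyond the paper's terse argument is your explicit handling of the degenerate zero-variance case, which the paper leaves implicit.
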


\begin{proof}
Let $a=\Ex_{u\sim \rho_U} f(u)$ and $b= \Ex_{v\sim \rho_V} g(v)$. Note that $\Ex_{(u,v)\sim\rho}f(u)g(v)$ can be written as $ab+ \Ex_{(u,v)\sim\rho}\, (f(u)-a)\cdot (g(v)-b)$.
The lemma follows from noting that $(f-a)/\sqrt{\Var(f)}$
  and~$(g-b)/\sqrt{\Var(g)}$ have mean~$0$ and variance~$1$ and the definition of the maximum correlation.
\end{proof}
 
The main attractive feature of maximum correlation is its simple behaviour under \emph{tensor product} of different distributions.\footnote{Collision complexity itself also satisfies a nice property under tensor product (though this is essentially by definition and not due to a non-trivial underlying mathematical phenomenon), as explained in Fact \ref{fact:col} (iii) and (iv); but these are in general too weak to be of much help.}  

\begin{lemma}[Witsenhausen~{\cite{Witsenhausen75SIAP}}]
  \label{lemma:correlation-tensor}
  For~$i\in[n]$, let~$\rho_i$ be a probability distribution on~$U_i\times V_i$.
  Then it holds that~$
    \Cor(\rho_1\otimes\dots\otimes\rho_n) = \max_{i\in[n]}\Cor(\rho_i)
  $.
\end{lemma}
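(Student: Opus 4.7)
The plan is to proceed by induction on $n$. The lower bound $\Cor(\rho_1\otimes\dots\otimes\rho_n)\ge\max_i\Cor(\rho_i)$ is immediate: for the index $i$ achieving the maximum, take the near-optimal $f_i,g_i$ in Definition \ref{def:max_cor} for $\rho_i$ and extend them to the product spaces by letting them depend only on the $i$-th coordinate; the mean-zero and norm conditions are preserved, and under $\rho_1\otimes\cdots\otimes\rho_n$ the correlation is exactly $\Ex_{\rho_i}[f_i(u_i)g_i(v_i)]$. So the content is the upper bound, and by associativity of $\otimes$ and induction it suffices to handle $n=2$.

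For the case $n=2$, fix any $F\colon U_1\times U_2\to\RR$ and $G\colon V_1\times V_2\to\RR$ with $\Ex F=\Ex G=0$ and $\|F\|_2=\|G\|_2=1$. The key idea is to ``peel off'' the part of each function that depends only on the first coordinate. Define
\[ F_1(u_1):=\Ex_{u_2\sim\rho_{2,U_2}}[F(u_1,u_2)],\qquad F_\perp:=F-F_1, \]
and analogously $G_1,G_\perp$. Then $F_1$ has mean zero on $U_1$, while $F_\perp$ has mean zero in $u_2$ for every fixed $u_1$ (and similarly for $G$). Using the independence of $\rho_1$ and $\rho_2$ in $\rho_1\otimes\rho_2$, a direct computation shows the two cross terms vanish, giving the orthogonal decomposition
\[ \Ex_{\rho_1\otimes\rho_2}[F\cdot G]=\Ex_{\rho_1}[F_1 G_1]+\Ex_{\rho_1\otimes\rho_2}[F_\perp G_\perp]. \]
Checking the vanishing of the cross terms is routine but is the step easiest to botch: for instance, $\Ex[F_1(u_1)G_\perp(v_1,v_2)]$ equals $\Ex_{\rho_1}[F_1(u_1)\cdot\Ex_{v_2}[G_\perp(v_1,v_2)\mid v_1]]$ by conditioning, and the inner expectation is $0$.

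Next I would bound each term. The first is bounded by $\Cor(\rho_1)\|F_1\|_2\|G_1\|_2$ straight from the definition of maximum correlation (applied on $U_1\times V_1$ to the renormalised mean-zero functions $F_1,G_1$). For the second, conditioning on $(u_1,v_1)\sim\rho_1$ and applying $\Cor(\rho_2)$ to the (conditional) mean-zero unit-normalised versions of $F_\perp(u_1,\cdot),G_\perp(v_1,\cdot)$, followed by Cauchy--Schwarz over the outer $\rho_1$-expectation, yields
\[ \Ex[F_\perp G_\perp]\le\Cor(\rho_2)\|F_\perp\|_2\|G_\perp\|_2. \]
Since $\|F_1\|_2^2+\|F_\perp\|_2^2=\|F\|_2^2=1$ (and similarly for $G$), setting $a=\|F_1\|_2,b=\|G_1\|_2\in[0,1]$ and combining both bounds gives
\[ \Ex[FG]\le\max(\Cor(\rho_1),\Cor(\rho_2))\cdot\bigl(ab+\sqrt{(1-a^2)(1-b^2)}\bigr)\le\max(\Cor(\rho_1),\Cor(\rho_2)), \]
where the final step uses the elementary inequality $ab+\sqrt{(1-a^2)(1-b^2)}\le 1$ (one-line proof: substitute $a=\cos\alpha,b=\cos\beta$ to get $\cos(\alpha-\beta)\le1$). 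Taking the supremum over $F,G$ closes the induction.

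The main obstacle is not any single computation but the bookkeeping: one must be careful that $F_\perp$ really has conditional mean zero with respect to $\rho_{2,U_2}$ for every fixed $u_1$ (so that $\Cor(\rho_2)$ is applicable), and that Cauchy--Schwarz is applied in the right direction so the outer square roots reassemble into the global norms $\|F_\perp\|_2,\|G_\perp\|_2$. Once that is set up cleanly, the Witsenhausen identity drops out.
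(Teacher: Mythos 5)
Your argument is correct, but it follows a genuinely different route from the paper's. The paper proves the lemma in two lines by observing that $\Cor(\rho)$ equals the second-largest singular value of the matrix $A_{u,v}=\rho(u,v)/\sqrt{\rho_U(u)\rho_V(v)}$ (whose top singular value is always $1$, with singular vectors given by the square roots of the marginals), and then invoking the standard fact that the singular values of a Kronecker product are the pairwise products of the factors' singular values; since the top singular value of each factor is $1$, the second singular value of the product is $\max_i\Cor(\rho_i)$. Your proof avoids the SVD machinery entirely: for $n=2$ you decompose $F=F_1+F_\perp$ and $G=G_1+G_\perp$ by averaging out the second coordinate, check that the cross terms vanish and that the pieces are orthogonal (so $\|F_1\|_2^2+\|F_\perp\|_2^2=1$), bound $\Ex[F_1G_1]$ by $\Cor(\rho_1)\|F_1\|_2\|G_1\|_2$ and $\Ex[F_\perp G_\perp]$ by $\Cor(\rho_2)\|F_\perp\|_2\|G_\perp\|_2$ (the latter via conditioning on $(u_1,v_1)$ and Cauchy--Schwarz), and close with $ab+\sqrt{(1-a^2)(1-b^2)}\le1$. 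The paper's argument is shorter once one accepts the SVD facts; yours is more elementary and self-contained, and it makes explicit exactly where each $\Cor(\rho_i)$ enters, at the cost of more bookkeeping. Both are valid; your lower-bound direction (extend near-optimal test functions to depend only on the $i$-th coordinate) is the standard one and is also fine.
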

\begin{proof}
It is easy to see from the definition that maximum correlation of a distribution $\rho$ is just the second largest singular value of  the matrix $A_{u,v}=\frac{\rho(u,v)}{\sqrt{\rho(u)\rho(v)}}$.  The lemma then follows by noting that the singular values of the tensor product of two matrices are given by the pairwise products of the singular values of the original matrices.
\end{proof}

\subsection{Information theory}
Here we give a brief account of some of the concepts from Information Theory. For a more detailed introduction containing the omitted proofs, one can consult~\cite{cover2012}.

For a random variables $X$ over a domain $\mathcal X$, its entropy is defined as
\[ H(X)= \sum_{x\in \mathcal X} \Pr[X=x]  \cdot \log\left(\frac{1}{\Pr[X=x]}\right).\]
The entropy of a random variable $X$ conditioned on $Y$ is the average (according to $Y$) of entropies of random variables $X| Y=y$. More concisely, we have $H(X|Y)=H(XY)-H(Y)$.

Given random variables $X$ and $Y$ we define their mutual information via
\[ I(X;Y)= H(X)+H(Y)-H(XY)= H(X)-H(X|Y)=H(Y)-H(Y|X).\]

The main fact that we shall need regarding the mutual information is the following.
\begin{fact}[Chain rule]
For any random variables $X,Y,Z,W$ we have
\[ I(X;YZ|W)= I(X;Y|W)+ I(X;Z|Y,W).\]
\end{fact}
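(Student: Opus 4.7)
The plan is to reduce the chain rule for mutual information to the chain rule for (conditional) entropy, together with the definition $I(X;YZ\mid W)=H(YZ\mid W)-H(YZ\mid XW)$ that one obtains from the stated identity $I(X;Y)=H(X)+H(Y)-H(XY)$ specialised to the conditional setting.

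First I would derive the entropy chain rule $H(AB\mid C)=H(A\mid C)+H(B\mid AC)$. This is immediate from the unconditional identity $H(UV)=H(U)+H(V\mid U)$ (which is essentially the definition recalled in the paper, since $H(V\mid U)=H(UV)-H(U)$): apply it to $U=A$, $V=B$ inside the conditional world ``given $C=c$'' and take the expectation over $c\sim C$, or equivalently observe that $H(AB\mid C)=H(ABC)-H(C)$ and $H(A\mid C)+H(B\mid AC)=(H(AC)-H(C))+(H(ABC)-H(AC))$, which telescope to the same quantity.

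Next I would expand both occurrences of the joint conditional entropy on the right-hand side of
\[ I(X;YZ\mid W)=H(YZ\mid W)-H(YZ\mid XW) \]
using the entropy chain rule just obtained: $H(YZ\mid W)=H(Y\mid W)+H(Z\mid YW)$ and $H(YZ\mid XW)=H(Y\mid XW)+H(Z\mid XYW)$. Regrouping gives
\[ \bigl(H(Y\mid W)-H(Y\mid XW)\bigr)+\bigl(H(Z\mid YW)-H(Z\mid XYW)\bigr), \]
and by the definition of conditional mutual information the two parenthesised differences are exactly $I(X;Y\mid W)$ and $I(X;Z\mid YW)$, completing the proof.

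There is no real obstacle here: the argument is a two-line bookkeeping calculation once the entropy chain rule is in hand, and everything else is definitional. The only ``choice'' is whether to split $YZ$ as $Y$ first then $Z$, or the reverse; the statement in the paper corresponds to the former.
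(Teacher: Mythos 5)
Your proof is correct: it is the standard derivation, reducing the chain rule for conditional mutual information to the chain rule for conditional entropy and then regrouping. The paper itself states this as a background \emph{Fact} with no proof, referring the reader to a textbook, so there is no in-paper argument to compare against; your bookkeeping with $H(AB\mid C)=H(ABC)-H(C)$ and the telescoping expansion is exactly what such a textbook proof would do, and there are no gaps.
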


Finally, we will need a simple lemma which we use in our analysis.

\begin{lemma}
  \label{lemma:entropy-difference}
  Let~$X$ be a random variable uniformly distributed over the domain $\mathcal X$
  and let~$P$ be an event.
  Then, \[
    H(X\mid P)
    \ge H(X)- \log\left(\frac{1}{\Pr[ P]}\right)
  \]

\end{lemma}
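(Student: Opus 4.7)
The plan is to deduce the bound from the elementary fact that Shannon entropy dominates min-entropy, applied to the conditional distribution of $X$ given the event $P$. Since $X$ is uniform on $\mathcal{X}$, we have $H(X) = \log|\mathcal{X}|$, so the target inequality is equivalent to $H(X \mid P) \geq \log(|\mathcal{X}|\cdot\Pr[P])$.

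First I would show that conditioning on $P$ cannot concentrate probability mass too heavily on any single value. For every $x\in\mathcal{X}$,
\[
  \Pr[X=x \mid P] \;=\; \frac{\Pr[X=x,\,P]}{\Pr[P]} \;\leq\; \frac{\Pr[X=x]}{\Pr[P]} \;=\; \frac{1}{|\mathcal{X}|\cdot\Pr[P]},
\]
where the middle inequality uses $\Pr[X=x,P] \leq \Pr[X=x]$ and the last equality uses the uniformity of $X$. Hence the min-entropy of $X$ conditioned on $P$ is at least $\log(|\mathcal{X}|\cdot\Pr[P])$.

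Next, I would invoke the standard inequality $H(Y) \geq H_\infty(Y)$ for any random variable $Y$: indeed, if $p_{\max} = \max_y \Pr[Y=y]$, then $-\log \Pr[Y=y] \geq -\log p_{\max}$ for every $y$ in the support, so averaging yields $H(Y) \geq -\log p_{\max} = H_\infty(Y)$. Applying this to the conditional distribution of $X$ given $P$, and combining with the bound from the previous step, gives
\[
  H(X \mid P) \;\geq\; -\log\!\left(\frac{1}{|\mathcal{X}|\cdot\Pr[P]}\right) \;=\; \log|\mathcal{X}| + \log\Pr[P] \;=\; H(X) - \log\!\left(\frac{1}{\Pr[P]}\right),
\]
which is exactly the claim. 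The proof involves no real obstacle; the only subtle point is recognizing that the desired inequality is a min-entropy statement in disguise, which is why the uniformity hypothesis on $X$ is essential (it is what pins down the pointwise bound on $\Pr[X=x,P]$).
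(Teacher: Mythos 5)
Your proof is correct, and it takes a mildly different route from the paper's. Both arguments bottom out at the same observation---$\Pr[X=x \wedge P] \le \Pr[X=x] = 1/|\mathcal{X}|$---but they package it differently. The paper expands $H(X\mid P)$ directly, peels $\log\Pr[P]$ out of each summand, and then applies the pointwise bound term by term inside the sum; it never isolates a statement about $\max_x \Pr[X=x\mid P]$. You instead convert the pointwise bound into a min-entropy bound $H_\infty(X\mid P)\ge \log(|\mathcal{X}|\Pr[P])$ and then invoke the general inequality $H(Y)\ge H_\infty(Y)$. Your version is slightly more modular and makes the role of the uniformity hypothesis cleaner: it is exactly what turns a one-sided inflation bound on conditional probabilities into a min-entropy guarantee, and it is also visibly why the lemma fails without uniformity (the counterexample in the paper's remark is precisely a case where $\Pr[X=x\mid P]$ can be large for some $x$ even with $\Pr[P]$ moderate). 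The paper's direct computation is a bit more self-contained since it avoids introducing min-entropy as an auxiliary notion. Either presentation is fine; they are equivalent in content and length.
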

We note that the above lemma does not hold in the current form when $X$ is not uniform. To see this, let $X_1,X_2$ be uniform independent random variables over $\mathcal X_1, \mathcal X_2$ respectively. Let $i\in \{1,2\}$ be also uniform and let $X=X_i$. Taking  $P$ to be the event $i=1$ and assuming that $|\mathcal X_1|\ll |\mathcal X_2|$ we see that in this case   $H(X|P)$ will be much less than $H(X)-1$ which proves the necessity of uniformity assumption.
\begin{proof}[Proof of Lemma \ref{lemma:entropy-difference}]
 We have 
 \begin{align*} 
 	H(X|P) &= \sum_{x\in \mathcal X} \Pr[X=x|P] \log\left(\frac{1}{\Pr[X=x| P]}\right)= \sum_{x\in \mathcal X} \Pr[X=x|P] \cdot \log\left(\frac{\Pr[P]}{\Pr[X=x \wedge P]}\right) \\
 	&=  \log(\Pr[P])+ \sum_{x\in \mathcal X} \Pr[X=x|P] \cdot  \log\left(\frac{1}{\Pr[X=x \wedge P]}\right) \\
 	&\geq \log(\Pr[P]) + \sum_{x\in \mathcal X} \Pr[X=x|P] \cdot \log\left(\frac{1}{\Pr[X=x]}\right) \\
 	&= \log(\Pr[P]) +  \sum_{x\in \mathcal X} \Pr[X=x|P]\cdot  \log|\mathcal X| \\
 	&= \log(\Pr[P] \cdot |\mathcal X|),
 \end{align*}
where the latter is clearly equal to $H(X)- \log(\frac{1}{\Pr[P]})$. 
\end{proof}
\subsection{Communication complexity}

We assume some basic familiarity with communication complexity~\cite{KusNis97}.
For a partial function~$f$ from~$\{0,1\}^n\times\{0,1\}^n$ to~$E$,
we denote by~$\RII(f)$ the SMP communication complexity of~$f$
without shared randomness with worst-case error probability (over the private coins of Alice and Bob) at most~$1/3$.

For~$n\ge1$, let~$\IP_n\colon\{0,1\}^n\times\{0,1\}^n\to\{0,1\}$
be the inner product function modulo $2$.
Chor and Goldreich~\cite{ChoGol88SICOMP} showed that
any communication protocol which answers~$\IP_n$
with error probability~$1/3$ on average,
even in the two-way communication model,
must have communication cost at least~$n/2-o(n)$.
The following is an immediate corollary of this.

\begin{cor} \label{theorem:inner-product}
  For~$n\ge1$, the communication complexity of~$\IP_n$
  in the SMP model with public randomness is in~$\Omega(n)$.
\end{cor}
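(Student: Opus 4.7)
The plan is to derive the corollary by a trivial simulation argument from the cited Chor–Goldreich bound. Suppose for contradiction that there is an SMP protocol for $\IP_n$ with (perfect) public randomness, communication $c$, and worst-case error at most $1/3$. One can convert such a protocol into a two-way public-coin protocol of essentially the same cost: using the shared randomness, Alice and Bob independently compute their SMP messages $M_a(x,R)$ and $M_b(y,R)$; Alice transmits $M_a$ to Bob, who then applies the referee's decision function locally to $(M_a,M_b)$ and outputs the answer. The total two-way communication is thus at most $c$ (the referee's verdict requires no further exchange), and the error probability is unchanged.

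Now invoke the Chor–Goldreich theorem as cited: any two-way protocol answering $\IP_n$ with error $1/3$ \emph{on average over the uniform input distribution} has communication at least $n/2-o(n)$. Since a public-coin worst-case protocol with error $1/3$ immediately yields, by averaging over its randomness and applying Yao's minimax principle (or simply by fixing the public coins to a good setting under the uniform input distribution), a deterministic protocol achieving error at most $1/3$ under the uniform distribution of the same cost, the simulated two-way protocol above is subject to the Chor–Goldreich lower bound. Hence $c \ge n/2 - o(n) = \Omega(n)$.

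No substantive obstacle arises: the only point to check is compatibility between the worst-case public-coin formulation used to define $\RII_{\rho_{\it perf}}(\IP_n)$ and the distributional formulation in which the Chor–Goldreich bound is most naturally stated, and this is handled by a standard application of Yao's minimax principle. The entire argument is essentially a reduction from SMP to two-way communication, which the paper itself flags as making the corollary ``immediate.''
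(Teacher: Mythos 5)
Your proof is correct and fills in exactly the standard reduction the paper treats as obvious: simulate the SMP protocol in the two-way model by having Alice forward $M_a$ to Bob (who knows $y$ and the public coins, so can compute $M_b$ and run the referee's decision rule), and then pass from worst-case public-coin error to distributional error under the uniform distribution by fixing a good setting of the coins, which is what the Chor--Goldreich bound requires. The paper itself offers no written proof---it calls the statement an ``immediate corollary''---so your argument is the intended one.
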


We will use the following fact
originally proved by Newman~\cite{Newman91IPL}
and refined by Kushilevitz and Nisan~\cite[Theorem~3.14]{KusNis97}.
The statement in~\cite{KusNis97} assumes that the function
takes a Boolean value, but the proof does not use this assumption.

\begin{lemma} \label{lemma:reduce-randomness}
  Let~$0<\varepsilon<\varepsilon'<1/2$ be constants.
  Any SMP protocol with public randomness
  for a partial function from~$\{0,1\}^n\times\{0,1\}^n$ to~$E$
  with error probability at most~$\varepsilon$
  can be converted to one which uses only~$\log n+C$ bits of public randomness
  with error probability at most~$\varepsilon'$
  without changing the communication cost,
  where~$C=C(\varepsilon, \varepsilon')>0$ is a constant
  depending only on~$\varepsilon$ and~$\varepsilon'$.
\end{lemma}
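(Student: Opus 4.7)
The plan is to follow the classical Newman-style probabilistic argument, adapted verbatim to the SMP setting (where the argument goes through unchanged because we only modify the public randomness and leave the message-generation rules $M_a$ and $M_b$ intact). Let $\Pi$ be the given public-coin SMP protocol with error probability at most~$\varepsilon$ on every input in the domain of~$f$, and let $R$ denote its public random string (of arbitrary length).

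First I would draw $T$ independent copies $r_1,\dots,r_T$ of the public randomness and define a new protocol $\Pi'$ as follows: use $\lceil\log T\rceil$ bits of public randomness to pick an index $i\in[T]$ uniformly at random, and then simulate $\Pi$ with $r_i$ in place of $R$. Clearly $\Pi'$ has exactly the same communication cost as~$\Pi$, so the only thing to verify is that for an appropriate $T=O(n)$, there exists a choice of $(r_1,\dots,r_T)$ for which $\Pi'$ errs with probability at most $\varepsilon'$ on every input in the domain of~$f$.

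Next I would fix an arbitrary input $(x,y)\in\{0,1\}^n\times\{0,1\}^n$ in the domain of~$f$ and let $Z_i(x,y)$ be the indicator of the event that $\Pi(x,y,r_i)$ errs. The $Z_i(x,y)$ are i.i.d.\ with expectation at most~$\varepsilon$, so Hoeffding's inequality gives
\[
  \Pr_{r_1,\dots,r_T}\!\left[\,\frac{1}{T}\sum_{i=1}^{T} Z_i(x,y) > \varepsilon'\,\right]
  \;\le\; \exp\!\bigl(-2T(\varepsilon'-\varepsilon)^2\bigr).
\]
Union-bounding over the at most $2^{2n}$ inputs in the domain of~$f$ and choosing $T=\bigl\lceil(2\ln 2)\, n/(\varepsilon'-\varepsilon)^2\bigr\rceil+1$ makes this total failure probability strictly less than~$1$, so some fixed sequence $(r_1^{*},\dots,r_T^{*})$ witnesses the desired bound simultaneously on every input. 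Hard-wiring this sequence into $\Pi'$ yields a public-coin SMP protocol with error at most~$\varepsilon'$ that uses only $\lceil\log T\rceil = \log n + C$ bits of public randomness, where $C=C(\varepsilon,\varepsilon')$ is a constant depending only on~$\varepsilon,\varepsilon'$.

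I do not anticipate a genuine obstacle: the argument is a direct averaging/Chernoff combined with a union bound over inputs, and nothing about the SMP structure (two senders and a referee), the partiality of~$f$, or the non-Boolean codomain~$E$ interferes with it, since the Hoeffding step only cares about whether the protocol's output equals~$f(x,y)$. The only mildly delicate point is ensuring that the constant~$C$ depends only on~$(\varepsilon,\varepsilon')$ and not on~$n$, which follows from the explicit choice $T=O(n/(\varepsilon'-\varepsilon)^2)$ above.
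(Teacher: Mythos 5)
Your proof is correct and is exactly the standard Newman/Kushilevitz--Nisan sampling-plus-union-bound argument, which is precisely what the paper invokes (the paper does not reprove the lemma but cites Newman~\cite{Newman91IPL} and Kushilevitz--Nisan~\cite[Theorem~3.14]{KusNis97}, remarking only that the Boolean-codomain restriction in the latter is inessential). Your observation that the argument is agnostic to the SMP structure, the partiality of~$f$, and the codomain~$E$ is the same point the paper makes.
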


\section{Usefulness of all non-product sources of shared randomness}\label{sec:equality}
In this section, we prove Theorem \ref{theorem:eq} showing an easy but nontrivial use of imperfect shared randomness in an SMP protocol for equality function. 

As a consequence of this theorem, we see that the lower bound claimed in the second part of our main theorem (Theorem \ref{thm:main}) does not hold for an arbitrary family of functions $\{g_n\}$. Thus a good choice of a function family was necessary there. 

Beyond its bearing on Theorem \ref{thm:main}, the result is interesting on its own as it explicitly shows that in some settings \emph{any form of public shared randomness} (i.e. one coming from a non-product source $\rho$) can have a significant impact on the asymptotic communication cost.

\begin{proof}[Proof of Theorem \ref{theorem:eq}]
  Because~$\rho$ is not a product distribution,
  there exist subsets~$\Lambda_a\subseteq U$ and~$\Lambda_b\subseteq V$
  such that 
  \begin{equation}
    \gamma=\Pr_{(u,v)\sim \rho} [u\in \Lambda_a\land v\in \Lambda_b]
    \ne
    \Pr_{u\sim \rho_A} [u\in \Lambda_a]\Pr_{v\sim \rho_B}[v\in \Lambda_b]=\gamma'.
    \label{eq:equality-1}
  \end{equation}

  A protocol for solving the equality is as follows.
  Alice and Bob share~$2^n$ copies of shared randomness;
  label the~$2^n$ i.i.d.~pairs of registers containing the shared randomness
  as~$(u_x,v_x)\sim\rho$ for~$x\in\{0,1\}^n$.
  Alice defines~$\alpha=1$ if~$u_x\in \Lambda_a$ and~$\alpha=0$ otherwise,
  and sends~$\alpha$ to the referee.
  Similarly, Bob defines~$\beta=1$ if~$v_y\in \Lambda_b$ and~$\beta=0$ otherwise,
  and sends~$\beta$ to the referee.
  The referee checks whether~$\alpha=\beta=1$ or not.
  If~$x=y$, this happens with probability~%
  $\gamma$,
  and otherwise it happens with probability~%
  $\gamma'$.
  The referee can tell which is the case with error probability at most~$1/3$
  by repeating this protocol for~$t
    =O(\abs{\gamma-\gamma'}^{-2})
  $ which is a constant independent of~$n$.
\end{proof}

\section{Collision complexity}
  \label{section:collision-complexity}
In the introduction we gave an informal definition and briefly mentioned some of the basic properties of collision complexity. In this section, we look at this notion in more detail and provide a more formal definition. We also give an alternate and more analytic characterisation of collision complexity in terms of a different measure of quality of shared randomness called \emph{agreement complexity}. Beside providing a different perspective on collision complexity, this alternate characterisation will be also useful technically  at various places later.

% Although all these basic properties are immediate consequences of the definition, still one of them, the tensor product stability, perhaps deserves some further discussion because of its important conceptual role.
% Essentially, the tensor product stability implies that the collision complexity (unlike the usual entropic measures) is a measure of \emph{quality of shared randomness} -- or equivalently a measure of strength of correlations.
% Note that this is to be expected, as in communication complexity the players are allowed to sample $\rho$ as many times as they desire.
% What restricts them however is the imperfectness and dilution of their correlated bits and this is what collision complexity exactly tries to quantify, at least in the context of the SMP model.

% We shall note that there are some important well-studied measures of quality of shared randomness (or correlation) in the literature; the most well-studied among these are the maximum correlation and hypercontractivity properties.
% In fact, both of these measures turn out to be in some way related to collision complexity.
% In the case of maximum correlation, the relations to collision complexity play an important role in the final stage of our argument; this appears in Subsection \ref{sec:removal}.
% On the other 
% hand, the relationship between collision complexity and hypercontractivity are not needed for the main argument but for completeness are still included in Appendix \ref{appendix:hyperc}.

We start our work toward the formal definition of collision complexity by defining a collision protocol. 

\begin{definition}[Collision protocol]
Let $\rho$ be a probability distribution on $U\times V$.
A $p$-collision protocol
\footnote{Think of $p$ as a decreasing function of $n$:\ say, $p=1/n$.}
for~$\rho$ with domain size~$n$ is determined by two functions, $A:U^\ell\rightarrow \mathcal P[n]$ and $B:V^\ell\rightarrow \mathcal P[n]$ for some $\ell\in \NN$, such that 
\begin{equation} \label{eq:collision_condition} \forall i\in [n]\, : \qquad \Pr_{(\vec{u},\vec{v})\sim\rho^{\otimes \ell}}\left[ (\vec{u}, \vec{v}): \, i\in A(\vec{u})\cap B(\vec{v})\right] \geq p.	
\end{equation}
\end{definition}

It is sometimes convenient to think of Alice's and Bob's functions $A$ and $B$ as randomised mappings, and that is allowed:\ we always assume that $\rho$ is non-degenerate, so the players can use it to generate private randomness (it doesn't have to be hidden from the other party, as the players are always cooperating).

The \emph{complexity} of a collision protocol $\{A,B\}$ is given by the maximum output size:
\begin{equation}\label{eq:collision_objective}
\col_\rho(n,p, \{A,B\})\overset{\textrm{def}}{=}
 \max\left\{\abs{A(\vec u)}, \abs{B(\vec v)} \, :\, 
  (\vec{u},\vec{v})\in supp(\rho)\right\}.
\end{equation}

\begin{definition}[Collision complexity]
  \label{d_coll}
  Let $n\in \NN$ and $p=p(n)\in [0,1]$ be a probability parameter possibly depending on $n$. The $p$-collision complexity of a source ~$\rho$ (over a domain of size~$n$) is the minimum worst case output size necessary for any protocol achieving a collision probability of $p$ for all $i\in [n]$.
In other words,
  \[ \col_\rho(n,p)= \inf_{\ell, A, B} \col_\rho(n,p, \{A,B\}),\]
  where the infimum is over all protocols $(\ell, A,B)$ satisfying (\ref{eq:collision_condition}).
\end{definition}

The main parameter setting of interest for us here will be $p=1/n$ and as such what follows we shall let 
\begin{equation}
\col_\rho(n)\overset{\textrm{def}}{=} \col_\rho(n, 1/n).
\end{equation}

Note that the collision complexity of a non-degenerate distribution can be at most $\asO{\sqrt n}$ (achievable using private randomness via the birthday paradox).
Similarly, the rest of the properties from Fact~\ref{fact:col} easily follow from the definition.

% To gain a more intuitive understanding of collision complexity, it would be useful to view the above definition operationally:
% \begin{itemize}
% \item Alice and Bob have access	a source $\rho$ which they can sample as many times as they like (call this number $\ell$).
% \item Given $\vec u\sim \rho_A^{\otimes \ell}$, $\vec v\sim \rho_B^{\otimes \ell}$  Alice and Bob produces a set $A(\vec u)\subseteq [n]$ and $B(\vec v)\subseteq [n]$.
% \item Alice and Bob want that when $(\vec u, \vec v)$ are \emph{jointly sampled} from $\rho^{\otimes \ell}$ the sets $\Ex[|A(\vec u)\cap B(\vec v)|]\geq 1$. In fact, more stringently they require than every $i\in [n]$ appear with probability at least $\frac{1}{n}$ in the intersections.\footnote{Note that this uniformity condition is crucial otherwise Alice and Bob can always produce the deterministic output $1$ and achieve $\Ex[ |A(u)\cap A(v)|]\geq 1$ without any real coordination between them.}
% \item The objective of the protocol is to achieve the above while minimising the set sizes as in (\ref{eq:collision_objective}). 
% \end{itemize}

Our second measure of correlation is \e{agreement complexity}.
It is closely related to collision complexity, but sometimes will be more convenient to work with.

\begin{definition}[Agreement complexity]
  \label{definition:agreement}
  Let~$\rho$ be a probability distribution on~$U\times V$.

  An \emph{agreement protocol} for~$\rho$
  is determined by~$\ell\in\NN$ and a pair of functions~%
  $f\colon U^\ell\to[0,1]$ and~$g\colon V^\ell\to[0,1]$.
  The \emph{cost} of this agreement protocol
  is~$\E{f(u_1,\dots,u_\ell)+g(v_1,\dots,v_\ell)}$,
  and the \emph{success probability} of this protocol
  is~$\E{f(u_1,\dots,u_\ell)g(v_1,\dots,v_\ell)}$,
  where~$(u_i,v_i)\sim\rho$ independently for all $i\in[\ell]$.

  The \emph{agreement complexity} of~$\rho$ at success probability~$p$,
  denoted by~$\agr_\rho(p)$,
  is the infimum of the cost of an agreement protocol for~$\rho$
  with success probability at least~$p$.
\end{definition}

Here Alice and Bob output one bit each (as opposed to a subset of~$[n]$ in the case of collision complexity).
The value~$f(u_1,\dots,u_\ell)$ is the probability that Alice outputs ``$1$'', given her part of shared randomness, and similarly for~$g(v_1,\dots,v_\ell)$.
The players' task is to output ``$1$'' simultaneously with probability at least~$p$,
while minimising the sum of the probabilities that each party outputs ``$1$'' -- the corresponding infimum is the agreement complexity.

\subsection{Equivalence of collision and agreement complexities and further properties}
The first lemma provides some basic parameter trade-offs.
\begin{lemma}\label{lem:parameter_manip}
For any positive integers $m,n$  and $p\in (0,1)$ we have:
 \begin{equation} \label{eq:parameters-1}
    \col_\rho(n,1-(1-p)^m) \le m\col_\rho(n,p), \qquad 
    \col_\rho(mn,p) \le m\col_\rho(n,p).
    \end{equation}
    \end{lemma}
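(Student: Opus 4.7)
The plan for both inequalities is amplification by running multiple independent copies of an optimal collision protocol, so the proof is essentially constructive.

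For the first inequality $\col_\rho(n,1-(1-p)^m) \le m\col_\rho(n,p)$, I would fix an arbitrary $p$-collision protocol $(A,B)$ for $\rho$ with domain size $n$ and output size bounded by $s = \col_\rho(n,p) + \varepsilon$ (letting $\varepsilon \to 0$ at the end). The new protocol takes $m$ independent blocks of samples from $\rho$, runs the original protocol on each block separately to obtain pairs $(A_j,B_j)$ for $j\in[m]$, and then has Alice output $A' = A_1\cup\dots\cup A_m$ and Bob output $B' = B_1\cup\dots\cup B_m$. Clearly $|A'|, |B'| \le ms$. For any fixed $i\in[n]$, the event $i\in A_j\cap B_j$ occurs with probability at least $p$ in each block and these events are independent across blocks, so the probability that $i$ is missed in every block is at most $(1-p)^m$, giving $\Pr[i\in A'\cap B'] \ge 1-(1-p)^m$.

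For the second inequality $\col_\rho(mn,p) \le m\col_\rho(n,p)$, I would identify $[mn]$ with $[m]\times[n]$. Again start with an optimal $p$-collision protocol $(A,B)$ on $[n]$ of cost $s$, and use $m$ independent blocks of samples from $\rho$ to run $m$ independent copies of $(A,B)$, producing pairs $(A_j,B_j)$ for $j\in[m]$. Let Alice output $A' = \bigcup_{j=1}^m \{j\}\times A_j$ and Bob output $B' = \bigcup_{j=1}^m \{j\}\times B_j$. Then $|A'|, |B'| \le ms$, and for any $(j,i)\in[m]\times[n]$ we have $(j,i)\in A'\cap B'$ if and only if $i\in A_j\cap B_j$, which holds with probability at least $p$ by the guarantee of the original protocol.

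Neither step presents a real obstacle; both are pure ``parallel repetition'' constructions that exploit the fact that Alice and Bob may draw arbitrarily many independent samples from $\rho$. The only minor bookkeeping issue is that the definition of collision complexity uses an infimum rather than a minimum, which is handled by applying the construction to a near-optimal protocol and letting the slack $\varepsilon$ go to zero at the end.
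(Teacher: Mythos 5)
Your proposal is correct and takes exactly the same route as the paper: both inequalities are proved by running $m$ independent copies of a near-optimal protocol, with the union of outputs for the first and the identification $[mn]\cong[m]\times[n]$ for the second. The only addition you make is the explicit remark about the infimum, which the paper leaves implicit.
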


\begin{proof}
	The first inequality follows by repeating a collision protocol~$m$ times independently
  and outputting the union of the results. For the second one, repeat the collision protocol~$m$ times independently
  to obtain~$A_1,\dots,A_m\subseteq[n]$ and~$B_1,\dots,B_m\subseteq[n]$,
  and output~$A=\{n(i-1)+j\colon i\in[m],j\in A_j\}$
  and~$B=\{n(i-1)+j\colon i\in[m],j\in B_j\}$.
  This gives a collision protocol with domain size~$mn$,
  collision probability~$p$, and output size at most~$m\col_\rho(n,p)$.
\end{proof}

As we mentioned collision and agreement complexities are intimately related. This is captured by the following lemma.
\begin{lemma} \label{lemma:col-agr}
  For a bipartite distribution~$\rho$, $n\ge1$, and~$0<p<1$,
  it holds that~$
    \col_\rho(n,p) = \Theta(\max\{1,n\agr_\rho(p)\})
  $,
  where the constant in the asymptotic notation
  does not depend on~$\rho$, $n$, or~$p$.
\end{lemma}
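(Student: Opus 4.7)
I will prove the two inequalities separately. For the lower bound $\col_\rho(n,p) \ge \Omega(\max\{1, n\agr_\rho(p)\})$, the trivial half $\col_\rho(n,p) \ge 1$ is immediate from the definition (if $A$ and $B$ were identically empty, the collision constraint would fail). For the main half, I take any collision protocol with worst-case output size $K$ and collision probability $\ge p$ at every $i\in[n]$, and extract an agreement protocol by averaging. Since $\E[|A(\vec u)|] + \E[|B(\vec v)|] \le 2K$, we have $\sum_{i\in[n]} \bigl(\Pr[i\in A] + \Pr[i\in B]\bigr) \le 2K$, so some $i^\ast \in [n]$ satisfies $\Pr[i^\ast \in A] + \Pr[i^\ast \in B] \le 2K/n$. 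Setting $f(\vec u) = \mathbb{1}[i^\ast \in A(\vec u)]$ and $g(\vec v) = \mathbb{1}[i^\ast \in B(\vec v)]$ gives an agreement protocol of cost $\le 2K/n$ and success probability $\E[fg] = \Pr[i^\ast \in A\cap B] \ge p$, so $\agr_\rho(p) \le 2K/n$, i.e.\ $K \ge (n/2)\agr_\rho(p)$.

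For the upper bound $\col_\rho(n,p) \le O(\max\{1, n\agr_\rho(p)\})$, I go in the opposite direction. Fix an agreement protocol $(f,g)$ of cost $c \le (1+\epsilon)\agr_\rho(p)$ and success $\ge p$; since $\rho$ is non-degenerate, the players can round using locally generated private randomness so that, without loss of generality, $f,g$ are $\{0,1\}$-valued. The natural construction takes $n$ independent batches of samples $(\vec u^i, \vec v^i) \sim \rho^{\otimes \ell}$ (one batch per index) and sets $A = \{i : f(\vec u^i) = 1\}$ and $B = \{i : g(\vec v^i) = 1\}$. For every $i$ we obtain $\Pr[i \in A\cap B] = \E[fg] \ge p$, and in expectation $\E[|A|+|B|] = nc$, but a priori $|A|$ and $|B|$ can be as large as $n$.

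The main obstacle is enforcing a worst-case bound on the output size. I will handle it by truncation: fix a threshold $K = C(1+nc)$ for a large enough constant $C$, and replace $A$ by $\emptyset$ whenever $|A|>K$ (similarly for $B$). The decisive observation is that, by independence of the $n$ batches, for any fixed $i$ we have $\E\bigl[\,|A|\,\bigm|\, f_i=g_i=1\bigr] = 1 + (n-1)\E[f] \le 1+nc$; so Markov's inequality applied conditionally gives $\Pr[|A|>K \mid f_i=g_i=1] \le 1/10$ for the right choice of $C$, and likewise for $B$. Hence the truncated protocol still satisfies $\Pr[i \in A'\cap B'] \ge (4/5)p$ at every $i$, with guaranteed output size $\le K = O(1+n\agr_\rho(p))$. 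To recover the target probability parameter $p$ from $(4/5)p$, I invoke Lemma~\ref{lem:parameter_manip}: a constant number of independent repetitions unioned together boost $(4/5)p$ past $p$ while inflating the output size by only a constant factor. The residual case of $p$ very close to $1$ is covered by the trivial bound $\col_\rho(n,p) \le n$ together with $\agr_\rho(p) \ge 2p$ (since $\E[fg] \le \min(\E[f],\E[g]) \le (\E[f]+\E[g])/2$), which already gives $\col_\rho(n,p) \le n \le O(n\agr_\rho(p))$ in that regime.
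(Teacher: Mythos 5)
Your proposal is correct and follows essentially the same route as the paper's proof: the lower bound picks the best index $i^*$ minimizing the sum of marginal membership probabilities, and the upper bound replicates the agreement protocol independently across $n$ coordinates, truncates oversized outputs to the empty set, and boosts the collision probability by a constant factor via Lemma~\ref{lem:parameter_manip}. The only deviation is cosmetic: you control the truncation failure by a conditional-expectation-plus-Markov argument where the paper uses a Chernoff bound, both giving an $O(1 + n\agr_\rho(p))$ threshold with constant conditional failure probability.
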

The proof is based on the following idea:
If~$i\in[n]$, a collision protocol~$(\ell,A,B)$ of domain size~$n$
can be converted to an agreement protocol
where Alice outputs~$1$ if~$i\in A$ and Bob outputs~$1$ if~$i\in B$.
By choosing the ``best'' value of~$i$,
we obtain the claimed upper bound on the agreement complexity.
For the conversion in the opposite direction,
Alice and Bob repeat the agreement protocol~$n$ times in parallel
to decide whether their output should contain each~$i\in[n]$.
By the Chernoff bound, this provides the necessary upper bound
on the collision complexity after an application of Lemma \ref{lem:parameter_manip}. The formal proof is as follows:
\begin{proof}
Let~$(\ell,A,B)$ be a collision protocol for~$\rho$
with domain size~$n$, output size~$\col_\rho(n,p)$,
and collision probability at least~$p$.
For~$i\in[n]$, let~$f_i(u)=\PR[i\in A(u)]$ for~$\vec u\in U^\ell$
and let~$g_i(\vec v)=\PR[i\in B(\vec v)]$ for~$\vec v\in V^\ell$.
Because the collision probability of~$(\ell,A,B)$ is at least~$p$,
it holds that~$\E{f_i(\vec u)g_i(\vec v)}\ge p$ for every~$i\in[n]$.
Because The output size being is~$\col_\rho(n)$,
it holds that~$\abs{A(\vec u)}\le \col_\rho(n,p)$ and~$\abs{B(\vec v)}\le \col_\rho(n,p)$
for all~$\vec u\in U^\ell$ and~$\vec v\in V^\ell$,
and in particular, it implies that
\[
  \E{\abs{A(\vec u)}+\abs{B(\vec v)}}
  =
  \sum_{i\in[n]}\E{f_i(\vec u)+g_i(\vec v)}
  \le 2\col_\rho(n,p).
\]
Therefore, there exists~$i^*\in[n]$ such that
\[
  \E{f_{i^*}(\vec u)+g_{i^*}(\vec v)}
  \le \frac{2\col_\rho(n,p)}{n}.
\]
Then the pair~$(f_{i^*},g_{i^*})$
is an agreement protocol with success probability at least~$p$
and cost at most~$2\col_\rho(n,p)/n$.
Therefore, we have that~$n\agr_\rho(p)\le2\col_\rho(n,p)$.

Next we prove that~$\col_\rho(n,p) \le 9n\agr_\rho(p)+48$.
Because~$\agr_\rho(p)\ge2p$, this inequality is trivial if~$p\ge1/2$.
For the rest of the proof, assume~$p<1/2$,
and we will prove the inequality in two steps:
\begin{align}
  \col_\rho(n,p/2) &\le 3n\agr_\rho(p)+16,
  \label{eq:col-agr-1}
  \\
  \col_\rho(n,p) &\le 3\col_\rho(n,p/2).
  \label{eq:col-agr-2}
\end{align}

Let~$K>\agr_\rho(p)$.
By the definition of the agreement complexity,
there exists an agreement protocol~$(\ell,f,g)$ for~$\rho$
with success probability at least~$p$ and cost less than~$c$.
Let~$a=\E_{\vec u\sim\rho_{U^\ell}}{f(\vec u)}$ and~$b=\E_{\vec v\sim\rho_{V^\ell}}{g(\vec v)}$,
and we have that~$a<K$ and~$b<K$.
Let~$T=3nK+16$.
Consider the following collision protocol.
Alice interprets her random input as~$(\vec u_1,\dots,\vec u_n)\in(U^\ell)^n$,
and Bob interprets his random input as~$(\vec v_1,\dots,\vec v_n)\in(V^\ell)^n$,
so that~$(\vec u_i,\vec v_i)$ is distributed according to~$\rho^{\otimes \ell}$ for each~$i\in[n]$
and~$n$ pairs~$(\vec u_1,\vec v_1),\dots,(\vec u_n,\vec v_n)$ are mutually independent.
Alice constructs sets~$\tilde{A},A\subseteq[n]$ as follows.
For each~$i\in[n]$,
$i$ belongs to~$\tilde{A}$ with probability~$f(\vec u_i)$,
independently of everything else.
If~$\abs{\tilde{A}}\le T$, then she defines~$A=\tilde{A}$;
otherwise~$A=\varnothing$.
She outputs set~$A$.
Bob constructs sets~$\tilde{B}$ and~$B$ in the analogous way,
and outputs~$B$.
It is clear that the output size of this collision protocol is at most~$t$.

Let~$i\in[n]$.
By the Chernoff bound, it holds that
\begin{align*}
  \Pr[A\ne\tilde{A} \mid i\in\tilde{A}\cap\tilde{B}]
  &=
  \Pr[\abs{\tilde{A}}>T \mid i\in\tilde{A}\cap\tilde{B}]
  \\
  &=
  \Pr[\abs{\tilde{A}\setminus\{i\}}>t-1 \mid i\in\tilde{A}\cap\tilde{B}]
  \\
  &\le
  \frac{1}{e^{a(n-1)}}\cdot\left(\frac{ea(n-1)}{T-1}\right)^{T-1}
  \\
  &<
  \left(\frac{e}{3}\right)^{T-1}
  <
  \frac14,
\end{align*}
where the last inequality follows from~$(e/3)^{15}<1/4$.
Similarly, it holds that~$\Pr[B\ne\tilde{B}\mid i\in\tilde{A}\cap\tilde{B}]<1/4$.
By union bound, we have that
\[
  \Pr[A=\tilde{A}\land B=\tilde{B} \mid i\in\tilde{A}\cap\tilde{B}]>\frac12.
\]
Therefore,
\[
  \Pr[i\in A\cap B]
  =
  \Pr[i\in\tilde{A}\cap\tilde{B}]
    \Pr[A=\tilde{A}\land B=\tilde{B} \mid i\in\tilde{A}\cap\tilde{B}]
  >
  \frac{p}{2},
\]
meaning that the collision probability of this collision protocol
is greater than~$p/2$.
This implies that~$\col_\rho(n,p/2) \le 3nK+16$.
Because this holds with any~$K>\agr_\rho(p)$,
inequality~(\ref{eq:col-agr-1}) follows.

Inequality~(\ref{eq:col-agr-2}) follows from~(\ref{eq:parameters-1})
by noting that~$1-(1-p/2)^3>p$ because~$0<p<1/2$.
\end{proof}
\section{Simulation of  $\RII_{\it perf}$  protocols in $\RII_{\it \rho}$} 
Given the technical tools we developed in the previous section we are ready  to prove the first (easy) part of Theorem \ref{thm:main}. 

\begin{proposition} \label{lemma:repl}
  Let~$\rho$ be a bipartite distribution. Then for any (possibly partial) function $f$ over~$\{0,1\}^n\times\{0,1\}^n$ we have 

    \[
      \RII_\rho(f) =  O(\col_\rho(n)\, (\mathtt{\RII_{\it perf}}(f)+\log n)\,).
    \]
\end{proposition}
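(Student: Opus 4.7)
The plan is to simulate the perfect-SR protocol $\Pi$ (of cost $c:=\RII_{\it perf}(f)$) by having Alice and Bob each commit, via a carefully constructed collision protocol for $\rho$, to a short list of candidate public-randomness strings from which the referee extracts a single string and runs $\Pi$ with it. As a first move I apply Lemma~\ref{lemma:reduce-randomness} to replace $\Pi$ with an equivalent protocol whose public randomness is uniform over a fixed set $S=\{r_1,\ldots,r_N\}$ of size $N=O(n)$ and whose error on every input is at most $\varepsilon:=1/10$, so that for every $(x,y)$ the ``bad set'' $B_{x,y}:=\{i\in[N]:\Pi\text{ errs on }r_i\}$ satisfies $|B_{x,y}|\le\varepsilon N$. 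No further amplification of $\Pi$ is performed: a $\log(1/\varepsilon)$ blowup of $c$ would be fatal, since the target bound $O(\col_\rho(n)(c+\log n))$ has $\log n$ as an additive, not multiplicative, term.

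Next, using the collision--agreement equivalence (Lemma~\ref{lemma:col-agr}), I fix an agreement protocol $(\ell,f,g)$ for $\rho$ with success probability $p=k/N$ and cost $O(k\col_\rho(n)/N)$, for a constant $k$ to be chosen. To produce the desired collision over $[N]$, for each $i\in[N]$ draw a fresh block of $\ell$ samples from $\rho$; Alice privately samples $a_i\sim\mathrm{Bernoulli}(f(\vec u_i))$, Bob analogously produces $b_i$, and they set $A=\{i:a_i=1\}$ and $B=\{i:b_i=1\}$. Two structural properties hold by construction: the pairs $(a_i,b_i)$ are i.i.d.~across $i$, and $\Pr[i\in A\cap B]$ is a single constant $\ge p$ for every $i$. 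A Chernoff bound gives $|A|,|B|\le T:=O(\col_\rho(n))$ except with probability $o(1)$; hard-cap each side to $\varnothing$ whenever the bound is violated, which preserves exchangeability. Before capping, independence across $i$ gives $\Pr[A\cap B\ne\varnothing]\ge 1-(1-p)^N\ge 1-e^{-k}$, which can be pushed above $9/10$ by choosing $k$ large enough.

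Finally, Alice sends the labelled messages $\{(i,M^\Pi_A(x,r_i)):i\in A\}$ and Bob sends the symmetric messages for $B$, giving worst-case communication $2T(c+\log N)=O(\col_\rho(n)(c+\log n))$. The referee reads $A$ and $B$ off the labels, and if $A\cap B\ne\varnothing$ she uses private randomness to sample a uniform $i^\star\in A\cap B$ and runs $\Pi$'s referee function on $(M^\Pi_A(x,r_{i^\star}),M^\Pi_B(y,r_{i^\star}))$. By the exchangeability of the i.i.d.~construction above, $\Pr[i^\star=j]$ is the same for every $j\in[N]$, so conditional on $A\cap B\ne\varnothing$ the index $i^\star$ is exactly uniform on $[N]$, whence $\Pr[i^\star\in B_{x,y}\mid A\cap B\ne\varnothing]\le\varepsilon$ and the total error is at most $\Pr[\text{cap breached}]+\Pr[A\cap B=\varnothing]+\varepsilon<1/3$. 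The main obstacle is precisely this uniformity of $i^\star$: a generic collision protocol can load $A\cap B$ heavily onto a few indices, and a naive vote would then be dominated by whichever over-represented index happens to fall in $B_{x,y}$; routing through the i.i.d.~agreement-based construction is what supplies the symmetry while keeping output sizes at $O(\col_\rho(n))$ and avoiding any amplification of $\Pi$, which is exactly what is needed to obtain $\log n$ as an additive rather than multiplicative term in the final bound.
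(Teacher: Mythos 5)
Your proposal is correct and is essentially the paper's own proof: the i.i.d.\ Bernoulli$(f(\vec u_i))$ construction with a hard cap --- used precisely to guarantee that the sampled collision index $i^\star$ is uniform on $[N]$ conditional on $A\cap B\ne\varnothing$ --- is exactly what the paper isolates as Claim~\ref{claim:uniform-collision-protocol} and then plugs into the same labeled-message protocol, with Lemma~\ref{lemma:reduce-randomness} supplying the $\log n$-bit seed. The only nit is that $\Pr[\text{cap breached}]$ should be quoted as a small constant (tunable via the constant in $T=O(\col_\rho(n))$) rather than $o(1)$, since $\col_\rho(n)$ can be $O(1)$; this does not affect your error budget.
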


To prove the lemma, we first need a symmetrisation claim which shows that at a small multiplicative cost one can always turn a collision protocol into one which treats all elements $i\in [n]$ the same way. More formally, we have the following:

\begin{claim} \label{claim:uniform-collision-protocol}
  Let~$0<s<1$.
  For a bipartite distribution~$\rho$ and~$n\ge1$,
  there exists a collision protocol~$(\ell,A,B)$
  with domain size~$n$ and output size at most~$O(\col_\rho(n))$
  that satisfies the following two properties:
  \begin{enumerate}[(a)]
  \item
    $\Pr_{(u,v)\sim\rho^{\otimes \ell}}[A(u)\cap B(v)=\varnothing]\le s$.
  \item
    Conditioned on the event~$A(u)\cap B(v)\ne\varnothing$,
    selecting one element from~$A(u)\cap B(v)$ uniformly at random
    produces the uniform distribution over~$[n]$.
  \end{enumerate}
  The constant in the asymptotic notation depends on~$s$, the failure probability which is a small but fixed constant,
 however it does not depend on~$\rho$ or~$n$.
\end{claim}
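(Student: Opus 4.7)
The plan is to recycle the symmetric construction from the proof of Lemma \ref{lemma:col-agr} --- which by design treats all $i \in [n]$ identically --- and then verify that, after a constant-factor boost, it satisfies (a) and (b) simultaneously. First, apply the $\col \Rightarrow \agr$ direction of Lemma \ref{lemma:col-agr} to a near-optimal collision protocol for $\col_\rho(n) = \col_\rho(n,1/n)$: this extracts an agreement protocol $(\ell_0, f, g)$ for $\rho$ with success probability at least $1/n$ and cost $a+b = \Ex[f(\vec u)] + \Ex[g(\vec v)] \le O(\col_\rho(n)/n)$. This is the building block.

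Now form a single-shot protocol $\Pi$: draw $(\vec u_i, \vec v_i) \sim \rho^{\otimes \ell_0}$ independently for each $i \in [n]$ and, using private randomness (available because $\rho$ is non-degenerate), have Alice include $i$ in $\tilde A$ with probability $f(\vec u_i)$ and Bob include $i$ in $\tilde B$ with probability $g(\vec v_i)$, all independently across $i$. Truncate by setting $A := \tilde A$ if $|\tilde A| \le T := C \col_\rho(n)$ and $A := \varnothing$ otherwise, with the analogous rule for $B$; the constant $C = C(s)$ is fixed below. For property (b), the pairs $(\vec u_i, \vec v_i)$ and the private coins are i.i.d.\ across $i$, and truncation is permutation-equivariant, so the joint law of $(A, B)$ is exchangeable on $[n]$; conditional on $A \cap B \ne \varnothing$, a uniform sample from $A \cap B$ is therefore uniform on $[n]$. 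For the single-shot success probability, let $Z = |\tilde A \cap \tilde B|$. The events $\{i \in \tilde A \cap \tilde B\}$ are independent with $\Pr[i \in \tilde A \cap \tilde B] = \Ex[f(\vec u_i)g(\vec v_i)] \ge 1/n$, so $\Ex[Z] \ge 1$ and $\Var(Z) \le \Ex[Z]$. Paley--Zygmund then gives $\Pr[Z > 0] \ge \Ex[Z]^2/\Ex[Z^2] \ge 1/2$. Since $\Ex[|\tilde A|] = na \le O(\col_\rho(n))$, Markov yields $\Pr[|\tilde A| > T] = O(1/C)$, similarly for $\tilde B$; picking $C$ large enough, $\Pr[A \cap B \ne \varnothing] \ge 1/3$.

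Finally, amplify by running $\Pi$ in parallel $m = O(\log(1/s))$ times independently and have Alice (resp.\ Bob) output the union of her (resp.\ his) $m$ outputs. Exchangeability is preserved under unions of i.i.d.\ exchangeable blocks, so (b) still holds. The worst-case output size is at most $mT = O_s(\col_\rho(n))$, and failure requires all $m$ runs to produce empty intersection, so $\Pr[A \cap B = \varnothing] \le (2/3)^m \le s$, giving (a). The main obstacle is the single-shot analysis: symmetrization bloats the random input to $n$ independent per-index samples, and it is not a priori clear that $\Ex[Z] \ge 1$ translates into $\Pr[Z > 0]$ being bounded below by a positive constant. This is precisely what the second-moment computation delivers, using coordinate-wise independence in a crucial way; the auxiliary Markov bound on truncation ensures that the same constant survives the thresholding step, after which amplification is routine.
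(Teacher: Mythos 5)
Your proposal is correct and follows essentially the same route as the paper: extract an agreement protocol from Lemma~\ref{lemma:col-agr}, run it independently per coordinate to build $\tilde A,\tilde B$, truncate at a threshold proportional to $\col_\rho(n)$, observe that the construction is manifestly symmetric in $[n]$ so (b) holds, and amplify by union to push the failure probability below $s$. The only (minor) divergence is in the tail bounds: the paper bounds $\Pr[\tilde A\cap\tilde B=\varnothing]$ directly by $(1-1/n)^n<1/e$ using independence across coordinates and controls the truncation event via Chernoff, whereas you use Paley--Zygmund for the first and Markov for the second; both yield constants that are then absorbed by amplification, so the arguments are interchangeable.
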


\begin{proof}%[Proof of Claim~\ref{claim:uniform-collision-protocol}]
  We first prove the case where~$s\ge1/e+1/2$.
  Consider the collision protocol used in the proof of inequality~(\ref{eq:col-agr-1})
  in Lemma~\ref{lemma:col-agr} with~$p=1/n$.
  This protocol satisfies condition~(b) by symmetry.
  Its output size is at most~$3n\agr_\rho(1/n)+16$,
  which is in~$O(\col_\rho(n))$ by Lemma~\ref{lemma:col-agr}.
  By the Chernoff bound, it holds that
  \[
    \Pr[A\ne\tilde{A}]
    =
    \Pr[\abs{\tilde{A}}>T]
    \le
    \frac{1}{e^{an}}\cdot\left(\frac{ean}{T}\right)^T
    <
    \left(\frac{e}{3}\right)^T
    <
    \frac14,
  \]
  and similarly~$\Pr[B\ne\tilde{B}]<1/4$.
  By union bound, it holds that
  \begin{align*}
    \Pr[A\cap B=\varnothing]
    &\le
    \Pr[\tilde{A}\cap\tilde{B}=\varnothing]
    +
    \Pr[A\ne\tilde{A}]
    +
    \Pr[B\ne\tilde{B}]
    \\
    &<
    \left(1-\frac{1}{n}\right)^n
    +\frac14+\frac14
    \\
    &<
    \frac{1}{e}+\frac12.
  \end{align*}
  This proves the case where~$s\ge1/e+1/2$.

  To prove the general case,
  repeat this protocol for~$\ceil{\log s/\log(1/e+1/2)}$ times
  and output the union of the results.
\end{proof}

\begin{proof}[Proof of Proposition~\ref{lemma:repl}]
Using randomness reduction, i.e.~Lemma~\ref{lemma:reduce-randomness},
  we know that there exists an SMP protocol for~$f$
  using at most  ~$\log n + C$ bits of public perfect randomness achieving with error probability at most~$2/5$.  Fix this protocol and set $m=\log n+C$ and $\epsilon=2/5$ and  $s=(1-2\varepsilon)/(4-4\varepsilon)$. Let~$\alpha(x,r)$ to be the message which Alice sends given input $x$ and public (perfect) random string  ~$r\in 2^{[m]}$,
  and let~$\beta(y,r)$ be the message which Bob sends
given input is~$y$ and randomness~$r$.
  Let~$\gamma(a,b)\in E$ be the referee's output
  given that the message from Alice is~$a$ and the message from Bob is~$b$.

  Consider the following SMP protocol for~$f$ with shared distribution~$\rho$.
  \begin{enumerate}
  \item
    Alice and Bob run the collision protocol
    in Claim~\ref{claim:uniform-collision-protocol}
    with domain size~$2^m$ and parameter~$s$ specified above.
    Identify the domain~$[2^m]$ with the set of~$m$-bit strings,
    and let~$A,B\subseteq\{0,1\}^m$
    be the sets that Alice and Bob obtain as a result
    of running this collision protocol.
    Alice sends the referee each string~$r$ in~$A$ together with~$\alpha(x,r)$,
    and Bob sends the referee each string~$r$ in~$B$ together with~$\beta(y,r)$.
  \item
    The referee outputs an arbitrary answer in~$E$
    if~$A\cap B=\varnothing$.
    Otherwise, he chooses an element~$r\in A\cap B$ uniformly at random,
    and he outputs~$\gamma(\alpha(x,r),\beta(y,r))$
    (which he can do, because the values~$\alpha(x,r)$ and~$\beta(y,r)$
    have been sent by Alice and Bob, respectively).
  \end{enumerate}

  The complexity of the above protocol is clearly in~$\asO{\col_\rho(2^m)(\RII_{\it perf}+m)}$. It is clear that the protocol succeeds with probability ~$(1-s)(1-\varepsilon)=(3-2\varepsilon)/4>1/2$. By amplification, we are done.
\end{proof}

\section{Characterising $\mathit{SMP}_\rho(f)$ via collision complexity: the lower bound}
  \label{s_GIP}

In the previous section we proved the first part of Theorem \ref{thm:main}. In this section,  we complete what we started there by proving the second (and the harder) part of the theorem.  The main ingredient of  the proof is a lower bound on the communication complexity of  $\GIP_n$ functions (introduced in Definition~\ref{d_gip}) which is the content of Theorem \ref{t_GIP}. 

\begin{theorem}
  \label{t_GIP}
  For any bipartite distribution~$\rho$, it holds that~$
    \RII_\rho(\GIP_n) = \Omega(\col_\rho(n))
  $,
  where the constants in the asymptotic notations do not depend on~$\rho$ or~$n$.
\end{theorem}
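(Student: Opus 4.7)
The first step is to reduce the claim via Lemma~\ref{lemma:col-agr}: since $\col_\rho(n)=\Theta(\max\set{1, n\cdot\agr_\rho(1/n)})$ and the claim is trivial when $\col_\rho(n)=O(1)$, it suffices to convert any $c$-bit $\rho$-SMP protocol $\Pi$ for $\GIP_n$ with error $<1/3$ into an agreement protocol for $\rho$ of cost $O(c/n)$ at success probability $\Omega(1/n)$.

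The hard inputs are the pair $\mu_0,\mu_1$ on $(\F_2^m)^n\times(\F_2^m)^n$, where $\mu_b$ independently samples each coordinate $(X_i,Y_i)$ uniformly from $\set{(x,y)\in\F_2^m\times\F_2^m: x\cdot y=b}$ with $m=8\log n$.  Under $\mu_b$ the function $\GIP_n$ is identically $b$, so $\Pi$'s transcript distributions under $\mu_0$ and $\mu_1$ must differ in total variation by at least $1/3$.  The structural feature I exploit is that the $X$- and $Y$-marginals of both $\mu_0$ and $\mu_1$ are essentially uniform on $\F_2^{mn}$ (up to an $O(2^{-m})=O(n^{-8})$ perturbation); the two distributions differ only in how $X$ and $Y$ are coupled through the per-coordinate inner products, which is precisely the kind of ``shared-randomness-like'' structure that an agreement protocol captures.

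I then localise this distinguishing advantage to a single coordinate by a hybrid argument between $\mu_0$ and $\mu_1$, pigeonholing to find some $i^\ast\in[n]$ where the single-coordinate contribution to the total $1/3$ gap is at least $\Omega(1/n)$.  A short single-coordinate Fourier computation on $\F_2^m$ -- exploiting the elementary identity $\Ex_{(x,y)\sim\eta_b}(-1)^{T\cdot x+S\cdot y} = [T=S=0] + (-1)^b\,2^{-m}(-1)^{T\cdot S}$ -- rewrites this contribution in the form $2\,\E{F'(x,\Ra)\,G'(y,\Rb)\,(-1)^{x\cdot y}}$, where $F',G'\in[0,1]$ depend only on the shared randomness and on a single $\F_2^m$-component, with $x,y$ independent uniform on $\F_2^m$.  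At this point $(-1)^{x\cdot y}$ acts as a single-coordinate $\IP_m$ instance embedded inside the would-be agreement problem.

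The final step, which I expect to be the main obstacle, is to strip the $(x,y)$-dependence and extract an honest agreement protocol on the shared randomness.  Naively restricting to a single ``best'' transcript or a single $(x^\ast,y^\ast)$ loses a $2^{O(c)}$ factor that is fatal to the bound.  Instead one must exploit the bilinear structure of $F',G'$ as $c$-bit-message-valued functions: expand $\E{F'G'(-1)^{x\cdot y}}$ as a sum over transcripts and use an $L_1$-type bound on the Fourier expansion of $(-1)^{x\cdot y}$ on $\F_2^m\times\F_2^m$ to extract $[0,1]$-valued $f(\Ra),g(\Rb)$ with $\E{fg}=\Omega(1/n)$ and $\E{f}+\E{g}=O(c/n)$.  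Combined with Lemma~\ref{lemma:col-agr} this yields $c=\Omega(\col_\rho(n))$.  The key calibrations are the choice $m=8\log n$ (so that approximate-marginal errors are uniformly $n^{-\Omega(1)}$ small compared to the $2^{-O(c)}$ advantages at play, recalling $c\le\col_\rho(n)\le\sqrt n$) and the transcript-to-agreement bilinear decomposition, which must avoid any exponential-in-$c$ overhead.
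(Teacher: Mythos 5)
Your reduction has a genuine gap at the step where you claim $F',G'\in[0,1]$ ``depend only on the shared randomness and on a single $\F_2^m$-component.'' After pigeonholing to a hybrid pair of input distributions differing only in coordinate $i^\ast$, the transcript-indicator functions still depend on \emph{all} $n$ blocks of the players' inputs, and the blocks $(X_j,Y_j)_{j\neq i^\ast}$ over which you average are drawn from the non-product distributions $\sigma_{m,b_j}$. These pairs are themselves correlated, so the averaging does \emph{not} produce functions of $(x,R_A)$ and $(y,R_B)$ separately: the residual cross-block correlation survives and acts as \emph{additional shared randomness} beyond $\rho$. What you would actually extract is an agreement protocol that uses samples of $\rho\otimes\sigma_{m,b}$ rather than of $\rho$, and a priori $\col_{\rho\otimes\sigma_{m,b}}(n)$ could be much smaller than $\col_\rho(n)$. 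The paper flags exactly this as the central technical subtlety (``additional shared randomness caused by conditioning'') and closes it with a separate argument: Claim~\ref{claim:ip-cor} shows $\Cor(\sigma_{m,b})\le 2^{1-m/2}$, Claim~\ref{claim:cor-to-col} shows that tensoring with a low-correlation source costs only an additive $\Cor(\sigma)$ in agreement complexity, and Lemma~\ref{lemma:tensoring-ip} deduces $\col_{\rho\otimes\sigma_{m,b}}(n)=\Omega(\col_\rho(n))$ once $n\le 2^{m/2-2}$ --- which is where the choice $m=8\log n$ actually earns its keep. Without some analogue of this, the single-coordinate localisation cannot be converted into an agreement protocol for $\rho$ itself.

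A secondary remark on the obstacle you do flag, avoiding a $2^{O(c)}$ loss when passing from transcripts to an agreement protocol: the ``$L_1$-type bound on the Fourier expansion of $(-1)^{x\cdot y}$'' is unlikely to deliver it, since that function's Fourier $L_1$-norm on $\F_2^m\times\F_2^m$ is $2^{\Theta(m)}$. The paper avoids the $2^{O(c)}$ loss by a different device (Lemma~\ref{l_RAd}): it defines the entropy-based ``influential coordinate'' sets $L_A=\{i:\;H(X_i\mid R_A,M_A)<1/2\}$ and $L_B$ analogously, uses Lemma~\ref{lemma:entropy-difference} and a counting argument to show $|L_A|,|L_B|=O(CC_{\mathcal P}+\log n)$ with high probability, and outputs the truncated sets directly as a collision protocol. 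This is carried out for the intermediate random-access problem $\RAd$, which is then reduced to $\RAIP_n$ by a single-block Fourier argument (Lemma~\ref{l_GIPp}); the coordinate hybrid is run at the level of $\RAIP_n$ (Lemma~\ref{lemma:gip}), not directly on $\GIP_n$. Your direct-hybrid route is not unreasonable, but you must both close the extra-shared-randomness gap and replace the Fourier $L_1$ step by something quantitatively viable.
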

The proof of the above Theorem is rather involved and requires several steps. We shall give an exposition of  the main ideas behind the proof  in the next subsection, and then carry out the proof in details in the remaining three subsections. Before delving into that, however, let us observe that this  Theorem \ref{t_GIP} together with two easy propositions imply the second half of Theorem \ref{thm:main}. 

\begin{proposition} \label{proposition:gip-upper-bound}
  For a bipartite distribution~$\rho$, it holds that~$
    \RII_\rho(\GIP_n) = O(\col_\rho(n)\log n)
  $.
\end{proposition}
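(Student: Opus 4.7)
The plan is to reduce this to Proposition~\ref{lemma:repl} by first exhibiting an efficient SMP protocol for $\GIP_n$ using perfect shared randomness, and then letting the general replacement lemma do all the work.

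First I would construct an $O(\log n)$-bit SMP protocol for $\GIP_n$ in the perfect shared randomness model, as follows. Alice and Bob use their shared randomness to jointly sample a constant number $t$ of indices $i_1,\dots,i_t\in[n]$ uniformly and independently; this uses $O(\log n)$ shared random bits. Alice sends $(x_{i_1},\dots,x_{i_t})$ and Bob sends $(y_{i_1},\dots,y_{i_t})$ to the referee. Since each $x_{i_j},y_{i_j}\in\mathbb{F}_2^{8\log n}$, the total communication is $O(t\log n)=O(\log n)$. The referee computes $b_j=x_{i_j}\cdot y_{i_j}$ for $j\in[t]$ and outputs the majority. In the $\GIP_n=0$ case, each $b_j$ equals $0$ with probability at least $2/3$ (independently), and symmetrically in the $\GIP_n=1$ case, so a Chernoff/majority argument with $t$ a sufficiently large constant drives the error below $1/3$. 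Consequently
\[
\RII_{\it perf}(\GIP_n)=O(\log n).
\]

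Next I would invoke Proposition~\ref{lemma:repl}, which states that for any (partial) function $f$ on $\{0,1\}^n\times\{0,1\}^n$ and any non-degenerate bipartite distribution $\rho$,
\[
\RII_\rho(f)=O\bigl(\col_\rho(n)\,(\RII_{\it perf}(f)+\log n)\bigr).
\]
Plugging in $f=\GIP_n$ and using the bound from the previous paragraph immediately yields
\[
\RII_\rho(\GIP_n)=O\bigl(\col_\rho(n)\,(O(\log n)+\log n)\bigr)=O(\col_\rho(n)\log n),
\]
which is the claimed bound.

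There is essentially no serious obstacle: the only point requiring a small amount of care is verifying that the constant-sample majority estimator on the sampled coordinates has error below the threshold demanded by Proposition~\ref{lemma:repl} (since the replacement lemma already absorbs the randomness-reduction step internally, one only needs constant overall error for the perfect-randomness protocol). Taking $t$ a sufficiently large absolute constant makes this immediate via the Chernoff bound, and the rest is mechanical.
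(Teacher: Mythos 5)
Your proof is correct and takes essentially the same route as the paper: exhibit an $O(\log n)$-bit perfect-shared-randomness SMP protocol for $\GIP_n$ by randomly sampling coordinate(s), then invoke Proposition~\ref{lemma:repl}. The only difference is cosmetic: the paper samples a single index $i$ and has the referee output $x_i\cdot y_i$, which already has error at most $1/3$ by the promise, so the constant-$t$ majority amplification you add is harmless but unnecessary.
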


\begin{proof}
  Note that~$\GIP_n$ has a straightforward SMP protocol with public randomness
  with cost~$O(\log n)$ and error probability at most~$1/3$:
  Alice and Bob choose a common index~$i\in[n]$ uniformly at random,
  Alice sends~$x_i$ to the referee, and Bob sends~$y_i$ to the referee.
  The referee simply answers~$x_i\cdot y_i$.
  Because this protocol uses~$\log n$ bits of public randomness,
  Lemma~\ref{lemma:repl}~(i) implies the claim.
\end{proof}

\begin{proposition} \label{proposition:gip-lower-bound-log}
  For a bipartite distribution~$\rho$, it holds that~$
    \RII_\rho(\GIP_n) = \Omega(\log n)
  $.
\end{proposition}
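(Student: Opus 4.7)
The plan is to prove this by a direct reduction from the inner product function on $8\log n$-bit strings to $\GIP_n$, combined with Corollary~\ref{theorem:inner-product}. The key observation is that $\GIP_n$ contains $\IP_{8\log n}$ as a sub-problem via a very natural ``padding'' embedding: if Alice and Bob set all $n$ blocks of their input equal to single vectors $x,y\in\F_2^{8\log n}$, then every coordinate contributes the same inner-product bit $x\cdot y$, so either all $n$ coordinates give $0$ or all give $1$, which always satisfies the $2n/3$-majority condition. Hence such an input is in the domain of $\GIP_n$ and $\GIP_n$ outputs exactly $x\cdot y$.

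The first step is to make this reduction precise: given any SMP protocol $\Pi$ for $\GIP_n$ with shared source $\rho$ and error~$<1/3$, running $\Pi$ on the padded inputs $(x,\dots,x)$ and $(y,\dots,y)$ yields an SMP protocol for $\IP_{8\log n}$ with the same source $\rho$ and the same communication cost. This gives the inequality $\RII_\rho(\IP_{8\log n})\le \RII_\rho(\GIP_n)$.

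The second step is to pass from $\rho$ to $\rho_{\it perf}$. The point is that perfect shared randomness is at least as powerful as any source: given unlimited copies of $\rho_{\it perf}$, Alice and Bob can \emph{both} compute a deterministic function of the same shared uniform string to produce i.i.d.\ samples $(u,v)\sim\rho$ (each of them simply reads off one coordinate of the jointly sampled pair). Thus any $\rho$-protocol can be simulated with zero extra communication when $\rho_{\it perf}$ is available, yielding $\RII_{\rho_{\it perf}}(f)\le \RII_\rho(f)$ for every $f$.

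Combining these two inequalities gives
\[
  \RII_\rho(\GIP_n)\ \ge\ \RII_\rho(\IP_{8\log n})\ \ge\ \RII_{\rho_{\it perf}}(\IP_{8\log n}),
\]
and Corollary~\ref{theorem:inner-product} applied with parameter $m=8\log n$ lower bounds the right-hand side by $\Omega(\log n)$, with a constant independent of $\rho$. There is no real obstacle here beyond verifying that the padded inputs remain in the promise-domain of $\GIP_n$ (immediate from $x_i\cdot y_i\equiv x\cdot y$) and that the error probability is preserved by the reduction (it is, since the reduction is a syntactic substitution of inputs).
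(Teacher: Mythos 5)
Your proof is correct and follows essentially the same route as the paper: reduce $\IP_{8\log n}$ to $\GIP_n$ by padding (repeating the input vector $n$ times) and invoke Corollary~\ref{theorem:inner-product}. You merely spell out the two steps the paper leaves implicit — that the padded inputs satisfy the promise, and that $\RII_{\rho_{\it perf}}(f)\le\RII_\rho(f)$ because perfect shared randomness can locally simulate any source $\rho$ — both of which are accurate.
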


\begin{proof}
  Note that~$\IP_m$ can be reduced to~$\GIP_{n,m}$
  by repeating the input vector~$n$ times,
  and in particular~$\IP_{8\log n}$ can be reduced to~$\GIP_n$.
  By Corollary~\ref{theorem:inner-product}, this implies the claim.
\end{proof}
From the above we get the following corollary which was our main goal.
\begin{cor}[Second half of Theorem \ref{thm:main}]
  \label{t_diff}
  There exists a family of partial functions~$f_N$ from~$\{0,1\}^N\times\{0,1\}^N$
  to~$\{0,1\}$ such that for any bipartite distribution~$\rho$, it holds that
  \[
    \RII_\rho(f_N)
    = \begin{cases}
    O\left(\col_\rho\left(\frac{N}{\log N}\right)\log N\right)\qquad  &(\text{upper bound}) \\
    \Omega\left(\max\left\{\log N,\col_\rho\left(\frac{N}{\log N}\right)\right\}\right)  \qquad &(\text{lower bound})

    \end{cases}
    \, ,
  \]
  where the constants in the asymptotic notations do not depend on~$\rho$ or~$N$.
\end{cor}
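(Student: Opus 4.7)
Proof plan for Corollary \ref{t_diff}.

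The natural family to take is $f_N := \GIP_n$ for $n=n(N)$ chosen so that the input size of $\GIP_n$ is $\Theta(N)$. Recall that in Definition~\ref{d_gip} each player of $\GIP_n = \GIP_{n,8\log n}$ receives $n$ vectors in $\mathbb{F}_2^{8\log n}$, so each input is $8n\log n$ bits long. Thus setting $n := \Theta(N/\log N)$ (more precisely, take $n$ to be the largest power of two with $8n\log n \le N$, and pad the remaining input bits; the exact convention does not affect the asymptotics) we obtain a partial function from $\{0,1\}^N \times \{0,1\}^N$ to $\{0,1\}$ with the correct signature.

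With this choice the three bounds we need already appear in the excerpt. For the upper bound, Proposition~\ref{proposition:gip-upper-bound} gives
\[
  \RII_\rho(f_N) \;=\; \RII_\rho(\GIP_n) \;=\; O(\col_\rho(n)\log n) \;=\; O\!\left(\col_\rho\!\left(\tfrac{N}{\log N}\right)\log N\right).
\]
For the lower bound, Theorem~\ref{t_GIP} yields $\RII_\rho(f_N) = \Omega(\col_\rho(n)) = \Omega(\col_\rho(N/\log N))$, while Proposition~\ref{proposition:gip-lower-bound-log} gives $\RII_\rho(f_N) = \Omega(\log n) = \Omega(\log N)$. Taking the maximum of these two lower bounds produces exactly the claimed bound.

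The plan is therefore essentially bookkeeping: pick the parameter $n$ so that the input length of $\GIP_n$ matches $N$ up to constants, verify that $\log n = \Theta(\log N)$ under this choice (which is immediate since $n$ is polynomial in $N/\log N$), and invoke the three results listed above. The only mildly delicate point is that the substitution $n\mapsto N/\log N$ inside $\col_\rho(\cdot)$ is an asymptotic substitution; since $\col_\rho$ is a nondecreasing function of its argument and $n = \Theta(N/\log N)$, we have $\col_\rho(n) = \Theta(\col_\rho(N/\log N))$ up to a constant factor (using $\col_\rho(cn) \le c\,\col_\rho(n)$ from Lemma~\ref{lem:parameter_manip}), which is absorbed in the asymptotic notation. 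No serious obstacle arises; the entire content of the corollary is already packaged into Theorem~\ref{t_GIP} and Propositions~\ref{proposition:gip-upper-bound} and~\ref{proposition:gip-lower-bound-log}.
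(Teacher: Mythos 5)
Your proposal is correct and follows exactly the same route the paper takes: set $f_N = \GIP_n$ with $n = \Theta(N/\log N)$ so that the input length is $N$, then combine Proposition~\ref{proposition:gip-upper-bound} for the upper bound with Theorem~\ref{t_GIP} and Proposition~\ref{proposition:gip-lower-bound-log} for the two lower bounds, noting $\log n = \Theta(\log N)$ and $\col_\rho(n) = \Theta(\col_\rho(N/\log N))$ via monotonicity and Lemma~\ref{lem:parameter_manip}. The paper states the corollary as an immediate consequence without writing out these bookkeeping steps, so your version is, if anything, slightly more explicit.
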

The reason for appearance of $\frac{N}{\log N}$ terms in the above  statement is that the input size to $\GIP_n$ is $N=nm= 8n\log n$ which essentially is the main source of the poly-logarithmic gap between the upper and lower bounds in the above. However, note that the upper and lower bounds in Corollary \ref{t_diff} are still quite close, because thanks to  inequality~(\ref{eq:parameters-1}) we have ~$\col_\rho(N/\log N)=\Omega(\col_\rho(N)/\log N)$---which means that  the result is tight up to a factor of~$O(\log^2 N)$ for every bipartite distribution~$\rho$.

\subsection{The proof outline of Theorem \ref{t_GIP}}
The proof of Theorem \ref{t_GIP} goes roughly as follows. Consider a hypothetical protocol for~$\GIP_n$ with complexity $o(\col_{\rho}(n))$. We will consider the behaviour of the protocol under the uniformly random input distribution---ignoring the promise on the input of~$\GIP_n$ for the moment. Because of the low communication complexity of the protocol we can argue (using a ``rounding argument" which extracts a collision protocol from a $\GIP_n$ protocol by looking at the influential coordinates) that for a typical $i\in[n]$ the referee can guess the value of $x_i\dt y_i$ correctly only with probability at most $1/2+O(1/\poly(n))$ where the $\poly(n)$ here is a polynomial with \emph{super-linear growth}.

Next, we will use the repeat the above argument inductively $\Omega(n)$ times to conclude that there exists a set $L\subseteq[n]$ of size at least $2n/3$, such that the referee can distinguish the case when $x_i\dt y_i=0$ for all $i\in L$ from the case when $x_i\dt y_i=1$ for all $i\in L$  only with probability $1/2+o(1)$, where here we crucially rely on the fact the bias of the protocol in the previous step was only $o(1/n)$. Finally, if we define the input distribution $\mu$ to be uniform over $\set{(x,y)\colon \forall i\in L:x_i=y_i}\cup\set{(x,y)\colon\forall i\in L:x_i\ne y_i}$, then each element in the support of $\mu$ is a valid input to~$\GIP_n$, but the protocol under consideration makes a mistake with respect to $\mu$ with large probability which is in contradiction with the initial hypothesis. 

The above outline  essentially  contains a full description of our argument except for one subtle but important technical point: It turns out that as the hybrid argument progresses, turning the distribution of $(x_i,y_i)$ from uniform to $x_i\cdot y_i=0$ (say) for many coordinates $i$, the players may in principle be able to use the extra shared randomness offered  by conditioning on the previous rounds to their advantage. All of this essentially amounts to the fact, that in the midst of the hybrid argument the players instead of just having access  to $\rho$, instead have access to the more powerful source $\rho\otimes \sigma_0$ (or $\rho\otimes \sigma_1$)---where $\sigma_b$ denotes the uniform distribution on $(x^*,y^*)\in \F_2^m\times \F_2^m$ conditioned on $x^*\cdot y^*=b$. In fact, we do not know any way to show that the players cannot take advantage of the newly introduced shared randomness caused by conditioning, rather we use some ideas related to maximum correlation to show that, for our specific choice of parameter  $m=8\log n$, one has
\begin{equation}\label{eq:collision_extra_SR}
	 \col_{\rho\otimes \sigma_0}(n)=\Omega(\col_{\rho}(n)), \qquad \col_{\rho\otimes \sigma_1}(n)=\Omega(\col_{\rho}(n)),
\end{equation}
which suffices for our application. 

We note that similar issues to the one above, i.e.~the  emergence of additional undesired correlations between parties as a result of conditioning, occur in many different contexts in the theoretical computer science literature, for example in the context of parallel repetition \cite{hol07,raz98} and  also in many other places in communication complexity \cite{braverman2013, jain2011}. However, the method we use here to handle this issue is quite different from the prior work, and hence, could potentially be of interest for those applications as well.

\subsection{Hardness of $\RAIP_n$: the inductive step}

In our analysis, it is convenient to use a slight variant of the standard SMP model, which we call the \emph{pseudo-SMP} model, with the following properties:
\begin{itemize}
 \item The referee can receive his own portion of input.
 \item If shared randomness is available, the referee can see both Alice's and Bob's part of the shared distribution.
\end{itemize}

The crucial condition here is the latter, i.e.~the fact that Alice and Bob's shared randomness (but not inputs) is visible to the referee. Note that in  SMP, the players and the referee cooperate in order to solve the problem, giving the referee the above extra power only makes the model stronger; thus, any lower-bound established for pseudo-SMP immediately implies a corresponding lower bound for the usual SMP model (possibly with some input the referee).\footnote{The reason we do this here is that to prove the lower bound in Lemma \ref{lemma:RAIP} (even for just basic SMP), we seem to need a lower bound  for $\RAd$ against the slightly stronger pseudo-SMP model.}

For our lower bound it is natural to introduce some auxiliary communication tasks. First consider the following general ``guessing" information theoretic task, which is a two-player variant of a problem considered previously by Gavinsky, Kempe, Regev, and de~Wolf~\cite{GavKemRegWol09SICOMP}:

\begin{definition}[Random access problem]
\label{d_b-dist}
In $\RAd$ problem, Alice receives  $X\in\01^n$, Bob receives $Y\in\01^n$, and the referee receives $I\in[n]$.  A pseudo-SMP protocol solves $\RAd$ problem successfully if the referee's output $Z\in\01^2\cup\set{\bot}$ satisfies
\[ \PR_{X,Y\sim\U}[Z=(X_i,Y_i)\mid I=i,\;Z\neq\bot]\ge1-\delta, \qquad \forall \, i\in [n]. \]
\end{definition}

Next we introduce the Random Access Inner Product Problem, which is slightly less information theoretic, and closer to $\GIP_n$.

\begin{definition}[Random access inner product problem]
\label{d_gip'}
Let $n,m\in\NN$ and suppose $x_i,y_i\in\F_2^m$ for $i\in [n]$
Denote $x=(x_1\dc x_n)$ and $y=(y_1\dc y_n)$.
Then $\RAIP_{n,m}(x,y,i)\deq\ x_i\dt y_i$.
\end{definition}
 We will write~$\RAIP_n$ to address $\RAIP_{n,8\log n}$ and let $I=i$ denote the event that the third input to $\RAIP_{n}(\cdot, \cdot, I)$ is fixed to $i$.  

The main purpose of this subsection is to prove the next lemma, which states that~$\RAIP_n$ is a hard problem; this is essentially all we need in the next subsection to run our inductive argument for  lower bounding $\RII_\rho(\GIP_n)$  there. 
\begin{lemma}
  \label{lemma:RAIP}
  Let ~$\Cl P$ be a pseudo-SMP protocol for~$\RAIP_n$
 using source~$\rho$ and communication cost~$CC_{\mathcal P}$ such that
  $CC_{\mathcal P}+\log n\leq 41\cdot \col_\rho(n)$. Then there exists $i\in [n]$ such that 
 \begin{equation}
   \gamma_i \leq \frac{1}{2}+O\left(\frac{CC_{\mathcal P}+\log n}{ n\col_\rho(n)}\right)+o\left(\frac{1}{n}\right),\footnote{The implied constants in $O(\cdot)$ and $o(\cdot)$ here are universal and, in particular, do not depend on $\rho$ or $n$. }
    \end{equation}
    where $\gamma_i$ is the probability of success of $\mathcal P$ given $I=i$ and uniformly random inputs  $x,y \in (\F_2^m)^n$. 
    \end{lemma}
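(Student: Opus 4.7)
The plan is to convert the pseudo-SMP protocol~$\mathcal P$ for~$\RAIP_n$ into a family of agreement protocols for~$\rho$, one per index~$i\in[n]$, and invoke Lemmas~\ref{lemma:col-agr} and~\ref{lem:parameter_manip} to translate a bound on~$\agr_\rho$ into a bound on~$\col_\rho(n)$. Throughout, I fix the shared randomness $R=(R_A,R_B)$ (visible to the referee in the pseudo-SMP model) and condition on $x_i$ and $y_i$ being nonzero vectors in~$\mathbb F_2^m$; since $\Pr[x_i=0\lor y_i=0]\le 2n^{-8}$ under uniform inputs with $m=8\log n$, this conditioning makes~$x_i\cdot y_i$ an exactly unbiased bit and contributes at most the~$o(1/n)$ term to the final estimate.

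For each~$i\in[n]$, I would build an agreement protocol~$(f_i,g_i)$ as follows. Using the non-degeneracy of~$\rho$, Alice and Bob augment their samples of~$\rho$ with local private randomness, draw fresh uniform inputs~$x^\star,y^\star\in(\mathbb F_2^m)^n$, and compute the messages $a=M_A(x^\star,R_A)$ and $b=M_B(y^\star,R_B)$ of~$\mathcal P$. Their $\{0,1\}$-valued outputs~$f_i,g_i$ are rounded indicators extracted from the Fourier coefficients of the conditional distributions $P^{a,R}_i,Q^{b,R}_i$ of~$x_i,y_i$ given the respective messages. The design is guided by the identity
\[
  2(\gamma_i-\tfrac12)\;=\;\mathbf{E}_{R,a,b}\bigl|\mathbf{E}_{y_i\sim Q^{b,R}_i}\,\widehat{P^{a,R}_i}(y_i)\bigr|\;\pm\;o(1/n),
\]
which expresses the referee's optimal advantage at~$i$ as the expected magnitude of a signed Fourier inner product between Alice's and Bob's sides. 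The absolute value is handled by a correlated sign-bit extracted from~$\rho$, making $\mathbf{E}[f_ig_i]=\Omega(\epsilon_i)$ where $\epsilon_i=\gamma_i-\tfrac12$.

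The joint cost is bounded via Parseval and the chain rule. The per-index Fourier weight controlling $\mathbf{E}[f_i]$ is, up to constants, $I(x_i;M_A\mid R_A)$; summing and applying the chain rule, $\sum_i I(x_i;M_A\mid R_A)\le I(x;M_A\mid R_A)\le H(M_A)\le CC_{\mathcal P}/2$, and symmetrically for Bob, so $\sum_i\mathbf{E}[f_i+g_i]\le O(CC_{\mathcal P}+\log n)$. By pigeonhole some~$i^\star$ has an agreement protocol of cost $O((CC_{\mathcal P}+\log n)/n)$ and success $\Omega(\epsilon_{i^\star})$, hence $\agr_\rho(\Omega(\epsilon_{i^\star}))\le O((CC_{\mathcal P}+\log n)/n)$. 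Combining $\col_\rho(n,p)=\Theta(n\,\agr_\rho(p))$ (Lemma~\ref{lemma:col-agr}) with the amplification $\col_\rho(n,1-(1-p)^m)\le m\col_\rho(n,p)$ (Lemma~\ref{lem:parameter_manip}) with $m=\Theta(1/(n\epsilon_{i^\star}))$ yields $\col_\rho(n)\le O((CC_{\mathcal P}+\log n)/(n\epsilon_{i^\star}))$, which rearranges to the claimed bound. The hypothesis $CC_{\mathcal P}+\log n\le 41\col_\rho(n)$ ensures $\epsilon_{i^\star}=O(1/n)$, keeping the amplification step in the regime where $1-(1-p)^m\approx mp$ is accurate.

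The main obstacle is producing a decomposable pair $(f_i,g_i)$ whose agreement probability truly recovers the referee's advantage at~$i$ rather than an $L^2$ surrogate that would only yield the weaker $\sqrt{CC_{\mathcal P}/n}$-type bound obtainable from a direct Pinsker/Fano argument. The signed Fourier inner product, the correlated sign-bit construction, and the Parseval-versus-mutual-information comparison that keeps the per-index cost linear (not square-rooted) in $I(x_i;M_A\mid R_A)$ are the delicate points of the proof.
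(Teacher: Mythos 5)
Your proposal takes a genuinely different route from the paper, but the two steps you flag as ``the delicate points'' are exactly where the argument breaks down, and you do not supply them.

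For context, the paper's proof of Lemma~\ref{lemma:RAIP} factors through two intermediate lemmas. Lemma~\ref{l_RAd} establishes hardness of~$\RAd$ directly: it defines sets of \emph{influential coordinates} $L_A=\{i: H(X_i\mid R_A,M_A)<1/2\}$ and~$L_B$ analogously, bounds their sizes by a chain-rule/entropy argument and Lemma~\ref{lemma:entropy-difference}, and then observes that $(L_A',L_B')$ (truncated versions) constitute a \emph{collision} protocol whose collision probability $\alpha$ controls $\Pr[Z\neq\bot]$ via Markov; the inequality $\col_\rho(n,\alpha-\cdot)\le T$ combined with Lemma~\ref{lem:parameter_manip} bounds~$\alpha$. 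Lemma~\ref{l_GIPp} then reduces $\RAd$ to $\RAIP_n$ by a Fourier/list-decoding trick in which the referee, who sees all shared randomness, evaluates the $\RAIP_n$ referee against every possible opposite-side input vector $(s_1,\dots,s_n)$ and decodes by correlation. Your plan tries to skip the $\RAd$ detour and build per-index agreement protocols directly from the $\RAIP_n$ protocol, which would be a genuinely different (and, if it worked, arguably cleaner) decomposition.

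The first gap is the sign-bit. You claim ``The absolute value is handled by a correlated sign-bit extracted from~$\rho$, making $\mathbf{E}[f_ig_i]=\Omega(\epsilon_i)$.'' The sign of the Fourier inner product $\mathbf{E}_{y_i\sim Q^{b,R}_i}\widehat{P^{a,R}_i}(y_i)$ depends jointly on Alice's message~$a$ and Bob's message~$b$, which are computed from \emph{disjoint} information ($x^\star,R_A$ vs.\ $y^\star,R_B$). Neither party can privately compute that sign, and a fresh $\rho$-sample does not anticipate a bit that is itself a nontrivial function of both sides' private computations. Without this step, the agreement probability can cancel to zero even when $\epsilon_i$ is large, and the reduction collapses. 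The paper avoids the issue entirely by passing through $\RAd$, where the referee's task (output $(X_i,Y_i)$ or $\bot$) makes thresholded entropy sets the natural output and there is no absolute value to remove.

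The second gap is the Parseval-versus-mutual-information comparison. You assert that the per-index Fourier weight controlling $\mathbf{E}[f_i]$ is ``up to constants'' $I(x_i;M_A\mid R_A)$, i.e.\ a linear comparison rather than a $\sqrt{\cdot}$ one. In general the inequalities run the wrong way: $\chi^2$-divergence (which controls Fourier/$L^2$ mass) \emph{upper-bounds} KL-divergence, so one cannot bound $L^2$ mass by mutual information without further assumptions. The paper's mechanism for avoiding exactly this loss is not to sum $L^2$ masses but to threshold at $H(X_i\mid\cdot)<1/2$ and apply Markov; that gives a set-valued (collision) object whose size is linear in $CC_{\mathcal P}$, rather than an $L^2$ object that would incur a square root. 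Since you explicitly defer both of these steps as ``delicate'' without providing them, the proposal is a plan with the hard parts flagged rather than a proof, and as stated it would not establish the lemma.
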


To understand the statement of this lemma,
consider the case where~$\col_\rho(n)=\omega(\log n)$.
Then the lemma claims that if the cost~$CC_{\mathcal P}$
of a communication protocol~$\mathcal P$ for~$\RAIP_n$ is too small,
i.e.,~$CC_{\mathcal P}= o(\col_\rho(n))$,
then there is a coordinate~$i\in[n]$ such that~$\mathcal P$ answers~$x_i\cdot y_i$
correctly with probability at most~$1/2+o(1/n)$.

To prove Lemma~\ref{lemma:RAIP}, we first show that~$\RAd$ is a hard problem and then reduce~$\RAd$ to~$\RAIP_n$, thus establishing that the latter is also hard. More precisely, we prove the following. 

\begin{lemma}
\label{l_RAd}
There exists a constant~$\delta>0$ such that  the following holds.
Let~$\Cl P$ be a pseudo-SMP protocol for $\RAd$ using source~$\rho$ and communication cost~$CC_{\mathcal P}$ such that $36(CC_{\mathcal P}+\log n)\le \col_\rho(n)$. Then there exists~$i\in[n]$ such that the probability of referee outputting anything beside $\bot$ satisfies 
\[ \Pr_{x,y}[ \mathcal{P}(x, y, i)\neq \bot]= O\left(\frac{CC_{\mathcal P}+\log n}{n\col_\rho(n)}\right)+ o\left(\frac{1}{n}\right),\]
where $x,y\in \{0,1\}^n$ are chosen uniformly at random.
\end{lemma}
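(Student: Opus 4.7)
The plan is to prove the lemma by contrapositive: assume that $p_i := \Pr_{x,y}[\mathcal P(x,y,i)\ne\bot]$ exceeds the claimed bound for every $i\in[n]$, and extract from $\mathcal P$ a collision protocol for $\rho$ whose output size contradicts $\col_\rho(n)=\col_\rho(n,1/n)$.

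The extraction works as follows. Each player privately samples a ``fake input'': Alice samples $X\sim\U_{\{0,1\}^n}$ and computes her message $a=A(X,r_A)$; Bob samples $Y\sim\U_{\{0,1\}^n}$ and computes $b=B(Y,r_B)$. For a threshold $T$ of order $\min_i p_i$, Alice's output is
\[
S_A := \bigl\{i\in[n] : \alpha_i(X,r_A)\ge T\bigr\}, \quad \alpha_i(X,r_A) := \Pr_{Y',r_B'\mid X,r_A}\bigl[\text{referee at position }i\text{ outputs non-}\bot\text{ with first coordinate }=X_i\bigr],
\]
and $S_B$ is defined symmetrically via $\beta_i(Y,r_B)$. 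The correctness condition $\Pr[Z_i=(X_i,Y_i)\mid Z_i\ne\bot]\ge 1-\delta$, combined with a splitting of the ``success at $i$'' event into its two coordinate matches and a Markov-style averaging, implies $\Pr[i\in S_A\cap S_B]\ge\Omega(p_i)$ for every $i$.

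To bound the output sizes, I would combine the pointwise estimate $\sum_i\alpha_i(X,r_A)\le |T(M,\mathrm{SR})|$ --- the number of non-$\bot$ positions in a simulated run --- with the information-theoretic inequality $\sum_i I((X_i,Y_i);M\mid \mathrm{SR})\le CC_\mathcal P$ (chain rule plus independence of $(X,Y)$) and Fano's inequality giving $I((X_i,Y_i);Z_i)=\Omega(p_i)$. This controls $\Ex[|S_A|]$, and a Markov-based truncation converts it into a worst-case output-size bound. The resulting collision protocol, together with Lemma~\ref{lemma:col-agr} and the parameter manipulations of Lemma~\ref{lem:parameter_manip}, provides the desired contradiction with $\col_\rho(n)$; the $\log n$ term in the lemma comes from a Newman-style randomness reduction (Lemma~\ref{lemma:reduce-randomness}) applied before the extraction.

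The main obstacle is to extract the factor $\col_\rho(n)$ in the final bound. A purely information-theoretic argument yields only $\min_i p_i=O(CC_\mathcal P/n)$, which matches the lemma only when $\col_\rho(n)=O(1)$; the extra factor $\col_\rho(n)$ must come from the collision-protocol reduction itself, reflecting the intuition that each non-$\bot$ position ``identified'' by $\mathcal P$ carries an agreement cost governed by the quality of $\rho$. The hypothesis $36(CC_\mathcal P+\log n)\le\col_\rho(n)$ will be used in the final step to ensure that the output size of the extracted collision protocol, after truncation and parameter balancing, is strictly less than $\col_\rho(n)$, producing the needed contradiction.
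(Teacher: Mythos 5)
Your high-level plan is the right one and matches the paper's: extract a collision protocol from $\mathcal P$, bound its output size by $O(CC_{\mathcal P}+\log n)$, relate $\Pr[Z\ne\bot]$ to the collision probability, and then use the parameter manipulations and the definition of $\col_\rho(n)$ to close. You also correctly identify that the factor of $\col_\rho(n)$ must come from the collision-protocol reduction rather than from a purely information-theoretic argument. However, the specific way you define the collision protocol breaks the argument at two places.

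First, the output-size bound does not go through with your thresholded sets $S_A=\{i:\alpha_i(X,r_A)\ge T\}$. Your pointwise claim $\sum_i\alpha_i(X,r_A)\le|T(M,\mathrm{SR})|$ is not useful: the referee, given $I=i$, can output a non-$\bot$ answer for every $i$, so nothing bounds the quantity on the right. The subsequent appeal to $\sum_i I((X_i,Y_i);M\mid\mathrm{SR})\le CC_{\mathcal P}$ is sound in itself, but it does not control $|S_A|$ through your threshold. The reason is that your threshold $T$ is of order $\min_i p_i$, which is $o(1)$; the event $\alpha_i(X,r_A)\ge T$ for such a tiny $T$ gives essentially no lower bound on $I(X_i;M_A\mid R_A)$ (given $R_A,M_A$, the referee's output is independent of $X_i$, and even a message that reveals nothing about $X_i$ can have $(Z_i)_1=X_i$ with probability up to roughly $\tfrac12\Pr[Z_i\ne\bot]$, which already clears a threshold of order $\min_i p_i$). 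The paper resolves exactly this issue by replacing the probability threshold with an \emph{entropy} threshold on Alice's local view, $L_A=\{i:H(X_i\mid R_A=r_A,M_A=m_a)<\tfrac12\}$, which is a constant-gap condition: it forces a constant amount of information per element of $L_A$, so $\Pr[|L_A|\ge t]<2^{CC_{\mathcal P}-t/2}$ follows directly from a chain-rule computation, and truncating at $T=2CC_{\mathcal P}+4\log n$ yields the needed worst-case output size.

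Second, your step ``$\Pr[i\in S_A\cap S_B]\ge\Omega(p_i)$'' is asserted but not justified, and with your definitions it is not clear it holds: $\alpha_i$ depends on $(X,r_A)$ and $\beta_i$ on $(Y,r_B)$, and while $r_A,r_B$ are correlated through $\rho$, there is no argument forcing the two threshold events to co-occur with probability comparable to $p_i$ rather than, say, $p_i^2$. The paper instead conditions on the event $\{I=i,Z\ne\bot\}$: correctness of the $\RAd$ protocol (with $\delta$ small enough that $h(\delta)<1/6$) gives $H(X_i\mid R_AR_BM_AM_B,I=i,Z\ne\bot)<1/6$, and since $X_i\perp(R_B,M_B)$ given $(R_A,M_A)$, a Markov argument gives $\Pr[i\in L_A\mid I=i,Z\ne\bot]\ge2/3$ (and symmetrically for $B$), hence $\Pr[i\in L_A\cap L_B\mid I=i,Z\ne\bot]\ge1/3$. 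Unwinding the conditional probability then yields $\Pr[Z\ne\bot\mid I=i]\le3\Pr[i\in L_A\cap L_B]$, which is exactly the link to the collision probability that your protocol needs but does not establish. If you keep your overall skeleton but swap in the entropy-threshold sets $L_A,L_B$ and this conditioning argument, the proof closes as intended.
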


\begin{proof}
 The main idea is to use the protocol $\mathcal P$ for $\RAd$ to produce  a good collision protocol (over domain of size $n$)  by considering the influential coordinates of $\mathcal P$. Following the above idea, it turns out the collision probability of the protocol we obtain will be directly related to probability  that the referee did not output $\bot$ in the original $\RAd$ protocol which implies the desired result. The details are as follows.
 
 Let $(\Ra,\Rb)\sim\rho^{\otimes \ell}$ denote the shared randomness between Alice and Bob.
  Let~$\Ma$ and $\Mb$ be the random variables
  representing Alice and Bob's messages  to the referee. We also let $Z$ denote the referee's output.

 Define random variables~$L_A$ and~$L_B$ as a  function of $\Ra=r_A$, $\Ma=m_a$, $\Rb= r_B$ and $\Mb= m_b$ by 
  \begin{align*}
    L_A&=\left\{i\in[n]\colon H(X_i|\Ra=r_A,\Ma=m_a)<\frac12\right\}, \\
    L_B&=\left\{i\in[n]\colon H(Y_i|\Rb=r_B,\Mb=m_B)<\frac12\right\}.
  \end{align*}
  In other words, $L_A$ (resp.~ $L_B$) consists of the coordinates of Alice's inputs  (resp.~Bob's inputs) where the referee, who sees both of $r_A, m_A$ (resp.~$r_B, m_B$) by definition of pseudo-SMP model\footnote{Recall that in the pseudo-SMP model, 
  the referee sees all the shared randomness 
  as well as the messages from Alice and Bob.} have substantial information about. 

  Since we are assuming $M_A$ and $M_B$ are small, we expect that typically the sets $L_A$ and $L_B$ are also small. More formally, we show that   for any~$\ra\in U^\ell$
  \begin{equation}
    \PR[\abs{L_A}\ge t\mid\Ra=\ra]<\cdot2^{CC_{\mathcal P}-t/2}.
    \label{eq:RA-3}
  \end{equation}
  where the probability is over a random choice of $x\in \{0,1\}^n$. By the definition of~$L_A$ we have 
  \begin{align*}
    &
    H(X\mid\abs{L_A}\ge t, \Ra=\ra)
    \\
    &\le
    H(M_A\mid\abs{L_A}\ge t,\, \Ra=\ra)
    +H(X\mid M_A,\;\abs{L_A}\ge t, \,\Ra=\ra)
    \\
    &=
    CC_{\mathcal P}+ \sum_{i=1}^n H(X_i | X_{<i}, M_A , \, |L_A|\geq t,\, \Ra=\ra)
    \\ 
    & \leq CC_{\mathcal P} + \sum_{i=1}^n H(X_i | M_A , \, |L_A| \geq t, \, \Ra=\ra)
    \\
    & \leq CC_{\mathcal P}+ n -\frac{t}{2}.
  \end{align*}
Now we are ready to describe our collision protocol $T=2C+4\log n$
  By Lemma~\ref{lemma:entropy-difference},
  this implies inequality~(\ref{eq:RA-3}).

  Because inequality~(\ref{eq:RA-3}) holds for every choice of~$\ra$,
  it holds without conditioning, i.e.
  \[ \PR[\abs{L_A}\ge t]<2^{CC_{\mathcal P} -t/2}.\]
  
  Setting $T=2C+4\log n$ and using the symmetric argument for Bob, we obtain that
  \begin{equation}
    \PR[\abs{L_A}\ge T\vee\abs{L_B}\ge T]
    <
    \frac{2}{n^2}.
    \label{m_Lshort}
  \end{equation}
  Recall that our goal was to ``extract" from $\mathcal P$ a \emph{collision protocol} $\mathcal P_{col}$. This is given as follows: In $\mathcal P_{col}$ Alice and Bob pick random inputs $X, Y\sim\U_{\01^n}$ and respective half of the shared randomness $(R_A, R_B)\sim \rho^{\otimes \ell}$, and  sample~$\Ma$ and $\Mb$ by simulating the action of the players in $\mathcal P$.  The players can each locally compute as $L_A, L_B$ as these are a deterministic functions of $\Ma\Ra$ and  $\Mb\Rb$ respectively. The output of the protocol is  $L_A',L'_B\subseteq [n]$, where 
  \[ L'_A  \deq \begin{cases} L_A \qquad &|L_A|<T \\
 \emptyset \qquad & |L_A|\geq T	 
 \end{cases}, \qquad L'_B  \deq \begin{cases} L_B \qquad &|L_B|<T \\
 \emptyset \qquad &|L_B|\geq T
 \end{cases}.\]
\emph{The analysis of collision protocol.}   Let~$\alpha=\min_{i\in[n]}\PR[i\in L_A\cap L_B]$. From~(\ref{m_Lshort}) it follows that $\forall i\in[n]:\PR[i\in L_A'\cap L_B']>\alpha -\frac{2}{n^{2}}$.

  This means that
  \[
    \col_\rho\left(n,\alpha-\frac{2}{n^{2}}\right)
    \le T.
  \]
  By inequality~(\ref{eq:parameters-1}), this implies that
  \[
    \col_\rho(n)
    \le
    T\cdot \ceil*{\frac{-\ln(1-1/n)}{-\ln(1-\alpha+2/n^{2})}}
    \le
   T\cdot \left(1+\frac{2/n}{\alpha-2/n^{2}}\right),
  \]
  where the second inequality follows
  from~$-\ln(1-x)\le2x$ for~$0<x\le1/2$ and~$-\ln(1-y)\ge y$ for~$y<1$.
  If~$T\le \col_\rho(n)$, then this implies that
  \begin{equation}\label{eq:first_part_analysis}
    \alpha\le\frac{4}{n}\cdot\frac{T}{\col_\rho(n)}+\frac{2}{n^{2}}.
  \end{equation}
 Next we need to related $\alpha$ to the probability that the referee outputs $\bot$ in $\mathcal P$. Combined with (\ref{eq:first_part_analysis}) this will finish the proof. Suppose $\delta<\frac{1}{41}$. Since ~$\mathcal{P}$ is a correct protocol for~$\RAd$
  and since ~$h(x)=-x\log x- (1-x)\log(1-x)<1/6$ for~$0\le x\le\delta$, we have 
  \[
    H(X_i\mid R_AR_BM_AM_B,I=i, Z\neq \bot)
    <\frac16.
  \]
Since~$X_i$ and~$R_BM_B$ are independent
  conditioned on~$R_A$ and~$M_A$, we immediately see  that
  \[
    \E_{r_A, m_B}{H(X_i| \Ra=r_A,\Ma=m_A)\mid I=i\, \, Z\neq\neg\bot}<\frac16,
  \]
  and so by Markov's inequality we see that 
  \[
    \PR[i\in L_A\mid I=i, Z\neq \bot]
    \ge2/3.
  \]
  Using the similar argument for Bob  and  the union bound, we obtain that
  $\PR[i\in L_A\cap L_B\mid I=i,\,  Z\neq \bot]\ge1/3$. Thus, for all $i\in [n]$ we have 
  \begin{align*}
  	\PR[  I=i,\,  Z\neq \bot] &\leq  3\cdot \PR[ i\in L_A\cap L_B ,\,  I=i,\,  Z\neq \bot]  \\
  	& \leq \PR[ i\in L_A\cap L_B ,\,  I=i] \\
  	&=  \Pr[ i\in L_A\cap L_B] \cdot \Pr[I=i],
  \end{align*}

  which after dividing by $\Pr[I=i]=\frac{1}{n}$  gives us
  \begin{equation}
    \PR[Z\neq \bot\mid I=i]
    \le
    3\PR[i\in L_A\cap L_B]=3\alpha.
    \label{eq:RA-2}
  \end{equation}
Combining~(\ref{eq:RA-2}) and (\ref{eq:first_part_analysis}) we have 
  \[
    \PR[Z\neq \bot\mid I=i]
    \le 3\alpha 
    = O\left(\frac{CC_{\mathcal P}+\log n}{n\cdot \col_\rho(n)}+\frac{1}{n^{2}}\right),
  \]
  which finishes the proof.
\end{proof}

It remains to see that $\RAd$ can be reduced to~$\RAIP_n$.  Here is where the fact that our lower bound for $\RAd$ in Lemma \ref{l_RAd} is against the stronger pseudo-SMP model comes handy as our reduction from $\RAd$ to $\RAIP_n$ only produces a pseudo-SMP $\RAd$ protocol---even if the original $\RAIP_n$ protocol was a standard SMP protocol. Also, here is one place in the argument where the fact that the source $\rho$ is non-degenerate becomes important because by the definition of pseudo-SMP and the fact that $\rho$ is non-degenerate,  we can assume Alice and the referee (and symmetrically also Bob and the referee) share as much shared randomness as desired.

\begin{lemma}
\label{l_GIPp}
Let~$\Cl P$ be a pseudo-SMP protocol
for~$\RAIP_n$ using shared distribution~$\rho$, and define $\gamma_i$ be the probability, over the uniform
input distribution over ~$(x,y)\in(\F_2^m)^n\times(\F_2^m)^n$, that~$\Cl P$ gives the correct answer conditioned on $I=i$. 
Then, there exists a pseudo-SMP protocol $\Cl P'$
with the same communication cost
and solves~$\RAd$ with $\delta=o(1/n)$ such that 
\begin{equation}\label{eq:output_bad_reduction}
	\Pr[ \mathcal P'(X, Y, I)\neq \bot]\geq 2\gamma_I-1-o\left(\frac{1}{n}\right).
\end{equation}

\end{lemma}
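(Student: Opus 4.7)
The plan is to construct $\Cl P'$ by embedding the $\RAd$ inputs $X,Y$ into $\RAIP_n$ inputs $x,y$ in a way that preserves the marginal uniformity of $(x,y)$, running $\Cl P$ unchanged, and letting the referee decode from the transcript using the shared randomness visible to her in the pseudo-SMP model. Because $\rho$ is non-degenerate, Alice and the referee can arrange a supplementary stream of shared uniform bits (invisible to Bob), and symmetrically Bob and the referee can share uniform bits invisible to Alice; I would use this to give Alice a uniform $s_A\in\F_2^m$ shared with the referee and Bob a uniform $s_B\in\F_2^m$ shared with the referee, disjoint from the randomness consumed by $\Cl P$ itself.

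Alice sets $x_I := X_I\cdot e_1 + s_A$ for a fixed $e_1\in\F_2^m$ with $e_1\cdot e_1=1$, and draws each $x_j$ ($j\neq I$) uniformly from private randomness; Bob acts symmetrically with $Y_I,s_B$. Since $s_A,s_B$ are uniform, $(x,y)$ is distributed uniformly over $(\F_2^m)^n\times(\F_2^m)^n$, so by the hypothesis on $\Cl P$ its output $B$ satisfies $B=x_I\cdot y_I$ with probability exactly $\gamma_I$ conditional on $I=i$. Alice and Bob then execute $\Cl P$ with its original messages, which keeps the communication cost identical.

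For the decoding, let $\alpha:=e_1\cdot s_A$, $\beta:=e_1\cdot s_B$, and $\kappa:=s_A\cdot s_B$, all computable by the referee. A short expansion in $\F_2$ gives
\[ x_I\cdot y_I \;=\; (X_I+\alpha)(Y_I+\beta) + \alpha\beta + \kappa. \]
The referee forms $\tilde B:=B+\alpha\beta+\kappa$ and outputs $(1-\alpha,1-\beta)$ if $\tilde B=1$, otherwise outputs $\bot$. When $\Cl P$ is correct, $\tilde B=(X_I+\alpha)(Y_I+\beta)$, which equals $1$ precisely when $(X_I,Y_I)=(1-\alpha,1-\beta)$, in which case the output is correct; when $\Cl P$ is wrong, the relation flips sign and non-$\bot$ outputs are wrong exactly when $(X_I,Y_I)\neq(1-\alpha,1-\beta)$. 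Partitioning along these two events reduces both $\Pr[\Cl P'(X,Y,I)\neq\bot]$ and its conditional error into quantities governed by $\gamma_I$ and by the protocol's conditional success on the codimension-two subset $\{(x_I)_1=(y_I)_1=1\}$.

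The step I expect to be the main obstacle is ensuring the decoding error conditional on non-$\bot$ is $o(1/n)$ and the coverage attains the full $2\gamma_I-1$ lower bound simultaneously, since an adversarial $\Cl P$ can preferentially err on the specific subset triggering a non-$\bot$ output. To circumvent this, I would randomize the encoding direction by replacing the fixed $e_1$ with a uniformly random $r\in\F_2^m$ shared among all three parties (drawn from further samples of $\rho$ via non-degeneracy); averaging the protocol's conditional correctness over the uniform $r$ forces it to agree with the marginal $\gamma_I$ up to $\asO{2^{-m}}$-size error, and since $m=8\log n$ this is $\asO{n^{-8}}$, absorbed into the $o(1/n)$ slack along with low-probability degenerate events (e.g.\ $s_A=0$, $r=0$, or the small gap between the distribution of $x_I\cdot y_I$ on uniform inputs and the uniform bit).
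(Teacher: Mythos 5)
There is a protocol-level flaw that sinks the construction before any analysis: Alice and Bob do not receive $I$ in the (pseudo-)SMP model. In Definition~\ref{d_b-dist} the coordinate $I$ is the \emph{referee's} input; Alice only sees $X\in\{0,1\}^n$ and her part of the shared randomness. Your step ``Alice sets $x_I := X_I\cdot e_1 + s_A$ and draws each $x_j$ ($j\neq I$) uniformly'' therefore has Alice condition her message on a quantity she does not possess. The natural repair is to apply the same encoding at every coordinate ($x_j := X_j\cdot e_1 + s_{A,j}$ for all $j$, with independent shared $s_{A,j}$), which does produce a uniform $\RAIP_n$ input; but then the second and larger problem appears.

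Even with that fix, your decoder extracts information from the single bit $B = Z(R_A,R_B,M_A,M_B,I)$. Your referee outputs non-$\bot$ only on the event $\tilde B=1$, i.e.\ (in the case where $\mathcal P$ answers correctly) on the event $(e_1\cdot x_I)(e_1\cdot y_I)=1$, which has probability roughly $1/4$ over uniform $(x_I,y_I)$. Hence $\Pr[\mathcal P'\neq\bot]\lesssim 1/4$ regardless of $\gamma_I$; randomizing $e_1\mapsto r$ does not help, since the constraint $r\cdot x_I = r\cdot y_I = 1$ is still a roughly quarter-measure event. This is far below the required $2\gamma_I-1-o(1/n)$, which must approach $1$ when $\gamma_I\to 1$. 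Information-theoretically, one must determine two independent bits $(X_I,Y_I)$ and cannot do so from a single output bit with coverage near $1$.

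The paper's proof exploits the pseudo-SMP model in a way your construction does not: because the referee sees \emph{both} $R_A$ and $R_B$, and the message maps $M_A(\cdot,R_A)$, $M_B(\cdot,R_B)$ and the decision map $Z$ are all public, the referee can internally recompute the would-be output $\alpha_{s_1,\dots,s_n}=Z(R_A,R_B,M_A,M_B(s_1,\dots,s_n,R_B))$ for every hypothetical Bob input $(s_1,\dots,s_n)\in(\F_2^m)^n$ (and symmetrically $\beta_{\cdot}$). This yields an entire $2^{mn}$-entry truth-table, not one bit. Correctness of $\mathcal P$ forces the restriction $s_i\mapsto\alpha_{\dots,s_i,\dots}$, after averaging over the other coordinates, to correlate with the character $s_i\mapsto a_{i,x_i}\cdot s_i$ beyond $2^{-m/3}$, and a Parseval argument shows that at most a $2^{1-m/3}=o(1)$ fraction of vectors $z\in\F_2^m$ can exhibit that much correlation, so a fresh uniform $a_{i,1-x_i}$ almost never falsely passes the test. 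That is what delivers coverage $\approx 2\gamma_i-1$ with decoding error $o(1/n)$. You would need to replace your single-bit decoding with some analogue of this transcript-resimulation step to have any chance of hitting the stated bounds.
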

In the proof we denote Alice's  and Bob's messages to the referee in Protocol~$\mathcal{P}$ by $M_A=M_A(a_1,\dots,a_n,R_A)$  $M_B=M_B(b_1,\dots,b_n,R_B)$. The referee's answer will be denoted by $Z=Z(i,R_A,R_B,M_A,M_B)\in\{0,1\}^2\cup\{\bot\}$.
\begin{proof}
Let us describe a protocol $\Cl P'$ for~$\RAd$ (derived from protocol $\mathcal P$ for $\RAIP_n$):
  \begin{enumerate}
  \item
    Alice and Bob receive~$x,y\in\01^n$ and their part of shared randomness, $R_A, R_B$. The referee sees both~$R_A$ and~$R_B$.
  \item
    Alice and the referee use the randomness they share (independent of~$R_A$)
    to choose uniformly at random $2n$ values $a_{j,t}\in\F_2^m$,
    for $(j,t)\in[n]\in \{0,1\}$.
    Similarly, Bob and the referee use their shared randomness
    to choose  $b_{j,t}\in\F_2^m$
    for $(j,t)\in[n]\times \{0,1\}$ at random.
  \item
    Alice and Bob send the referee messages~$\Ma=\Ma(a_{1,x_1},\dots,a_{n,x_n},R_A)$ and $\Mb=\Mb(b_{1,y_1},\dots,b_{n,y_n},R_B)$ where $\Ma$ and $\Mb$ are as in $\mathcal P$.
  \item
    For each choice of~$s_1,\dots,s_n\in\F_2^m$,
    the referee computes
    \[
      \alpha_{s_1,\dots,s_n} = Z(\Ra,\Rb,\Ma,\Mb(s_1,\dots,s_n,\Rb)) \in \{0,1\},
    \]
   and also $\beta_{s_1,\dots,s_n}$ defined similarly.

  \item
    The referee computes~$z_A,z_B\in\{0,1,\bot\}$ defined as follows.
    If~$\alpha_{s_1,\dots,s_n}$ and~$a_{i,0}\cdot s_i$ agree
    for more than~$2^{mn}(1/2+1/2^{m/3})$ choices
    of~$(s_1,\dots,s_n)\in(\F_2^m)^n$,
    then let~$z_A=0$;
    otherwise, if~$\alpha_{s_1,\dots,s_n}$ and~$a_{i,1}\cdot s_i$ agree
    for more than~$2^{mn}(1/2+1/2^{m/3})$ choices
    of~$(s_1,\dots,s_n)\in(\F_2^m)^n$,
    then let~$z_A=1$; if neither holds, let~$z_A=\bot$.
    Define~$z_B$ analogously.
    If either $z_A$ or $z_B$ are $\bot$  the referee outputs $Z=\bot$ and otherwise he outputs $Z=(z_A,z_B)$. 
 
  \end{enumerate}
  
  First we analyse the probability that~$\mathcal{P}'$ outputs~$\bot$ to confirm  (\ref{eq:output_bad_reduction}).
  Fix~$i,R_A,R_B,x_1,\dots,x_n,y_1,\dots,y_n$
  and consider~$a_{j,t}$ and~$b_{j,t}$ as the random variables.
  Note that if~$\mathcal{P}'$ outputs~$\bot$,
  then one of the following events has to occur:
  \begin{align*}
    \bm{e}_A
    &\colon
    \#\{(s_1,\dots,s_n)\in(\F_2^m)^n\colon
    \alpha_{s_1,\dots,s_n}=a_{i,x_i}\cdot s_i\}
    \le
    2^{mn}\left(\frac12+\frac{1}{2^{m/3}}\right),
    \\
    \bm{e}_B
    &\colon
    \#\{(s_1,\dots,s_n)\in(\F_2^m)^n\colon
    \beta_{s_1,\dots,s_n}=b_{i,y_i}\cdot s_i\}
    \le
    2^{mn}\left(\frac12+\frac{1}{2^{m/3}}\right).
  \end{align*}
  Let~$\bm{e}=\bm{e}_1\lor\bm{e}_1$.
  Let $\bm{error}$ denote the event that the referee from~$\Cl P$
   given~$\Ra$, $\Rb$, $\Ma$, $\Mb$ and~$i$
  would give a  different answer from $a_{i,x_i}\tm b_{i,y_i}$.
  Then by definition
  $\PR[\bm{error}\mid I=i]=1-\gamma_i$.
  On the other hand,
  $\PR[\bm{error}\mid \bm e\land I=i]\ge1/2-2^{-m/3}$,
  and therefore
  \[
   \PR[\bm e\mid I=i]
   =\frac{\PR[\bm{error\land\bm e}\mid I=i]}{\PR[\bm{error}\mid\bm e\land I=i]}
   \le\frac{\PR[\bm{error}\mid I=i]}{\PR[\bm{error}\mid \bm e\land I=i]}
   \le\frac{1-\gamma_i}{1/2-2^{-m/3}}
   \le2-2\gamma_i+2^{2-m/3},
  \]
  where the last inequality follows from our assumption that $m\ge8$.
  Therefore, the probability that $\Cl P'$ returns an answer different from ``$\bot$'' when $I=i$ is at least $2\gamma_i-1-2^{2-m/3}=2\gamma_i-1-o(1)$. So it just remains to show that condition on being different from $\bot$, the output of $\mathcal P'$ is $(x_i,y_i)$ at least $1-o(1)$. To do so we employ a simple Fourier analytic (orthogonality) argument.
  
Let $h:\F_2^m\rightarrow [-1,1]$ be 
\[ h(y)= \Ex_{s_1, s_2, \ldots, s_{i-1}, s_{i+1}, s_n}  \left[\alpha_{s_1,\ldots, s_{i-1}, y, s_{i+1}, \ldots, s_n}\right],  \]
and define the random variable \texttt{BAD}  by
\[ \texttt{BAD}=\left\{z\in \F_2^m \, : \Ex_{y} h(y)\cdot (-1)^{z\cdot y}> 2^{1-m/3}\right\}.\]
Moreover, note that although \texttt{BAD} depends on $R_A, R_B, M_A, a_{1,x_1}, a_{2, x_2}, \ldots, a_{n,x_n}$, it is crucially independent of the choice of  $a_{1, 1-x_1}, a_{2, 1-x_2}, \ldots, a_{n, 1-x_n}$. On the other hand, the probability that  $z_A= a_{i,1-x_i}$, which corresponds to the referee's error in identifying Alice's input, is upper bounded by $\Pr[ a_{i, 1-x_i}\in \texttt{BAD}]$ which is equal to $\frac{|\texttt{BAD}|}{2^m}$ because $a_{i,1-x_i}$ was chosen uniformly at random independently of \texttt{BAD}.

Finally, by basic Fourier analysis (i.e.~the orthogonality of the characters $y\mapsto (-1)^{z\cdot y}$) it follows that 
\[ \frac{|\texttt{BAD}|}{2^m}\leq  2^{1-m/3}=o(1).\]
This means that condition on $z_A\neq \bot$, the referee with probability $1-o(1)$ predicts the correct $a_{i,x_i}$, and hence the correct $x_i\in \{0,1\}$, as Alice's input. Repeating the same argument for Bob the result follows.

\end{proof}

\begin{proof}[Proof of Lemma~\ref{lemma:RAIP}]
 This follows immediately from Lemma~\ref{l_RAd} and \ref{l_GIPp}.
\end{proof}

\subsection{Hardness of gapped inner product}

The hardness of~$\RAIP_n$ can be used as an inductive step to prove the desired lower bound for~$\GIP_n$ via a hybrid argument. Before carrying this out, let us establish some useful notations.  

For a bit $b\in \{0,1\}$ and a positive integer $m$ (as usual here $m=8\log n$), we define $\sigma_{m,b}$ to be the uniform distribution
  over~$\{(u,v)\in\F_2^m\times \F_2^m\colon u\cdot v=b\}$.  
  
   A slight (and ultimately insignificant) issue that arises in our proof is the asymmetry of $b=0$ and $b=1$ cases, i.e.~the fact that it is slightly more probable for the inner product of two random $x,y\in \F_2^m$ to be $0$ than $1$.\footnote{Note that it does not seem possible to  immediately dispose of this asymmetry because Lemma \ref{lemma:RAIP},  the basis of our induction, relies on the uniform distribution on inputs which has this asymmetry inherent with it. It is quite conceivable that we could have established a variant of Lemma  \ref{lemma:RAIP} with a modified distribution eliminating the asymmetry, but it is likely that this would have just amounted to a reshuffling of the argument.} To make the notation related to this issue simpler, we let $\gamma_b$ be the probability of two randomly chosen $x,y\in \F_2^m$ satisfy $x\cdot y=b$; that is, $\gamma_0=\frac{1}{2}+\frac{1}{2^{m+;1}}$ and $\gamma_1= \frac{1}{2}+\frac{1}{2^{m+1}}$.

\begin{lemma}
  \label{lemma:gip}
  Let $m$ and $\sigma_{m,b}$ be as in the above paragraph. Then we have
 \[ 
    \RII_\rho(\GIP_n)=\asOm{\min_b\col_{\rho\otimes\sigma_{m,b}}(n)},   \]
  where the constant factor in the asymptotic notation
  does not depend on~$n$ or~$\rho$.
\end{lemma}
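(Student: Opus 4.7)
The plan is to prove the bound via a hybrid argument that, from a hypothetical cheap protocol $\mathcal{P}$ for $\GIP_n$ with source $\rho$ and cost $C$, extracts a cheap $\RAIP_{n-t^*}$-style protocol with an enlarged source, to which Lemma~\ref{lemma:RAIP} then applies.

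For each $t \in \{0,\dots,n\}$ and $b \in \{0,1\}$, I would define the coordinate-symmetric hybrid distribution $D_t^b$ on $(\F_2^m)^n \times (\F_2^m)^n$ by first sampling a uniformly random $S \subseteq [n]$ with $|S|=t$, then drawing $(x_i,y_i) \sim \sigma_{m,b}$ for $i \in S$ and $(x_i,y_i)$ uniformly for $i \notin S$. Then $D_0^0 = D_0^1$ is the fully uniform distribution, whereas $D_{n/2}^0$ and $D_{n/2}^1$ are, with probability $1-o(1)$, supported on valid $\GIP_n$ inputs with answers $0$ and $1$ respectively (the $n/2$ forced $b$-agreements plus $\approx n/4$ random $b$-agreements push the count above $2n/3$ with overwhelming probability, by a Chernoff bound). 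Correctness of $\mathcal{P}$ thus forces $|\Pr[\mathcal{P}(D_{n/2}^0)=0] - \Pr[\mathcal{P}(D_{n/2}^1)=0]| = \Omega(1)$, and a triangle-inequality/pigeonhole argument on the $n/2$ hybrid steps produces $t^* \in \{0,\dots,n/2-1\}$ and $b^* \in \{0,1\}$ with $|\Pr[\mathcal{P}(D_{t^*+1}^{b^*})=0] - \Pr[\mathcal{P}(D_{t^*}^{b^*})=0]| = \Omega(1/n)$.

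Next I would convert this single-step gap into a $\RAIP$-style bias. Writing $D_{t^*+1}^{b^*}$ as $D_{t^*}^{b^*}$ with an additional uniformly random unfixed coordinate resampled from $\sigma_{m,b^*}$, and decomposing $\Pr[\mathcal{P}=0]$ over the value of $x_i \cdot y_i$, the $\Omega(1/n)$ gap is $\gamma_{1-b^*}$ times the average of $\Pr[\mathcal{P}=0 \mid x_i \cdot y_i = b^*] - \Pr[\mathcal{P}=0 \mid x_i \cdot y_i = 1-b^*]$; after at most a one-bit output flip this yields an $\Omega(1/n)$ correlation between $\mathcal{P}$'s output and $x_i \cdot y_i$ for a uniformly random unfixed $i$. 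I then convert $\mathcal{P}$ into a pseudo-SMP protocol $\mathcal{P}'$ for $\RAIP_{n-t^*}$ with source $\rho \otimes \sigma_{m,b^*}^{\otimes t^*}$ and cost $C$: the players use the shared $\sigma_{m,b^*}$ samples to fill positions in $[n]$ chosen by a shared random injection $\phi : [n-t^*] \to [n]$, and embed the $\RAIP$ inputs at positions $\phi([n-t^*])$. The resulting length-$n$ input is distributed exactly as $D_{t^*}^{b^*}$; and, conditionally on the $\RAIP$ coordinate being $I$, the relevant position $\phi(I)$ is uniform in $[n]\setminus S$, so by the coordinate exchangeability of $D_{t^*}^{b^*}$ the $\RAIP$ success probability $\gamma_I$ equals the same $1/2 + \Omega(1/n)$ for every $I \in [n-t^*]$.

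Finally I would invoke Lemma~\ref{lemma:RAIP} on $\mathcal{P}'$. A direct extension of the tensor stability in Fact~\ref{fact:col}~(iii) gives $\col_{\rho \otimes \sigma_{m,b^*}^{\otimes t^*}}(N) = \col_{\rho \otimes \sigma_{m,b^*}}(N)$ for all $N$, and inequality~(\ref{eq:parameters-1}) combined with $n - t^* \geq n/2$ gives $\col_{\rho \otimes \sigma_{m,b^*}}(n - t^*) = \Omega(\col_{\rho \otimes \sigma_{m,b^*}}(n))$. If $C+\log n$ were asymptotically smaller than $\col_{\rho \otimes \sigma_{m,b^*}}(n)$, then Lemma~\ref{lemma:RAIP} would produce an index $I$ with $\gamma_I \leq 1/2 + o(1/n)$, contradicting the uniform $\Omega(1/n)$ bias established above. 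Hence $C = \Omega(\col_{\rho \otimes \sigma_{m,b^*}}(n)) \geq \Omega(\min_b \col_{\rho \otimes \sigma_{m,b}}(n))$, completing the proof. The main technical obstacle is the symmetrisation step in the conversion: the existential flavour of Lemma~\ref{lemma:RAIP} only yields a contradiction when the $\RAIP$ bias is uniform over \emph{all} coordinates, which is why the hybrid must be built from random subsets (not prefixes) and why the shared random injection $\phi$ must be used in the simulation.
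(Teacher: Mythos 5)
The obstacle you flag at the end is real and, as far as I can see, your proposed workaround does not clear it. A \emph{shared} random injection $\phi$ is simply not available in this model: Alice and Bob share only the non-degenerate source $\rho$ (plus the $t^*$ fresh $\sigma_{m,b^*}$ samples), and a general non-degenerate $\rho$ does not let them agree on a common random injection. You also cannot augment the shared source with a perfect-randomness component dedicated to $\phi$: Lemma~\ref{lemma:RAIP} would then be invoked with that augmented source, and any source containing a copy of $\rho_{\it perf}$ has collision complexity $O(1)$, so the resulting bound degenerates to the trivial $CC_{\mathcal P}+\log n=\Omega(1)$. If you instead drop $\phi$ and use a fixed embedding (or a deterministic prefix hybrid), the pigeonhole step shows only that the $\RAIP$ bias $\gamma_I-1/2$ is $\Omega(1/n)$ \emph{on average over $I$} (or at a single hand-picked $I$), whereas Lemma~\ref{lemma:RAIP} (via Lemma~\ref{l_RAd}, whose proof bounds $\alpha=\min_i\Pr[i\in L_A\cap L_B]$, not its average) is purely existential: it says \emph{one} coordinate has bias $o(1/n)$, which an average of $\Omega(1/n)$ does not contradict. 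Indeed, the ``averaged'' variant of Lemma~\ref{l_RAd}, obtained by replacing $\min_i$ with $\frac1n\sum_i$ in its proof, yields only $O((CC_{\mathcal P}+\log n)/n)$ without the crucial extra $1/\col_\rho(n)$ factor; the evenness requirement in the definition of a collision protocol is exactly what separates the strong $\min_i$ bound from this useless average.

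The paper avoids the difficulty by running the hybrid \emph{adaptively and in the opposite direction}. Rather than prescribing the hybrid in advance and pigeonholing for a large step, the proof uses the existential guarantee of Lemma~\ref{lemma:RAIP} to \emph{select} the next coordinate: at stage $t$, with the set $L$ already chosen, Lemma~\ref{lemma:RAIP} (applied to $\mathcal P$ restricted to inputs in $S_{b,L}$, with the enlarged source $\rho\otimes\sigma_{m,b}^{\otimes |L|}$) produces a coordinate $i^*\in[n]\setminus L$ that the transcript reveals \emph{almost nothing} about, and that $i^*$ is appended to $L$. Each step contributes only $O((CC_{\mathcal P}+\log n)/(n\col_{\rho\otimes\sigma_{m,b}}(n)))+o(1/n)$ to the total variation distance between the two endpoint transcript distributions, so after $T=\lceil 2n/3\rceil$ steps the accumulated distance is $o(1)$ when $CC_{\mathcal P}+\log n = o(\min_b\col_{\rho\otimes\sigma_{m,b}}(n))$, which contradicts the $\Omega(1)$ distinguishability forced by correctness of $\mathcal P$ on $\GIP_n$. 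In short, the existential form of Lemma~\ref{lemma:RAIP} is exactly the right tool for the adaptive ``fix the least-revealed coordinate'' construction, but it cannot be turned around to contradict an averaged pigeonhole bound, and the random injection you need to make the bias pointwise is not a resource the players have.
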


	As we noted before, the reason $\col_{\rho\otimes\sigma_{m,b}}(n)$ as opposed to $\col_{\rho}(n)$ appear in the above is that starting from the second step of our induction we will fix the inner product of some of the $x_i\cdot y_i$ for some $i\in L_t$ which the players could presumably  take advantage of as an extra source of shared randomness. In the next section, we show that with our choice of parameter $m=8\log n$ $\min\{\col_{\rho\otimes\sigma_{m,0}}(n),\col_{\rho\otimes\sigma_{m,1}}(n)\}=\Omega(\col_{\rho}(n))$, and hence the players are in fact unable to take advantage of the extra shared randomness caused by conditioning; combined with Lemma \ref{lemma:gip} that will finish the proof. 

\begin{remark}
We shall note that even in the midst of the hybrid argument, there is still a minor difference between the original source $\rho$ and the auxiliary new source $\sigma_{m,b}$, in that unlike in the case of $\rho$, the players can only sample $\sigma_{m,b}$  for limited number of times, i.e.~$t \leq 2n/3$. However,  we are not able to, or rather do not need, to exploit this fine difference in the proof.	
\end{remark}

\begin{proof}Note that if $\min_{b}\col_{\rho\otimes \sigma_b}(n) \leq C\log n$ (for suitably large constant  $C$ to be determined later) then the result immediately follows from Proposition \ref{proposition:gip-lower-bound-log}. So in what follows we focus on the other case.

We denote the shared randomness between Alice and Bob  by $(R_A,R_B)\sim\rho^{\otimes \ell}$ and their input by $(x,y)\in(\F_2^m)^n\times(\F_2^m)^n$. We will denote Alice's and Bob's messages to the referee by $f(R_A,x), g(R_B, y)\in\{0,1\}^{CC_{\mathcal P}}$ respectively. For~$x,y\in(\F_2^m)^n$, $u\in U^\ell$, $v\in V^\ell$, and~$s,t\in\{0,1\}^{CC_{\mathcal P}}$,
let~$\pi_{x,y}(u,v,s,t)$ be the conditional probability
that the values of the shared random variables are~$(u,v)$
and the messages from Alice and Bob are~$s$ and~$t$, respectively,
given the input~$(x,y)$. Note that this probability is well-defined
even if the input~$(x,y)$ does not satisfy the promise of~$\GIP_n$.
Then~$\pi_{x,y}$ is a probability distribution
on~$U^\ell\times V^\ell\times\{0,1\}^{CC_{\mathcal P}} \times\{0,1\}^{CC_{\mathcal P}}$.

For~$L\subseteq[n]$ and~$b\in\{0,1\}$, let
\[
  S_{b,L} = \{(x,y)\in(\F_2^m)^n\times(\F_2^m)^n\colon
  x_i\cdot y_i=b\;(\forall i\in L)\}.
\]
Let~$T=\ceil{2n/3}$.
Then Lemma~\ref{lemma:RAIP} implies that
if~$b\in\{0,1\}$, $L\subseteq[n]$, and~$\abs{L}\le T$,
then there exists~$i\in[n]\setminus L$ such that
\[  \norm{
    (1-\gamma_b)\E_{(x,y)\in S_{b,L}\setminus S_{b,L\cup\{i\}}}{\pi_{x,y}}
    -
     \gamma_b \E_{(x,y)\in S_{b,L\cup \{i\} }}{\pi_{x,y}}
  }_1= O\left(\frac{CC_{\mathcal P}+ \log n}{n \col_{\rho\otimes \sigma_{b}}(n)}\right)+o\left(\frac{1}{n}\right),\]
  where here we took the advantage of the fact that $n-|L|=\Omega(n)$ to absorb the resulting constant, due to the shrinkage $n \mapsto n\setminus L$, in the asymptotic notation. Noting that $\mathbf{E}_{(x,y)\in S_{b,L}} =  (1-\gamma_b)\mathbf{E}_{(x,y)\in S_{b,L}\setminus S_{b,L\cup\{i\} } }  +\gamma_{b} \mathbf{E}_{(x,y)\in  S_{b, L\cup \{i\}}}$, we see that
  \begin{equation}
  	\norm{\E_{(x,y)\in S_{b,L}\setminus S_{b,L\cup\{i\}}}{\pi_{x,y}}   -
     2\gamma_b \E_{(x,y)\in S_{b,L\cup \{i\}}}{\pi_{x,y}}
  }_1= O\left(\frac{CC_{\mathcal P}+ \log n}{n \col_{\rho\otimes \sigma_{b}}(n)}\right)+o\left(\frac{1}{n}\right).
  \end{equation}
Applying induction and the triangle inequality we see that  there exist  size $T$ sets $L_0,L_1\subseteq[n]$ such that 
\[
  \norm{
    (2\gamma_0)^T\E_{(x,y)\in S_{0,L_0}}{\pi_{x,y}}
    -
    (2\gamma_1)^T\E_{(x,y)\in S_{1,L_1}}{\pi_{x,y}}
  }_1=O\left(\frac{CC_{\mathcal P}+ \log n}{\min_b \col_{\rho\otimes \sigma_{b}}(n)}\right)+o(1).
\]
Noting that with our choice of parameters $(2\gamma_{0})^{T}$ and $(2\gamma_1)^{T}$ are both $1+o(1)$ the above implies
\[
  \norm{
    \E_{(x,y)\in S_{0,L_0}}{\pi_{x,y}}
    -
  \E_{(x,y)\in S_{1,L_1}}{\pi_{x,y}}
  }_1=O\left(\frac{CC_{\mathcal P}+ \log n}{\min_b \col_{\rho\otimes \sigma_{b}}(n)}\right)+o(1).
\]

On the other hand, every input~$(x,y)\in S_{b,L_b}$ is a valid input in support of ~$\GIP_n$, and therefore the correctness of the protocol implies that $\norm{
    \E_{(x,y)\in S_{0,L_0}}{\pi_{x,y}}
    -
  \E_{(x,y)\in S_{1,L_1}}{\pi_{x,y}}
  }_1 \geq \frac{2}{3}$. Plugging this in the above and using the assumption that $\min_{b} \col_{\rho\otimes \sigma_{b}} (n)$ is larger than sufficiently large constant multiple of $\log n$ the result follows---which is something we can assume without loss of generality as seen from the discussion in the beginning of the proof.%If~$M(n)\ge50\lambda\log n$,
%then this implies that~$C>M(n)/(25\lambda)\in\Omega(M(n))$,
%and the lemma holds.
%If~$M(n)<50\lambda\log n$,
%then the lemma follows from Proposition~\ref{proposition:gip-lower-bound-log}.
\end{proof}

\subsection{Removal of additional shared randomness caused by conditioning}\label{sec:removal}
In this Section we prove the next Lemma which Lemma \ref{lemma:gip} immediately imply Theorem~\ref{t_GIP}.
\begin{lemma} \label{lemma:tensoring-ip}
Let~$\sigma_{m,b}$ be the uniform distribution over $\{(x,y)\in\{0,1\}^m\times\{0,1\}^m\colon x\cdot y=b\}$ and let $\sigma$ be the either of $\sigma_{m,0}$ or $\sigma_{m,1}$. If $n\leq 2^{m/2-2}$  then for any bipartite distribution~$\rho$  we have~$\col_{\rho\otimes\sigma}(n)=\Omega(\col_\rho(n))$,
  where the constant factor is independent of  $m, n$ or $\rho$.
\end{lemma}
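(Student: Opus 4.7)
The plan is to translate the problem, via Lemma~\ref{lemma:col-agr}, into a question about agreement complexity and then exploit the fact that $\sigma = \sigma_{m,b}$ has very small maximum correlation. Assuming $\col_\rho(n)$ exceeds a sufficiently large absolute constant (the remaining case is trivial since $\col_{\rho\otimes\sigma}(n)\ge 1$), Lemma~\ref{lemma:col-agr} reduces the goal to establishing $\agr_\rho(1/n) = O(\agr_{\rho \otimes \sigma}(1/n))$. I would take a near-optimal agreement protocol $(\ell, f, g)$ for $\rho \otimes \sigma$ with cost $C := \agr_{\rho \otimes \sigma}(1/n) + o(1)$ and success probability at least $1/n$, and convert it into an agreement protocol for $\rho$ by having Alice and Bob replace their $\sigma$-samples with privately generated, independent samples from the marginals $\sigma_U$ and $\sigma_V$ (possible because $\rho$ is non-degenerate). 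Since the marginals match exactly, the cost of the resulting $\rho$-only protocol remains $C$; moreover $C \le 2$ automatically because $f, g \in [0,1]$.

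The heart of the argument is to show that this simulation drops the success probability by at most $\Cor(\sigma) \cdot C$. Fixing $R_A, R_B$ and applying Lemma~\ref{lemma:correlation-general} to $f(R_A, \cdot)$ and $g(R_B, \cdot)$ with respect to $\sigma^{\otimes \ell}$ bounds the inner gap by $\Cor(\sigma^{\otimes \ell}) \sqrt{\Var(f(R_A,\cdot))\,\Var(g(R_B,\cdot))}$; Witsenhausen's Lemma~\ref{lemma:correlation-tensor} then collapses $\Cor(\sigma^{\otimes \ell})$ to $\Cor(\sigma)$. Taking expectation over $(R_A, R_B)$ and applying Cauchy--Schwarz, together with the elementary bound $\mathbf{E}[\Var(f(R_A,\cdot))] \le \mathbf{E}[f] \le C$ that follows from $f \in [0,1]$, yields the claimed loss of at most $\Cor(\sigma)\cdot C$.

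The main technical obstacle is the estimate $\Cor(\sigma_{m,b}) = O(2^{-m/2})$. I would prove this by Fourier analysis on $\mathbb{F}_2^m$: writing $\mathbb{1}[x \cdot y = b] = (1 + (-1)^{x \cdot y + b})/2$, expanding mean-zero $f, g$ in the character basis $\chi_S(x) = (-1)^{S \cdot x}$, and applying Parseval together with Cauchy--Schwarz to the resulting Fourier-side expression. This is cleanest for $\sigma_{m,1}$, whose marginals are exactly uniform on $\mathbb{F}_2^m \setminus \{0\}$; for $\sigma_{m,0}$ a small correction arises because the marginal places weight $2/(2^m+1)$ on the zero vector and $1/(2^m+1)$ on every other vector, but this correction is of lower order and changes nothing asymptotically. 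Once $\Cor(\sigma) = O(2^{-m/2})$ is in hand, the hypothesis $n \leq 2^{m/2-2}$ gives $\Cor(\sigma)\cdot C \le 1/(2n)$, so the new $\rho$-only protocol has cost $C$ and success at least $1/(2n)$. A final amplification via Lemma~\ref{lem:parameter_manip} (three independent repetitions of a $p$-collision protocol suffice, since $1 - (1 - 1/(2n))^3 > 1/n$) then yields $\agr_\rho(1/n) = O(C)$ through Lemma~\ref{lemma:col-agr}, closing the argument and giving $\col_{\rho\otimes\sigma}(n) = \Omega(\col_\rho(n))$.
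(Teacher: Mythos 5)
Your approach is essentially identical to the paper's: reduce to agreement complexity via Lemma~\ref{lemma:col-agr}, bound $\Cor(\sigma_{m,b})$ by Fourier analysis (the paper's Claim~\ref{claim:ip-cor}), convert a $\rho\otimes\sigma$-protocol into a $\rho$-protocol by averaging out the $\sigma$-coordinates (your ``replace the $\sigma$-samples with private samples from the marginals'' is exactly the paper's $F(u)=\E_x f(u,x)$, $G(v)=\E_y g(v,y)$ in Claim~\ref{claim:cor-to-col}), and bound the drop in success probability via Lemma~\ref{lemma:correlation-general}. One small wrinkle: with your Cauchy--Schwarz step as written, $\E\bigl[\sqrt{\Var(f)\Var(g)}\bigr]\le\sqrt{\E[\Var(f)]\,\E[\Var(g)]}\le\sqrt{\E[f]\,\E[g]}$, and since $\E[f]+\E[g]=C\le2$, AM--GM gives $\sqrt{\E[f]\E[g]}\le C/2$, not $C$; you need this extra factor of $2$ (or the paper's simpler bound $\Var(f),\Var(g)\le1$, yielding loss $\le\Cor(\sigma)$) to actually land on $1/(2n)$ rather than $1/n$, since $C$ can be as large as $2$. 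The rest of the argument, including the final amplification through Lemma~\ref{lem:parameter_manip}, is sound.
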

We need two somewhat more technical statements to prove this Lemma. The first one is a basic estimate on the correlation complexity $\sigma$ which follows from elementary Fourier analysis. 

\begin{claim} \label{claim:ip-cor}
  Let~$m\ge2$ and~$b\in\{0,1\}$.
  Let~$\sigma$ be the uniform distribution over $\{(u,v)\in\{0,1\}^m\times\{0,1\}^m\colon u\cdot v=b\}$.
  Then it holds that~$
    \Cor(\sigma)\le1/2^{m/2-1}
  $.
\end{claim}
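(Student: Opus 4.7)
The plan is to use Fourier analysis on $\mathbb{F}_2^m$ and rewrite everything in terms of the uniform measure on $\{0,1\}^m \times \{0,1\}^m$, where the structure $(-1)^{u \cdot v}$ becomes transparent. Let $p_b = \Pr_{\mathrm{unif}}[u \cdot v = b] = \tfrac12 \pm 2^{-m-1}$. Starting from the identity $\mathbb{1}[u \cdot v = b] = \tfrac{1}{2}(1 + (-1)^{b + u \cdot v})$, I would write, for any $f,g$ with $\mathbb{E}_{\sigma_U}[f] = \mathbb{E}_{\sigma_V}[g] = 0$ and $\|f\|_{\sigma_U} = \|g\|_{\sigma_V} = 1$,
\[
  \E_{(u,v)\sim\sigma}[f(u)g(v)]
  = \frac{1}{2p_b}\Bigl(\E_{\mathrm{unif}}[f]\,\E_{\mathrm{unif}}[g] + (-1)^b \,\E_{\mathrm{unif}}\bigl[f(u)g(v)(-1)^{u\cdot v}\bigr]\Bigr).
\]

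The crux is the ``twisted'' term. Since $\E_v[g(v)(-1)^{u\cdot v}] = \hat g(u)$ (the Fourier coefficient of $g$ under uniform), one has $\E_{\mathrm{unif}}[f(u)g(v)(-1)^{u\cdot v}] = \E_u[f(u)\hat g(u)]$, and Cauchy--Schwarz together with Parseval ($\|\hat g\|_{\mathrm{unif}} = 2^{-m/2}\|g\|_{\mathrm{unif}}$) gives the key estimate
\[
  \bigl|\E_{\mathrm{unif}}[f(u)g(v)(-1)^{u\cdot v}]\bigr|
  \;\le\; 2^{-m/2}\,\|f\|_{\mathrm{unif}}\|g\|_{\mathrm{unif}}.
\]

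It then remains to translate between the $\sigma$-marginals and the uniform measure. For $b=1$, the marginal $\sigma_U$ is simply uniform on $\{0,1\}^m\setminus\{0\}$: we may set $f(0)=g(0)=0$ without loss of generality, which forces $\E_{\mathrm{unif}}[f] = \E_{\mathrm{unif}}[g] = 0$ and $\|f\|_{\mathrm{unif}}\|g\|_{\mathrm{unif}} \le 1 - 2^{-m}$; plugging in gives $|\E_{\sigma_{m,1}}[fg]| \le 2^{-m/2}$ directly. For $b=0$ the point $0\in\{0,1\}^m$ carries marginal weight $2/(2^m+1)$ while every nonzero point carries $1/(2^m+1)$, so mean-zero under $\sigma_U$ only forces $\sum_u f(u) = -f(0)$, i.e.\ $\E_{\mathrm{unif}}[f] = -f(0)/2^m$; combined with the unit-norm constraint one gets $|\E_{\mathrm{unif}}[f]\,\E_{\mathrm{unif}}[g]| \le (2^m+1)/2^{2m}$ and $\|f\|_{\mathrm{unif}}\|g\|_{\mathrm{unif}} \le 1 + 2^{-m}$, leading to $|\E_{\sigma_{m,0}}[fg]| \le 2^{1-m} + 2^{-m/2}$, which is bounded by $2\cdot 2^{-m/2} = 2^{-(m/2-1)}$ for $m \ge 2$. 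In both cases we get the claimed bound.

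The main technical nuisance is the $b=0$ case: the $\sigma_U$-marginal is not uniform, so ``mean-zero w.r.t.\ $\sigma_U$'' no longer implies $\E_{\mathrm{unif}}[f]=0$, and the resulting extra term $\E_{\mathrm{unif}}[f]\E_{\mathrm{unif}}[g]$ must be shown to be of lower order $O(2^{-m})$. All other steps are essentially one-line Fourier manipulations, so this bookkeeping is the only real obstacle.
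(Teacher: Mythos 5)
Your argument is correct and follows essentially the same route as the paper's: both reduce to a Fourier estimate on the ``twisted'' term via Parseval and Cauchy--Schwarz, and both treat the cases $b=0$ and $b=1$ separately because of how the vector $0\in\{0,1\}^m$ is weighted by the marginal $\sigma_U$. Your decomposition via the identity $\mathbb{1}[u\cdot v=b]=\tfrac12(1+(-1)^{b+u\cdot v})$, which cleanly separates the product part $\E_{\mathrm{unif}}[f]\,\E_{\mathrm{unif}}[g]$ from the twisted part $\E_{\mathrm{unif}}[f(u)g(v)(-1)^{u\cdot v}]$, is somewhat more transparent than the paper's direct algebraic expansion but leads to the same bound.
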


\begin{proof}%[Proof of Claim~\ref{claim:ip-cor}]
For~$f\colon\{0,1\}^m\to\RR$, the Fourier transform of~$f$ is defined by
\[
  \hat{f}(v)=\frac{1}{2^m}\sum_{u\in\{0,1\}^m}(-1)^{u\cdot v}f(u).
\]
Parseval's identity states that
\[
  \frac{1}{2^m}\sum_{u\in\{0,1\}^m}f(u)^2 = \sum_{v\in\{0,1\}^m}\hat{f}(v)^2.
\]

Let~$\mu$ be the marginal distribution of~$\sigma$ on either part.
Let~$f,g\colon\{0,1\}^m\to\RR$,
and assume that~$\E_{u\sim\mu}{f(u)}=\E_{v\sim\mu}{g(v)}=0$
and that~$\E_{u\sim\mu}{f(u)^2}=\E_{v\sim\mu}{g(v)^2}=1$.
Because~$\mu(u)\ge1/(2^m+1)$ for all~$u\in\{0,1\}^m$, it holds that
\[
  \sum_{u\in\{0,1\}^m}f(u)^2
  \le
  (2^m+1)\E_{u\sim\mu}{f(u)^2}
  =
  2^m+1,
\]
and similarly~$\sum_{v\in\{0,1\}^m}g(v)^2\le2^m+1$.

Simple calculation shows that
\begin{align*}
  &
  \E_{(u,v)\sim\sigma}{f(u)g(v)}
  \\
  &=
  \frac{1}{2^{2m-1}+(-1)^b\cdot2^{m-1}}
  \sum_{v\in\{0,1\}^m}g(v)\left(
    \sum_{u:u\cdot v=b}f(u)-\frac{(-1)^b\cdot2^m+1}{2}\E_{u\sim\mu}{f(u)}
  \right)
  \\
  &=
  \frac{1}{2^{2m}+(-1)^b\cdot2^m}
  \sum_{v\in\{0,1\}^m}(2^m \hat{f}(v)-f(0))g(v)
  \\
  &=
  \frac{1}{2^m+(-1)^b}
  \sum_{v\in\{0,1\}^m} \hat{f}(v)g(v)
  -
  \frac{f(0)}{2^{2m}+(-1)^b\cdot2^m}
  \sum_{v\in\{0,1\}^m}g(v)
  \\
  &=
  \frac{1}{2^m+(-1)^b}
  \sum_{v\in\{0,1\}^m} \hat{f}(v)g(v)
  +
  \frac{(-1)^b f(0)g(0)}{2^{2m}+(-1)^b\cdot2^m}.
\end{align*}
By the Cauchy--Schwarz inequality, the summation in the first term is bounded as
\begin{align*}
  \sum_{v\in\{0,1\}^m} \hat{f}(v)g(v)
  &\le
  \sqrt{\sum_{v\in\{0,1\}^m}\hat{f}(v)^2}\sqrt{\sum_{v\in\{0,1\}^m}g(v)^2}
  \\
  &\le
  \frac{1}{2^{m/2}}
  \sqrt{\sum_{u\in\{0,1\}^m}f(u)^2}\sqrt{\sum_{v\in\{0,1\}^m}g(v)^2}
  \\
  &\le
  \frac{2^m+1}{2^{m/2}},
\end{align*}
and therefore we have that
\begin{equation}
  \E_{(u,v)\sim\sigma}{f(u)g(v)}
  \le
  \frac{1}{2^{m/2}}
  \cdot
  \frac{2^m+1}{2^m+(-1)^b}
  +
  \frac{(-1)^b f(0)g(0)}{2^{2m}+(-1)^b\cdot2^m}.
  \label{eq:ip-cor-1}
\end{equation}

If~$b=0$, then~$f(0)^2\le\E_{u\sim\mu}{f(u)^2}/\mu(0)=(2^m+1)/2$,
and similarly~$g(0)^2\le(2^m+1)/2$.
Therefore, (\ref{eq:ip-cor-1}) implies that
\[
  \E_{(u,v)\sim\sigma}{f(u)g(v)}
  \le
  \frac{1}{2^{m/2}}
  +
  \frac{1}{2^{m+1}}
  <
  \frac{1}{2^{m/2-1}}.
\]

If~$b=1$, then the vector~$0\in\{0,1\}^m$
is not in the support of~$\mu$, and therefore
we may assume that~$f(0)=0$ without loss of generality.
Then~(\ref{eq:ip-cor-1}) implies that
\[
  \E_{(u,v)\sim\sigma}{f(u)g(v)}
  <
  \frac{1}{2^{m/2}}
  \cdot
  \frac{2^m+1}{2^m-1}
  <
  \frac{1}{2^{m/2-1}}.
\]
\end{proof}

\begin{claim} \label{claim:cor-to-col}
  Let~$\rho$ be a bipartite distribution on~$U\times V$,
  and~$\sigma$ be a bipartite distribution on~$X\times Y$.
  Then for~$p$ with~$\Cor(\sigma)<p<1$,
  it holds that~$\agr_{\rho\otimes\sigma}(p)\ge\agr_\rho(p-\Cor(\sigma))$.
\end{claim}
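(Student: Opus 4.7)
The plan is to exhibit, given any agreement protocol for $\rho\otimes\sigma$ of success $\geq p$, a ``marginalised'' agreement protocol for $\rho$ alone, of the same cost, whose success is only smaller by at most $\Cor(\sigma)$.

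Concretely, let $(\ell,f,g)$ be an agreement protocol for $\rho\otimes\sigma$ with $f\colon(U\times X)^\ell\to[0,1]$, $g\colon(V\times Y)^\ell\to[0,1]$, cost $K$, and success probability at least $p$. Define
\[
  f'(\vec u)=\E_{\vec x\sim \sigma_X^{\otimes\ell}}[f(\vec u,\vec x)], \qquad
  g'(\vec v)=\E_{\vec y\sim \sigma_Y^{\otimes\ell}}[g(\vec v,\vec y)].
\]
Both $f',g'$ take values in $[0,1]$, and by linearity the cost of $(\ell,f',g')$ as an agreement protocol for $\rho$ equals $\E[f]+\E[g]=K$. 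So it only remains to lower-bound its success probability.

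To do this, fix $\vec u,\vec v$ and apply Lemma~\ref{lemma:correlation-general} to the $[0,1]$-valued functions $F_{\vec u}(\vec x)=f(\vec u,\vec x)$ and $G_{\vec v}(\vec y)=g(\vec v,\vec y)$ on the probability space $\sigma^{\otimes\ell}$. Combined with Lemma~\ref{lemma:correlation-tensor} (which gives $\Cor(\sigma^{\otimes\ell})=\Cor(\sigma)$) and the trivial bound $\Var(F_{\vec u})\le \E[F_{\vec u}^2]\le \E[F_{\vec u}]=f'(\vec u)$ (and similarly for $G_{\vec v}$), this yields
\[
  \E_{(\vec x,\vec y)\sim\sigma^{\otimes\ell}}\bigl[f(\vec u,\vec x)g(\vec v,\vec y)\bigr]
  \leq f'(\vec u)g'(\vec v)+\Cor(\sigma)\sqrt{f'(\vec u)g'(\vec v)}.
\]
Since $f',g'\in[0,1]$, the square root is at most $1$. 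Taking expectation over $(\vec u,\vec v)\sim\rho^{\otimes\ell}$ (which is independent of $(\vec x,\vec y)$ under $(\rho\otimes\sigma)^{\otimes\ell}$), the left-hand side becomes exactly the success probability of $(\ell,f,g)$, which is at least $p$. Thus
\[
  p \leq \E_{(\vec u,\vec v)\sim\rho^{\otimes\ell}}[f'(\vec u)g'(\vec v)]+\Cor(\sigma),
\]
so $(\ell,f',g')$ is an agreement protocol for $\rho$ achieving success probability at least $p-\Cor(\sigma)$ at cost $K$. Taking the infimum over $(f,g)$ gives $\agr_\rho(p-\Cor(\sigma))\le\agr_{\rho\otimes\sigma}(p)$, as required.

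The main subtlety is choosing the right bound on the ``error'' term that appears from Lemma~\ref{lemma:correlation-general}: a naive application yields something like $\Cor(\sigma)\sqrt{\Var(f)\Var(g)}$ averaged over $(\vec u,\vec v)$, which is not obviously controlled. The trick is to exploit $f,g\in[0,1]$ twice --- once to pass from $\Var$ to the first moment (so $\sqrt{\Var\cdot\Var}\leq\sqrt{f'g'}$), and once more to bound $\sqrt{f'g'}\leq 1$ --- yielding precisely the clean additive loss of $\Cor(\sigma)$.
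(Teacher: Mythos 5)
Your proof is correct and takes essentially the same route as the paper: marginalise $f,g$ over the $\sigma$-coordinates to obtain an agreement protocol $(\ell,F,G)$ for $\rho$ of the same cost, then bound the loss in success probability by $\Cor(\sigma)$ using Lemmas~\ref{lemma:correlation-general} and~\ref{lemma:correlation-tensor}. The only (cosmetic) difference is in how the variance term is bounded by~$1$: you pass through $\Var(F_{\vec u})\le f'(\vec u)$ and then $\sqrt{f'(\vec u)g'(\vec v)}\le 1$, whereas the paper simply notes that a $[0,1]$-valued function has variance at most~$1$, so $\sqrt{\Var(f)\Var(g)}\le 1$ directly --- the ``subtlety'' you flag at the end is not actually needed.
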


\begin{proof}%[Proof of Claim~\ref{claim:cor-to-col}]
Let~$c>\agr_{\rho\otimes\sigma}(p)$.
Let~$(\ell,f,g)$ be an agreement protocol for~$\rho\otimes\sigma$
with success probability~$p$ and cost at most~$c$.
By Lemmas~\ref{lemma:correlation-tensor},
it holds that~$\Cor(\sigma^{\otimes \ell})=\Cor(\sigma)$.

For~$u\in U^\ell$,
let~$F(u)=\E_{x\sim\sigma_X^{\otimes \ell}}{f(u,x)}$.
Similarly, for~$v\in V^\ell$,
let~$G(v)=\E_{y\sim\sigma_Y^{\otimes \ell}}{g(v,y)}$.

Fix~$u\in U^\ell$ and~$v\in V^\ell$.
Because~$f$ and~$g$ take values in~$[0,1]$,
their variances for fixed~$u$ and~$v$ are at most~$1$.
By Lemma~\ref{lemma:correlation-general}, we have that
\[
  \E_{(x,y)\sim\sigma^{\otimes \ell}}{f(u,x)g(v,y)}
  \le
  F(u)G(v)
  +
  \Cor(\sigma).
\]
Then we have that
\[
  p=\E_{(u,v,x,y)\sim\rho^{\otimes \ell}\otimes\sigma^{\otimes \ell}}{f(u,x)g(v,y)}
  \le
  \E_{(u,v)\sim\rho^{\otimes \ell}}{F(u)G(v)}
  +
  \Cor(\sigma).
\]
This means that~$(\ell,F,G)$ is an agreement protocol for~$\rho$
with cost at most~$c$ and success probability at least~$p-\Cor(\sigma)$,
and therefore~$\agr_\rho(p-\Cor(\sigma))\le c$.
Because this holds for any~$c>\agr_{\rho\otimes\sigma}(p)$,
we conclude that~$\agr_\rho(p-\Cor(\sigma))\le\agr_{\rho\otimes\sigma}(p)$.
\end{proof}

\begin{proof}[Proof of Lemma~\ref{lemma:tensoring-ip}]
  Because~$n\le2^{m/2-2}$,
  Lemma~\ref{claim:ip-cor} implies that~$\Cor(\sigma)\le1/(2n)$.
  Then by Lemma~\ref{claim:cor-to-col}, it holds that
  \[
    \agr_{\rho\otimes\sigma}\left(\frac{1}{n}\right)
    \ge
    \agr_\rho\left(\frac{1}{n}-\Cor(\sigma)\right)
    \ge
    \agr_\rho\left(\frac{1}{2n}\right)
    \ge
    \frac{\agr_\rho(1/n)}{2}.
  \]
  The lemma follows from Lemma~\ref{lemma:col-agr}.
\end{proof}

\bibliography{SR}
\appendix

\section{Relation to hypercontractivity}\label{appendix:hyperc}

As we discussed, collision and agreement complexities are instances of measures of quality of correlation, and as such one natural way to analyse collision complexity is to relate it to other more widely used measures of quality of correlation. Indeed, in the case of maximum correlation, this was done in Subsection \ref{sec:removal} and that connection, i.e.
\begin{equation}\label{eq:drawback}
  \agr_\rho(p) \ge \sqrt{p-\Cor(\rho)},
\end{equation}
played an important role in the final stage of our argument. 

In this Section, our goal is to investigate the connection between collision and agreement complexities and another widely used measure of quality of correlation, i.e.~hypercontractivity.  As we shall show (and not surprisingly) the information offered by hypercontractivity properties of a source $\rho$, when available, is usually more powerful that what can be understood from the simple information provided by the maximum correlation.  More precisely, although an upper bound on the maximum correlation of~$\rho$ can be useful for lower bounding the collision complexity when the size of $\rho$ is not too small compared to the domain size $\frac{1}{p}$ (the agreement parameter), the estimates such as (\ref{eq:drawback}) stop being useful 
for very small~$p$, which is necessary for an asymptotic lower bound
on the collision complexity~$\col_\rho(n)$ as~$n\to\infty$.

The organisation of this appendix is as follows. We first start by discussing  some of the elementary (and well-known) properties of hypercontractivity in the first subsection; the material here is standard but we included them for the convenience of the reader. Next, we establish the connection between the collision complexity and  hypercontractivity which is the central result of the appendix. Finally in the last subsection we give a quick application of the preceding result by proving some simple (but non-trivial) upper and lower bounds for the collision complexity of $\rho_{\it disj}$.

\subsection{Preliminaries on hypercontractivity}

A generalisation of H\"{o}lder's inequality
provides a way to quantify the independence of a bipartite distribution.

For~$p\ge1$, a probability distribution~$\mu$ on~$X$,
and a function~$f\colon X\to\CC$,
the~$L_p$-norm of~$f$ with respect to~$\mu$,
denoted by~$\norm{f}_{L_p(\mu)}$, is
\[
  \norm{f}_{L_p(\mu)} = \E_{x\sim\mu}{\abs{f(x)}^p}^{1/p}.
\]
For~$p=\infty$, the~$L_\infty$-norm of~$f$ with respect to~$\mu$
is defined by
\[
  \norm{f}_{L_\infty(\mu)} = \max_{x:\mu(x)>0}\abs{f(x)}.
\]
If~$1\le p,p'\le\infty$ and~$1/p+1/p'=1$,
then H\"{o}lder's inequality states that
for any probability distribution~$\rho$ on~$U\times V$
and any functions~$f\colon U\to\CC$ and~$g\colon V\to\CC$, it holds that
\[
  \abs{\E_{(u,v)\sim\rho}{f(u)g(v)}} \le \norm{f}_{L_p(\rho_U)}\norm{g}_{L_p'(\rho_V)},
\]
where~$\rho_U$ and~$\rho_V$ are the marginal distributions of~$\rho$
on~$U$ and~$V$, respectively.
If~$\rho$ is a product distribution, then it holds that
\[
  \abs{\E_{(u,v)\sim\rho_U\otimes\rho_V}{f(u)g(v)}}
  \le
  \norm{f}_{L_1(\rho_U)}\norm{g}_{L_1(\rho_V)}.
\]
We consider a property between these two cases.

\begin{definition}[Generalized $(p,q)$ H\"{o}lder inequality]
  Let~$p,q\ge1$.
  A probability distribution~$\rho$ on~$U\times V$
  is said to satisfy the \emph{generalised $(p,q)$ H\"{o}lder inequality}
  if for any functions~$f\colon U\to\CC$ and~$g\colon V\to\CC$, it holds that
  \[
    \abs{\E_{(u,v)\sim\rho}{f(u)g(v)}}
    \le
    \norm{f}_{L_p(\rho_U)}\norm{g}_{L_q(\rho_V)},
  \]
  where~$\rho_U$ and~$\rho_V$ are the marginal distributions of~$\rho$
  on~$U$ and~$V$, respectively.
\end{definition}

It is often more convenient to prove this via hypercontractivity of a linear operator.
Note that a bipartite probability distribution~$\rho$ on~$U\times V$
defines a stochastic channel from~$V$ to~$U$
and therefore also defines a linear operator from~$\CC^U$ to~$\CC^V$:
\[
  (T_\rho f)(v) = \sum_{u\in U}\frac{\rho(u,v)}{\rho_V(v)}f(u).
\]
For~$1\le p\le q$, this operator~$T_\rho$
is said to be \emph{$q$-to-$p$ hypercontractive}
with respect to distribution~$\rho_U$
if for any~$f\colon U\to\CC$, it holds that
\[
  \norm{T_\rho f}_{L_q(\rho_V)} \le \norm{f}_{L_p(\rho_U)}.
\]
In this case, we say that the bipartite distribution~$\rho$ itself
is~$q$-to-$p$ hypercontractive.
Note that the order of sets~$U$ and~$V$ matters in this terminology.
The key relation between the generalised H\"{o}lder inequality
and the hypercontractivity which we will use is the following:

\begin{lemma} \label{lemma:hypercontractive-holder}
  Let~$p,q,q'\in[1,\infty]$ satisfy~$p\le q$ and~$1/q+1/q'=1$.
  Let~$\rho$ be a probability distribution on~$U\times V$.
  If~$\rho$ is~$q$-to-$p$ hypercontractive,
  then~$\rho$ satisfies the~$(p,q')$ generalised H\"{o}lder inequality.
\end{lemma}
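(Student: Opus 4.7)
The proof is a short, three-step exercise in combining the definition of $T_\rho$ with Hölder's inequality. I would proceed as follows.

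The plan is first to recognise that the bilinear form $\mathbb{E}_{(u,v)\sim\rho}[f(u)g(v)]$ can be rewritten as a single expectation against $\rho_V$ using the operator $T_\rho$. Indeed, unfolding the definition,
\[
  \mathbb{E}_{v\sim\rho_V}\bigl[(T_\rho f)(v)\,g(v)\bigr]
  = \sum_{v\in V}\rho_V(v)\,g(v)\sum_{u\in U}\frac{\rho(u,v)}{\rho_V(v)}f(u)
  = \sum_{u,v}\rho(u,v)f(u)g(v)
  = \mathbb{E}_{(u,v)\sim\rho}[f(u)g(v)].
\]
So the left-hand side of the generalised Hölder inequality is just the $L_2(\rho_V)$-pairing of $T_\rho f$ and $g$.

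Second, I would apply the ordinary Hölder inequality on the probability space $(V,\rho_V)$ with the conjugate pair $(q,q')$ (which is legitimate precisely because $1/q+1/q'=1$ is given):
\[
  \bigl|\mathbb{E}_{v\sim\rho_V}[(T_\rho f)(v)\,g(v)]\bigr|
  \;\le\;
  \norm{T_\rho f}_{L_q(\rho_V)}\cdot\norm{g}_{L_{q'}(\rho_V)}.
\]
Third, the $q$-to-$p$ hypercontractivity assumption (well-defined since $p\le q$) bounds $\norm{T_\rho f}_{L_q(\rho_V)}\le\norm{f}_{L_p(\rho_U)}$. Chaining these gives
\[
  \bigl|\mathbb{E}_{(u,v)\sim\rho}[f(u)g(v)]\bigr|
  \;\le\;
  \norm{f}_{L_p(\rho_U)}\cdot\norm{g}_{L_{q'}(\rho_V)},
\]
which is exactly the $(p,q')$ generalised Hölder inequality.

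There is essentially no obstacle here; the only small things to watch are the edge cases $q=1$ or $q=\infty$ (where $q'=\infty$ or $q'=1$), which the ordinary Hölder inequality handles without issue, and the need to note that finiteness of the relevant norms is implicit (if $\norm{f}_{L_p(\rho_U)}=\infty$ the inequality is vacuous). No new identities beyond the adjoint-style computation of the first display and textbook Hölder are needed.
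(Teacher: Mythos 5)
Your proof is correct and follows exactly the paper's argument: rewrite the bilinear form as $\mathbb{E}_{v\sim\rho_V}[(T_\rho f)(v)g(v)]$, apply ordinary H\"{o}lder on $(V,\rho_V)$ with the conjugate pair $(q,q')$, then invoke $q$-to-$p$ hypercontractivity on the first factor. Your extra remarks about edge cases and finiteness are sensible but not needed; there is nothing substantive to add or correct.
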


\begin{proof}
  For~$f\colon U\to\CC$ and~$g\colon V\to\CC$, it holds that
  \begin{align*}
    \abs{\E_{(u,v)\sim\rho}{f(u)g(v)}}
    &=
    \abs{\E_{v\sim\rho_V}{(T_\rho f)(v)g(v)}}
    \\
    &\le
    \norm{T_\rho f}_{L_q(\rho_V)}\norm{g}_{L_{q'}(\rho_V)}
    \\
    &\le
    \norm{f}_{L_p(\rho_U)}\norm{g}_{L_{q'}(\rho_V)},
  \end{align*}
  where the first inequality follows from H\"{o}lder's inequality
  and the second inequality follows
  from the~$q$-to-$p$ hypercontractivity of~$\rho$.
\end{proof}

%\begin{remark}
%  The converse of Lemma~\ref{lemma:hypercontractive-holder}
%  is also easy to prove, although we will not need it in this paper.
%\end{remark}

Like the maximum correlation, hypercontractivity tensorizes.

\begin{lemma} \label{lemma:hypercontractive-tensor}
  Let~$1\le p\le q\le\infty$.
  For~$i\in[n]$, let~$\rho_i$ be a probability distribution on~$U_i\times V_i$.
  If~$\rho_i$ is~$q$-to-$p$ hypercontractive for all~$i\in[n]$,
  then~$\rho_1\otimes\dots\otimes\rho_n$ is also~$q$-to-$p$ hypercontractive.
\end{lemma}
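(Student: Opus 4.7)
The plan is to induct on $n$, so it suffices to treat the case $n=2$: if $\rho_1$ on $U_1\times V_1$ and $\rho_2$ on $U_2\times V_2$ are each $q$-to-$p$ hypercontractive, then so is $\rho_1\otimes\rho_2$. The general case follows by grouping $\rho_1\otimes\cdots\otimes\rho_n = (\rho_1\otimes\cdots\otimes\rho_{n-1})\otimes\rho_n$ and applying the inductive hypothesis together with the two-factor case.

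For $n=2$, the first observation is that the operator factorises: $T_{\rho_1\otimes\rho_2}$ acts as $T_{\rho_1}$ on the first coordinate and $T_{\rho_2}$ on the second coordinate, and these two partial actions commute. Concretely, for $f\colon U_1\times U_2\to\CC$, write $g(v_1,u_2) = (T_{\rho_1} f(\cdot,u_2))(v_1)$, so that $(T_{\rho_1\otimes\rho_2}f)(v_1,v_2) = (T_{\rho_2}\, g(v_1,\cdot))(v_2)$.

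Now apply hypercontractivity of $\rho_2$ fibrewise in $v_1$ to get $\|(T_{\rho_1\otimes\rho_2}f)(v_1,\cdot)\|_{L_q(\rho_{2,V_2})} \le \|g(v_1,\cdot)\|_{L_p(\rho_{2,U_2})}$, and then raise to the $q$-th power and take expectation over $v_1\sim\rho_{1,V_1}$. The critical step is that we now have a mixed norm of the form $\bigl(\E_{v_1}(\E_{u_2}|g(v_1,u_2)|^p)^{q/p}\bigr)^{1/q}$, and since $q\ge p$, Minkowski's integral inequality lets us swap these norms, yielding
\[
  \bigl(\E_{v_1}\|g(v_1,\cdot)\|_{L_p(\rho_{2,U_2})}^q\bigr)^{1/q}
  \le
  \bigl(\E_{u_2}\|g(\cdot,u_2)\|_{L_q(\rho_{1,V_1})}^p\bigr)^{1/p}.
\]
Finally, apply hypercontractivity of $\rho_1$ fibrewise in $u_2$: $\|g(\cdot,u_2)\|_{L_q(\rho_{1,V_1})} = \|T_{\rho_1} f(\cdot,u_2)\|_{L_q(\rho_{1,V_1})} \le \|f(\cdot,u_2)\|_{L_p(\rho_{1,U_1})}$. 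Plugging in and using Fubini's theorem gives $\bigl(\E_{u_2}\|f(\cdot,u_2)\|_{L_p(\rho_{1,U_1})}^p\bigr)^{1/p} = \|f\|_{L_p(\rho_{1,U_1}\otimes\rho_{2,U_2})}$, which completes the chain.

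The only nontrivial ingredient is Minkowski's integral inequality; this is precisely the place where the hypothesis $p\le q$ is used, and it is also the reason a simple ``apply one factor at a time'' argument does not work without the swap. The two hypercontractive applications themselves are straightforward, and the tensorisation for general $n$ is then immediate by induction. The boundary cases $p=\infty$ or $q=\infty$ can be handled by standard limiting arguments or by replacing the inequality with its $L_\infty$ version directly.
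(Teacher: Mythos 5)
Your proof is correct and is exactly the standard argument (factorise the tensorised operator, apply hypercontractivity on one factor, swap the mixed $L^q(L^p)$ norm to $L^p(L^q)$ via Minkowski's integral inequality using $p\le q$, then apply hypercontractivity on the other factor, and induct). The paper does not spell this out -- it merely cites the fact that Minkowski's inequality for the $(q/p)$-norm drives the tensorisation and points to Janson and Mossel--O'Donnell--Oleszkiewicz -- so your write-up is a faithful, more detailed version of the same approach.
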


A proof of Lemma~\ref{lemma:hypercontractive-tensor} is standard
and follows from the fact that~$(q/p)$-norm on vectors
satisfies the triangle inequality (Minkowski's inequality).
The proof is essentially the same
as that of Lemma~5.3 of~\cite{Janson97}
and of Proposition~3.11 of~\cite{MosODoOle10AnnMath}.

\subsection{Hypercontractivity implies high collision complexity}

The following lemma states
that hypercontractivity of~$\rho$
implies an asymptotic lower bound on the collision complexity.

\begin{lemma} \label{lemma:col-hyp}
  Let~$\rho$ be a probability distribution on~$U\times V$.
  Let~$1\le p\le q$,
  and let~$q'$ and~$c$ be~$1/q+1/q'=1$ and~$1/c=1/p+1/q'$.
  If~$\rho$ is~$q$-to-$p$-hypercontractive,
  then for any~$z\in[0,1]$, it holds that
  \[
    \agr_\rho(z)\ge\frac{(p^{1/p}q^{\prime 1/q'}z)^c}{c},
  \]
  and in particular,
  \[
    \col_\rho(n)=\Omega(n^{1-c}).
  \]
\end{lemma}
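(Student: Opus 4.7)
The plan is to combine the $q$-to-$p$ hypercontractivity of $\rho$ with the definition of the agreement complexity via an application of the generalised Hölder inequality. First I would invoke Lemma~\ref{lemma:hypercontractive-tensor} to upgrade the hypercontractivity of $\rho$ to $q$-to-$p$ hypercontractivity of $\rho^{\otimes \ell}$ for any $\ell$, and then use Lemma~\ref{lemma:hypercontractive-holder} to deduce that $\rho^{\otimes \ell}$ satisfies the generalised $(p, q')$-Hölder inequality. This means that for any agreement protocol $(\ell, f, g)$ for $\rho$ with $f\colon U^\ell \to [0,1]$ and $g\colon V^\ell \to [0,1]$,
\[
  \E_{(\vec u, \vec v) \sim \rho^{\otimes \ell}}[f(\vec u) g(\vec v)]
  \le \norm{f}_{L_p(\rho_U^{\otimes \ell})} \norm{g}_{L_{q'}(\rho_V^{\otimes \ell})}.
\]

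Next, I would exploit the crucial fact that $f, g$ take values in $[0,1]$. For $p \ge 1$ we have $|f|^p \le f$ pointwise, whence $\norm{f}_{L_p}^p \le \E[f]$, i.e.\ $\norm{f}_{L_p} \le (\E[f])^{1/p}$; similarly $\norm{g}_{L_{q'}} \le (\E[g])^{1/q'}$. Letting $a = \E[f]$, $b = \E[g]$, and using that the success probability is at least $z$, this yields $z \le a^{1/p} b^{1/q'}$, while the cost of the protocol is exactly $a + b$. The problem therefore reduces to the purely analytic question of minimising $a + b$ subject to $a^{1/p} b^{1/q'} \ge z$, $a, b \ge 0$.

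For this step I would apply the weighted AM--GM inequality with weights $c/p$ and $c/q'$ (these are positive and sum to $1$ by the defining relation $1/c = 1/p + 1/q'$). A short computation gives
\[
  a + b \ge \left(\frac{p}{c}\right)^{c/p}\!\left(\frac{q'}{c}\right)^{c/q'} a^{c/p} b^{c/q'}
  \ge \left(\frac{p}{c}\right)^{c/p}\!\left(\frac{q'}{c}\right)^{c/q'} z^c
  = \frac{(p^{1/p} q^{\prime 1/q'} z)^c}{c},
\]
where the last equality uses $c/p + c/q' = 1$ to simplify $c^{-c/p} \cdot c^{-c/q'} = 1/c$. Taking the infimum over all valid agreement protocols gives the claimed lower bound on $\agr_\rho(z)$.

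Finally, for the ``in particular'' assertion I would set $z = 1/n$ and invoke the equivalence $\col_\rho(n) = \Theta(\max\{1, n \agr_\rho(1/n)\})$ from Lemma~\ref{lemma:col-agr}, yielding $\col_\rho(n) = \Omega(n \cdot n^{-c}) = \Omega(n^{1 - c})$. The main obstacle, as far as I see one, is the bookkeeping around the exponents in the weighted AM--GM step and verifying that the constants match the stated form; all the genuine content comes from the Hölder inequality supplied by hypercontractivity, which is already packaged in Lemmas~\ref{lemma:hypercontractive-holder} and~\ref{lemma:hypercontractive-tensor}.
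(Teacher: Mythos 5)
Your proposal is correct and follows essentially the same route as the paper: hypercontractivity via tensoring and the generalised H\"older inequality gives $z \le a^{1/p} b^{1/q'}$, the bound $\norm{f}_{L_p}\le a^{1/p}$ is obtained by comparing against $L_1$ (you use $|f|^p \le f$ pointwise, the paper interpolates with $L_\infty$; these are the same estimate), and the final optimisation is handled by weighted AM--GM, which is precisely the concavity-of-logarithm step in the paper. The appeal to Lemma~\ref{lemma:col-agr} at the end also matches.
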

As a sanity check note that in the above since $p\geq 1$ and $q\leq \infty$ we have $\frac{1}{c}=\frac{1}{p}+\frac{1}{q'}\leq 2$. Hence, in the Lemma $c\geq \frac{1}{2}$ which is in agreement with what we expect from the first part of Fact \ref{fact:col}.

\begin{proof}
  Let~$(\ell,f,g)$ be an agreement protocol for~$\rho$
  with success probability~$z$,
  and let
  \begin{align*}
    a&=\E{f(u)}=\norm{f}_{L_1(\rho_U)},
    \\
    b&=\E{g(v)}=\norm{g}_{L_1(\rho_V)}.
  \end{align*}
  We will prove that
  \[
    a+b\ge\frac{1}{c}(p^{1/p}q^{\prime 1/q'}z)^c.
  \]

  Lemma~\ref{lemma:hypercontractive-tensor} implies
  that~$\rho^{\otimes \ell}$ is~$q$-to-$p$ hypercontractive,
  and therefore Lemma~\ref{lemma:hypercontractive-holder} implies
  that~$\rho^{\otimes \ell}$ satisfies the~$(p,q')$ generalised H\"{o}lder inequality:
  \[
    z=\E_{(u,v)\sim\rho^{\otimes \ell}}{f(u)g(v)}
    \le
    \norm{f}_{L_p(\rho_U^{\otimes \ell})}\norm{g}_{L_{q'}(\rho_V^{\otimes \ell})}.
  \]
  Because~$\norm{f}_{L_\infty(\rho_U)}\le1$,
  it holds that
  \[
    \norm{f}_{L_p(\rho_U^{\otimes \ell})}^p
    \le
    \norm{f}_{L_\infty(\rho_U^{\otimes \ell})}^{p-1}\norm{f}_{L_1(\rho_U^{\otimes \ell})}
    \le
    1\cdot a,
  \]
  and therefore~$a^{1/p}\ge\norm{f}_{L_p(\rho_U^{\otimes \ell})}$.
  Similarly, it holds that~$b^{1/q'}\ge\norm{g}_{L_{q'}(\rho_V^{\otimes \ell})}$,
  and therefore~$a^{1/p}b^{1/q'}\ge z$.

  By concavity of logarithm, it holds that
  \begin{align*}
    \log\bigl(c(a+b)\bigr)
    &=
    \log\frac{(1/p)\cdot pa+(1/q')\cdot q'b}{1/p+1/q'}
    \\
    &\ge
    \frac{(1/p)\log(pa)+(1/q')\log(q'b)}{1/p+1/q'}
    \\
    &=
    c\cdot\left((1/p)\log(pa)+(1/q')\log(q'b)\right)
  \end{align*}
  By taking the exponentials of both sides, we obtain that
  \[
    c(a+b)
    \ge
    (p^{1/p}q^{\prime 1/q'}a^{1/p}b^{1/q'})^c
    \ge
    (p^{1/p}q^{\prime 1/q'}z)^c.
  \]
  This means that the agreement complexity of~$\rho$
  satisfies that~$\agr_\rho(z)\ge(p^{1/p}q^{\prime 1/p'}z)^c/c$.
  By Lemma~\ref{lemma:col-agr},
  we have that~$\col_\rho(n)=\Omega(n\agr_\rho(1/n))=\Omega(n^{1-c})$.
\end{proof}

% Make comment regarding uselessness of hypercontractivity. Cite Beigi Anartham papers and so on. 

\subsection{Analysing the  collision complexity of $\rho_{disj}$}
\label{appendix:example}

In this section, we analyse the collision complexity of $\rho_{disj}$ and show that it is strictly in between the extremes of a constant and~$\sqrt{n}$. The lower bound gives us an opportunity to show the applicability of the techniques developed in the previous parts of the Appendix in this  simple setting.

\begin{proposition}
  \label{proposition:collision-example}
  Let~$\rho=\rho_{\it disj}$ be as in Example \ref{example:disj}. We have,
 \[
    \col_{\rho_{\it disj}}(n)= O(n^{\log_6 2})\cap\Omega(n^{1/4})
    \subseteq O(n^{0.387})\cap\Omega(n^{0.25}).
  \]
\end{proposition}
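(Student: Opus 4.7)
The plan is to prove the upper and lower bounds separately, routing both through the collision--agreement equivalence of Lemma~\ref{lemma:col-agr}.

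For the upper bound, I would construct an explicit agreement protocol using $k$ independent samples of $\rho_{\it disj}$. Fix a ``balanced template'' $(\vec a, \vec b) \in \{0,1\}^k \times \{0,1\}^k$ with $(a_i, b_i) \in \{(0,1), (1,0)\}$ for every $i$ (so the template lies in the support of $\rho_{\it disj}^{\otimes k}$) and with exactly $k/2$ coordinates where $(a_i,b_i) = (1,0)$. Alice outputs $1$ iff her sampled vector equals $\vec a$, and similarly Bob with $\vec b$. Then $\Pr[\vec u = \vec a,\ \vec v = \vec b] = (1/3)^k$ whereas $\Pr[\vec u = \vec a] = (1/3)^{k/2}(2/3)^{k/2} = (\sqrt 2/3)^k$. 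Setting $k = \log_3 n$ and invoking Lemma~\ref{lemma:col-agr} gives $\col_{\rho_{\it disj}}(n) = O(n\,(\sqrt 2/3)^k) = O(n^{(\log_3 2)/2}) = O(n^{\log_9 2})$, which is $o(n^{\log_6 2})$ and hence implies the claimed upper bound.

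For the lower bound, I would apply Lemma~\ref{lemma:col-hyp} with $p = 3/2$ and $q = 3$, for which $q' = 3/2$, $1/c = 1/p + 1/q' = 4/3$, $c = 3/4$, and $1 - c = 1/4$. The bound $\Omega(n^{1/4})$ then follows once $\rho_{\it disj}$ is shown to be $3$-to-$3/2$ hypercontractive. Since $|U| = |V| = 2$ with $(T_{\rho_{\it disj}} f)(0) = (f(0) + f(1))/2$ and $(T_{\rho_{\it disj}} f)(1) = f(0)$, the inequality $\|T_{\rho_{\it disj}} f\|_{L^3(\rho_V)} \le \|f\|_{L^{3/2}(\rho_U)}$ reduces (assuming $f \ge 0$ by the triangle inequality) to the scalar inequality
\[
  \frac{2}{3}\left(\frac{a + b}{2}\right)^{3} + \frac{a^3}{3}
  \;\le\;
  \left(\frac{2\,a^{3/2} + b^{3/2}}{3}\right)^{2}
  \qquad \text{for all } a,b \ge 0,
\]
with equality at $a = b$.

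The main technical obstacle will be verifying this scalar inequality. My plan is to normalize the right-hand side to $1$ and apply Lagrange multipliers; under the substitution $s = (a/b)^{1/2}$, the first-order condition reduces to
\[
  2s^5 - 5s^4 + 4s^3 - 2s^2 + 2s - 1 \;=\; (s-1)^3(2s^2 + s + 1) \;=\; 0.
\]
Because $2s^2 + s + 1$ has negative discriminant, the only positive root is $s = 1$ (i.e., $a = b$), at which LHS $=$ RHS. Boundary checks at $a = 0$ (LHS $= b^3/12 < b^3/9 =$ RHS) and $b = 0$ (LHS $= 5a^3/12 < 4a^3/9 =$ RHS) give strict inequality, so the interior critical point is the global maximum of LHS on the level set $\{(a,b) \ge 0 : \text{RHS} = 1\}$, forcing LHS $\le$ RHS everywhere. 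This establishes the hypercontractivity, completing the lower bound.
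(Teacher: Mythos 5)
Your argument is correct and follows the paper's overall strategy (route the upper bound through Lemma~\ref{lemma:col-agr} with an explicit agreement protocol, and the lower bound through Lemma~\ref{lemma:col-hyp} after establishing $3$-to-$3/2$ hypercontractivity), but you deviate in two places, both legitimate. First, your agreement protocol is different from the paper's and actually gives a \emph{strictly stronger} upper bound: the paper takes the unbalanced all-ones/all-zeros template and flattens Bob's side by having him accept only with probability $1/2^{\ell}$, yielding $\agr_\rho(p) = O(p^{\log_6 3})$ and hence $\col_{\rho_{\it disj}}(n) = O(n^{\log_6 2}) \approx O(n^{0.387})$; your balanced deterministic template gives $\agr_\rho(p) = O(p^{1 - (\log_3 2)/2})$ and hence $\col_{\rho_{\it disj}}(n) = O(n^{(\log_3 2)/2}) = O(n^{\log_9 2}) \approx O(n^{0.315})$, which is tighter and of course implies the stated bound. (A quick check of the key claims: $\Pr[\vec u=\vec a] = (1/3)^{k/2}(2/3)^{k/2} = (\sqrt 2/3)^k$ and $\Pr[\vec u=\vec a,\vec v=\vec b]=(1/3)^k$ since each coordinate of the template lies in the support and is hit with probability $1/3$.) Second, for the scalar inequality underlying the hypercontractivity claim the paper simply exhibits the difference as a manifestly nonnegative expression,
\[
  \norm{f}_{3/2}^3 - \norm{T_\rho f}_3^3 = \frac{(\sqrt\alpha - \sqrt\beta)^4(\alpha + 4\sqrt{\alpha\beta} + \beta)}{36} \ge 0,
\]
whereas you optimize $L$ over the compact level set $\{R=1\}$ via Lagrange multipliers, reduce the first-order condition to the quintic $2s^5 - 5s^4 + 4s^3 - 2s^2 + 2s - 1 = (s-1)^3(2s^2+s+1)$ with unique positive root $s=1$, check the two boundary rays, and extend to all of $\RR_{\ge 0}^2$ by degree-$3$ homogeneity; I verified your factorization and boundary values and they are correct. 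The paper's algebraic identity is more self-contained, while your route is perhaps more mechanical to discover; both are fully rigorous. The $c = 3/4$ bookkeeping in the application of Lemma~\ref{lemma:col-hyp} matches the paper.
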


\subsubsection{Upper bound}

Let~$0<p<1$, and consider the following agreement protocol~$(\ell,f,g)$.
Let~$\ell=\floor{\log_6(1/p)}$.
Define
\[
  f(x_1,\dots,x_\ell)=\begin{cases}
    1, & x_1=\dots=x_\ell=1, \\
    0, & \text{otherwise.}
  \end{cases},
  \qquad
  g(y_1,\dots,y_\ell)=\begin{cases}
    1/2^\ell, & y_1=\dots=y_\ell=0, \\
    0, & \text{otherwise.}
  \end{cases}
\]
We claim that this agreement protocol has success probability at least~$p$
and cost less than~$6p^{\log_6 3}$.

Note that by definition of distribution~$\rho$,
we have that
\[
  \Pr[x_1=\dots=x_\ell=1 \land y_1=\dots=y_\ell=0]=\frac{1}{3^\ell}.
\]
Therefore, it holds that the success probability of the protocol~$(\ell,f,g)$ is
\begin{align*}
  \E{f(x_1,\dots,x_\ell)g(y_1,\dots,y_\ell)}
  =
  \frac{1}{2^\ell}\Pr[x_1=\dots=x_\ell=1 \land y_1=\dots=y_\ell=0]
  =
  \frac{1}{2^\ell}\cdot\frac{1}{3^\ell}
  \ge
  p.
\end{align*}

The cost of this protocol is equal to
\[
  \left(\frac13\right)^\ell+\frac{1}{2^\ell}\cdot\left(\frac23\right)^\ell
  =
  \frac{2}{3^\ell}
  =
  \frac{6}{3^{\ell+1}}
  <
  \frac{6}{3^{\log_6(1/p)}}
  =
  6p^{\log_6 3}.
\]

The upper bound~$\col_{\rho_{\it disj}}(n)= O(n^{\log_6 2})$
follows from Lemma~\ref{lemma:col-agr}.

\subsubsection{Lower bound}

We claim that~$\rho=\rho_{\it disj}$ is~$3$-to-$3/2$ hypercontractive.
Let~$f\colon\{0,1\}\to\CC$,
and we will prove that~$\norm{T_\rho f}_3\le\norm{f}_{3/2}$.
Let~$\alpha=\abs{f(0)}$ and~$\beta=\abs{f(1)}$.
Because
\begin{align*}
  (T_\rho f)(0) &= \frac{f(0)+f(1)}{2}, \\
  (T_\rho f)(1) &= f(0),
\end{align*}
we have that
\begin{align*}
  \norm{T_\rho f}_3
  &=
  \left(\frac23\abs*{\frac{f(0)+f(1)}{2}}^3+\frac13\abs{f(0)}^3\right)^{1/3}
  \le
  \left(\frac23\left(\frac{\alpha+\beta}{2}\right)^3+\frac13\alpha^3\right)^{1/3},
  \\
  \norm{f}_{3/2}
  &=
  \left(\frac23\alpha^{3/2}+\frac13\beta^{3/2}\right)^{2/3}.
\end{align*}
Simple calculations show that
\[
  \norm{f}_{3/2}^3-\norm{T_\rho f}_3^3
  =
  \frac{(\sqrt{\alpha}-\sqrt{\beta})^4(\alpha+4\sqrt{\alpha\beta}+\beta)}{36}
  \ge0,
\]
establishing the claim that~$\rho=\rho_{\it disj}$ is~$3$-to-$3/2$ hypercontractive.
The lower bound~$\col_{\rho_{\it disj}}(n)=\Omega(n^{1/4})$
follows from Lemma~\ref{lemma:col-hyp}.

\end{document}